\definecolor{shadecolor}{rgb}{.9,.9,.95}
\definecolor{gr}{rgb}{0,.5,0}
\newcommand{\ca}{\mathcal A}
\newcommand{\cg}{\mathcal G}
\newcommand{\ch}{\mathcal H}
\newcommand{\cu}{\mathcal U}
\newcommand{\tr}{\mathrm{tr}}
\newcommand{\R}{\mathbb{R}}
\newcommand{\C}{\mathbb{C}}
\newcommand{\N}{\mathbb{N}}
\def\nn{\nonumber}
\def\q{{\quad}}
\def\be{\begin{equation}}
\def\ee{\end{equation}}
\def\ba{\begin{eqnarray}}
\def\ea{\end{eqnarray}}
\theoremstyle{plain}
\newtheorem{theorem}{Theorem}
\newtheorem{lemma}[theorem]{Lemma}
\newtheorem{example}[theorem]{Example}
\newtheorem{definition}[theorem]{Definition}
\theoremstyle{definition}
\begin{document}
\title{Quantum reference frame transformations as symmetries and the paradox of the third particle}
\author{Marius Krumm}
\affiliation{Institute for Quantum Optics and Quantum Information, Austrian Academy of Sciences, Boltzmanngasse 3, A-1090 Vienna, Austria}
\affiliation{Vienna Center for Quantum Science and Technology (VCQ), Faculty of Physics, University of Vienna, Vienna, Austria}
\email{marius.krumm@univie.ac.at}
\author{Philipp A.\ H\"ohn}
\affiliation{Okinawa Institute of Science and Technology Graduate University, Onna, Okinawa 904 0495, Japan}
\affiliation{Department of Physics and Astronomy, University College London, London, United Kingdom}
\email{philipp.hoehn@oist.jp}
\author{Markus P.\ M\"uller}
\affiliation{Institute for Quantum Optics and Quantum Information, Austrian Academy of Sciences, Boltzmanngasse 3, A-1090 Vienna, Austria}
\affiliation{Vienna Center for Quantum Science and Technology (VCQ), Faculty of Physics, University of Vienna, Vienna, Austria}
\affiliation{Perimeter Institute for Theoretical Physics, 31 Caroline Street North, Waterloo ON N2L 2Y5, Canada}
\email{markus.mueller@oeaw.ac.at}

\begin{abstract}
In a quantum world, reference frames are ultimately quantum systems too --- but what does it mean to ``jump into the perspective of a quantum particle''? In this work, we show that quantum reference frame (QRF) transformations appear naturally as symmetries of simple physical systems. This allows us to rederive and generalize known QRF transformations within an alternative, operationally transparent framework, and to shed new light on their structure and interpretation. We give an explicit description of the observables that are measurable by agents constrained by such quantum symmetries, and apply our results to a puzzle known as the `paradox of the third particle'. We argue that it can be reduced to the question of how to relationally embed fewer into more particles, and give a thorough physical and algebraic analysis of this question. This leads us to a generalization of the partial trace (`relational trace') which arguably resolves the paradox, and it uncovers important structures of constraint quantization within a simple quantum information setting, such as relational observables which are key in this resolution. While we restrict our attention to finite Abelian groups for transparency and mathematical rigor, the intuitive physical appeal of our results makes us expect that they remain valid in more general situations.
\end{abstract}

\maketitle

\onecolumngrid

\vskip 1cm
\begin{figure}[hbt]
\begin{center}
\includegraphics[width=.35\textwidth]{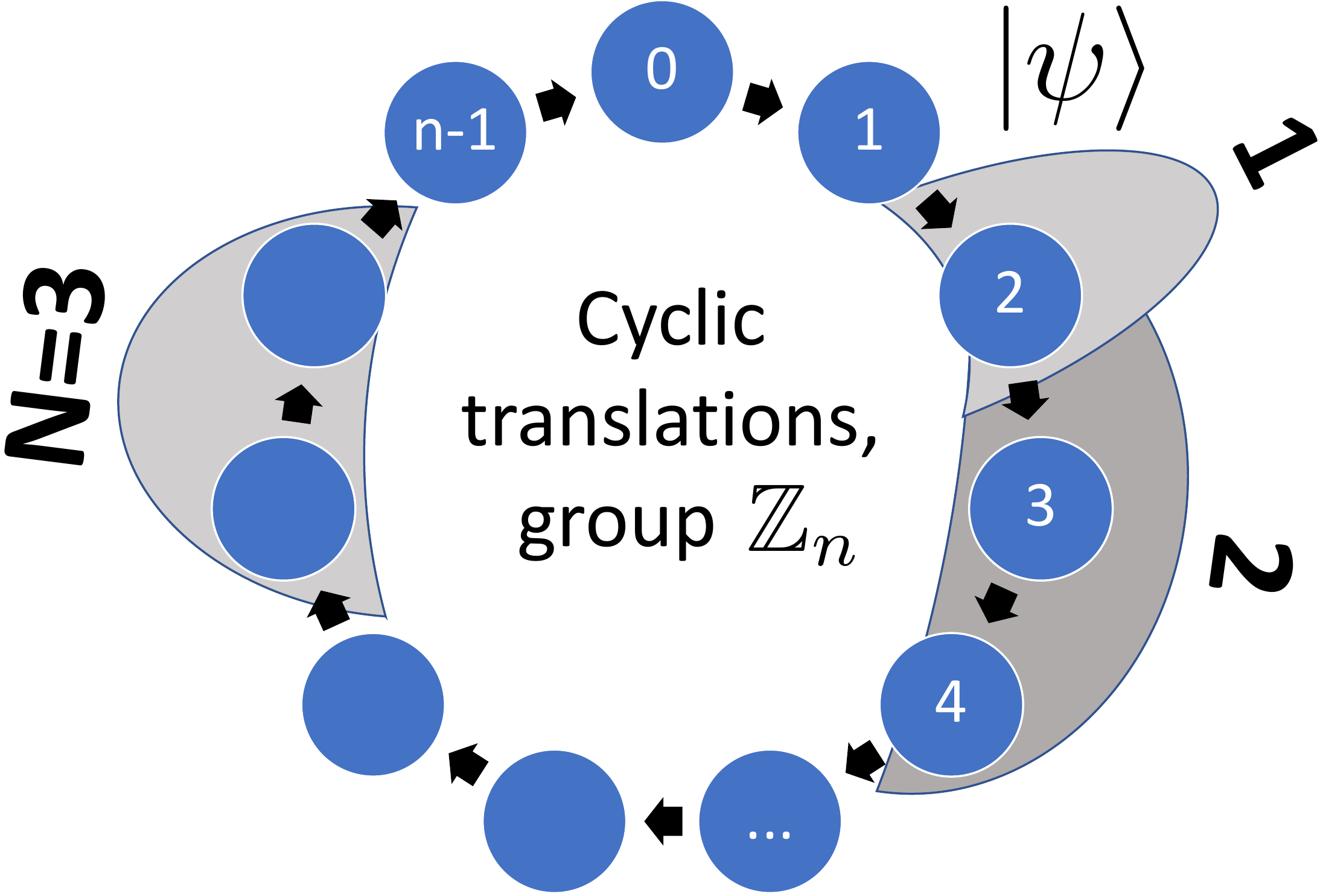}
\caption{The simplest example of this article's setup: a discretization of wave functions in one spatial dimension under translation symmetry. The configuration space is the cyclic group $\mathbb{Z}_n$, and the one-particle Hilbert space is $\mathcal{H}=\ell^2(\mathbb{Z}_n)\simeq\C^n$. We have $N$ distinguishable particles in a joint quantum state $|\psi\rangle\in\mathcal{H}^{\otimes N}$, and we study QRF transformations that switch between the ``perspectives of the particles''.}
\label{fig_cyclic}
\end{center}
\end{figure}

\newgeometry{margin=0.55in, voffset=0.2in, bottom=1in}
\fancyheadoffset{0pt}
\fancyfootoffset{0pt}

\hypersetup{linkcolor=black}
\setcounter{tocdepth}{2}
\begin{shaded}
\tableofcontents
\end{shaded}

\bigskip

\twocolumngrid

\section{Introduction}
All physical quantities are described relative to some frame of reference. But since all physical systems are fundamentally quantum, reference frames must ultimately be quantum systems, too. This simple insight is of fundamental importance in a variety of physical fields, including the foundations of quantum physics~\cite{Aharonov1,Aharonov2,Aharonov3,Wigner,Araki,Yanase,Loveridge2017,Loveridge2018,Miyadera,Loveridge2020,HoehnMueller}, quantum information theory~\cite{Bartlett,Marvian,Frameness,Modes,Smith2019,ResourceTheoryQRF,Palmer,Smith2016}, quantum thermodynamics~\cite{LJR,LKJR,Aberg,LostaglioMueller,MarvianSpekkens,Erker,Cwiklinski,Woods1,Woods2}, and quantum gravity~\cite{Rovellibook,Rovelli1,Rovelli2,Rovelli3,Dittrich1,Dittrich2,Chataignier,Thiemann,Tambornino}.

Recently, there has been a wave of interest in a specific approach to quantum reference frames (QRFs) that we can broadly classify as \emph{structural} in nature, including e.g.\ Refs.~\cite{Giacomini,Vanrietvelde,Hamette,Vanrietvelde2,Hoehn:2018aqt,Hoehn:2018whn,Hoehn:2019owq,Hoehn:2020epv,Castro,Giacomini-spin1,Giacomini-spin2,Yang}. This approach extends the usual concept of reference frames by associating them with quantum systems, and by describing the physical situation of interest from the ``internal perspective'' of that quantum system. For example, if an interferometer has a particle travelling in a superposition of paths, how ``does the particle see the interferometer''~\cite{Angelo}?

A central topic in this approach is the QRF dependence of observable properties like superposition, entanglement~\cite{Giacomini,Vanrietvelde,Hamette}, classicality \cite{Vanrietvelde,Darwinism1,Darwinism2}, or of quantum resources~\cite{Savi}. The corresponding QRF transformations admit an unambiguous definition of spin in relativistic settings by transforming to a particle's rest frame~\cite{Giacomini-spin1,Giacomini-spin2}, they describe the comparison of  quantum clock readings \cite{Hoehn:2018aqt,Hoehn:2020epv}, and they yield an alternative approach to indefinite causal structure~\cite{Castro,Guerin}. Among other conceived applications, they are furthermore conjectured to play a crucial rule in the implementation of a ``quantum equivalence principle''~\cite{Hardy} as well as in spacetime singularity resolution~\cite{Gielen:2020abd} and the description of early universe power spectra \cite{Giesel1,Giesel2} in quantum gravity and cosmology.

Despite the broad appeal, several fundamental and conceptual questions remain open. For example, how should we make concrete sense of the idea of ``jumping into the reference frame of a particle''? How are QRF transformations different from any other unitary change of basis in Hilbert space? What kind of physical symmetry claim is associated with the intuition that QRF changes ``leave the physics invariant''? Furthermore, there are reported difficulties to extend basic quantum information concepts into this context. For example, Ref.~\cite{Angelo} describes a `paradox of the third particle', an apparent inconsistency arising from determining reduced quantum states in different QRFs.

In this article, we shed considerable light on all of these questions. We introduce a class of physical systems subject to simple principles, and derive the QRF transformations as the physical symmetries of these systems. On the one hand, this gives us a transparent operational framework for QRFs that makes sense of the `jumping' metaphor. On the other hand, it allows us to identify QRF transformations as elements of a natural symmetry group, and to describe the structure of the observables that are invariant under such transformations. This algebraic structure turns out to be key to elucidate the paradox of the third particle, which we do by introducing a relational notion of the partial trace.

To keep the mathematical structures as transparent and accessible as possible, we restrict our attention in this article to finite Abelian groups. But this already includes interesting physical settings like the discretization of translation-invariant quantum particles on the real line (see Figure~\ref{fig_cyclic}), admitting the formulation of intriguing thought experiments. Within this familiar quantum information regime of finite-dimensional Hilbert spaces, we uncover a variety of structures that not only shed light on the questions raised above, but that also reflect important aspects of constraint quantization~\cite{Dirac,HenneauxTeitelboim}, which for example underlies canonical approaches to quantum gravity and cosmology. This includes the notions of a ``physical Hilbert space'' encoding the relational states of the theory~\cite{Giulini:1998kf,Marolf:2000iq,Thiemann}, of relational and Dirac observables \cite{Rovellibook,Thiemann,Tambornino,Rovelli1,Rovelli2,Rovelli3,Dittrich1,Dittrich2,Chataignier,Hoehn:2018aqt,Hoehn:2018whn,Hoehn:2019owq,Hoehn:2020epv}, and a simple demonstration of how constraints can in general arise from symmetries. In particular, these notions will assume key roles in our proposed resolution of the paradox of the third particle.\\

\textbf{Overview and summary of results.} Our article is organized as follows. In Section~\ref{SecOperationalStructural}, we begin with a thorough operational comparison of this structural approach to QRFs with the more common quantum information approach. This sets the stage by embedding the notion of QRF transformations into a broader conceptual framework.

\begin{figure}[hbt]
\begin{center}
\includegraphics[width=.4\textwidth]{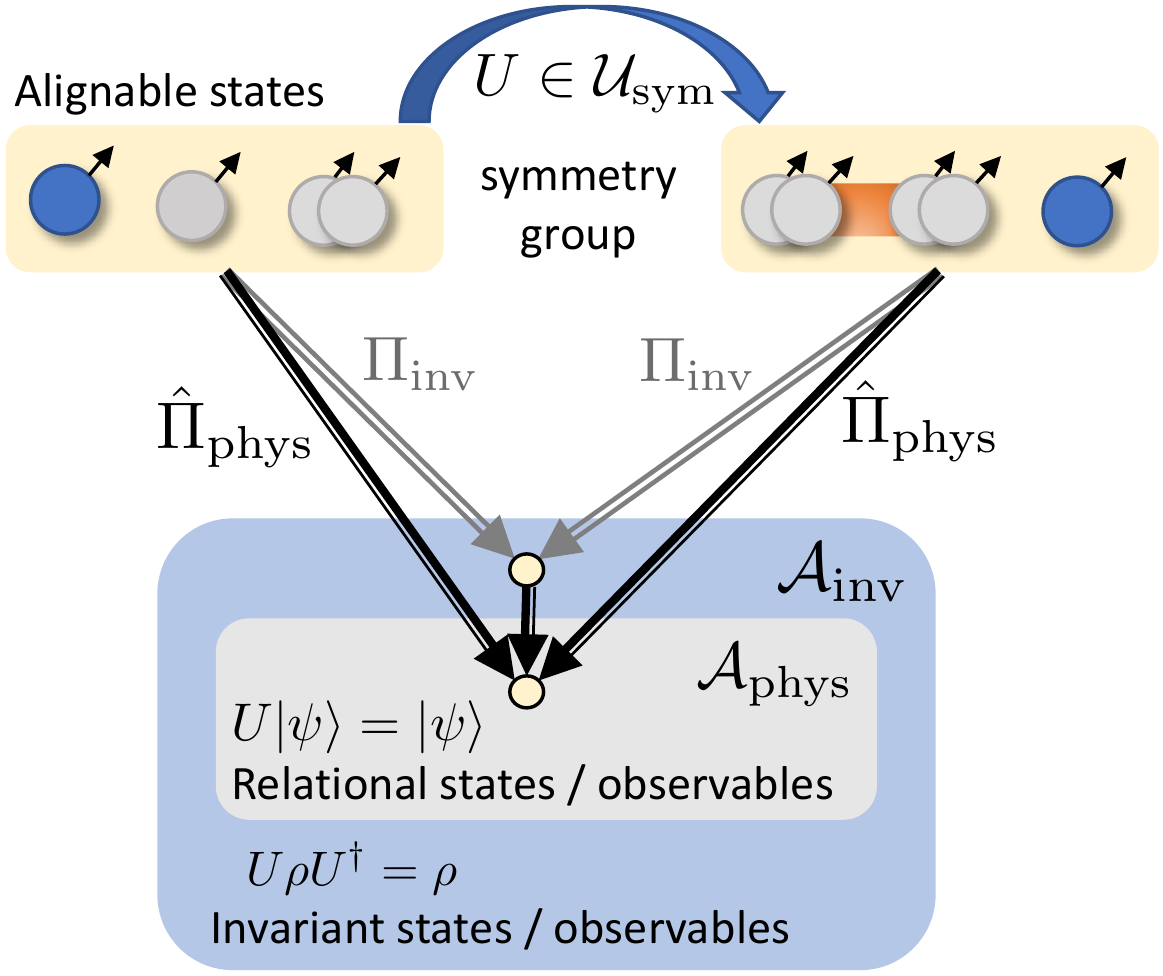}
\caption{Some of the structures we uncover in Section~\ref{SecTechnical}. We axiomatically derive and analyze the quantum symmetry group $\mathcal{U}_{\rm sym}$, and characterize a class of ``alignable states'' that can be transformed into a form that is ``relative to one of the particles''. As described in~Refs.~\cite{Giacomini,Vanrietvelde,Hamette}, ``jumping from the first to the third particle'', for example (sketched on top), can transform separable into entangled states, owing to the fact that, as we will show, $\mathcal{U}_{\rm sym}$ is larger than the classical group of translations. We identify two subalgebras of operators that are invariant under all quantum symmetries, $\mathcal{A}_{\rm phys}\subset\mathcal{A}_{\rm inv}$, and corresponding projections that extract the ``invariant part'' of a state.}
\label{fig_overview}
\end{center}
\end{figure}

In Section~\ref{SecTechnical}, we specialize to a concrete class of physical systems (``$\mathcal{G}$-systems'') which hold a finite Abelian group as their classical configuration space. We prove the existence and elucidate the group structure of QRF transformations for such systems, and introduce a notion of ``alignable states'' which are those that can be described ``relative to one of the particles''. We determine the invariant observables measurable by observers constrained by such symmetries. As sketched in Figure~\ref{fig_overview}, we find that there are two important, but distinct notions of invariant observables, depending on whether symmetry transformations may induce superselection sector dependent phases or not. While the role of invariant observables in the structural approach has been stressed before \cite{Vanrietvelde,Vanrietvelde2,Hoehn:2018aqt,Hoehn:2018whn,Hoehn:2019owq,Hoehn:2020epv}, attention was thus far restricted to their description on the space of invariant pure states (``physical Hilbert space''). Furthermore, we uncover important aspects of constraint quantization, and obtain representation-theoretic notions of physical concepts like the ``total momentum'' and its role as a constraint.

In Section~\ref{Section:Paradox}, we apply our insights to the paradox of the third particle. We argue that the problem reduces to the physical question of when two groups of particles hold ``the same relation'' to each other within two distinct configurations, such that the corresponding branches should interfere (see Figure~\ref{fig_superposition} on page~\pageref{fig_superposition}). Mathematically, this corresponds to the question of how to embed the algebra of invariant $N$-particle observables into that of $N+M$ particles. We show that no unique answer to this question exists for the full set of invariant observables in $\mathcal{A}_{\rm inv}$: the answer always depends on the physical choice of how to determine the particle group interrelations operationally.

However, we show that a unique and natural embedding \emph{does} exist for the subset of relational observables in $\mathcal{A}_{\rm phys}$. The trick is to use a \emph{coherent superposition} of all operationally conceivable particle group relations, and it turns out this construction preserves the algebraic structure of the $N$-particle observables. We use this to define a relational notion of the partial trace which arguably resolves the paradox, and we compare this resolution to the one proposed by Angelo et al.~\cite{Angelo} before concluding in Section~\ref{SecConclusions}.

\section{ Quantum information vs.\ structural approach to reference frames}
\label{SecOperationalStructural}
Let us begin with the main element that both the quantum information as well as the structural approach to QRFs have arguably in common: a physical system with a symmetry such that all observable quantities are invariant, or even fully relational. This is also the starting point of Refs.~\cite{Loveridge2017,Loveridge2018,Miyadera,Loveridge2020,Smith2016,Smith2019}.

\subsection{Describing physics with or without external relatum}

Consider a physical system $S$ of interest. We assume that there is a set $\mathcal{S}$ of \emph{states} in which the system $S$ can be prepared. Furthermore, there is a group of \emph{symmetry transformations} $\mathcal{G}_{\rm sym}$ that acts on $\mathcal{S}$. Specifying $\mathcal{S}$ and $\mathcal{G}_{\rm sym}$ amounts to making a specific physical claim:
\begin{shaded}
\textbf{Assumption 1.} If the system $S$ is considered \emph{in isolation}, then it is impossible  to distinguish (even probabilistically) whether it has been prepared in some state $\omega$ or in another state $G(\omega)$. This is true for all states $\omega\in\mathcal{S}$ and all symmetry transformations $G\in\mathcal{G}_{\rm sym}$.
\end{shaded}
'In isolation' here means that any other physical structure to which $S$ could be related is disregarded, either because it does not exist in the first place, one does not have access to it, or it is deliberately ignored. This setting is schematically depicted in Figure~\ref{fig_symmetry}. Examples include:
\begin{itemize}
	\item[(i)] Minkowski spacetime of special relativity, with $\mathcal{S}$ the set of all possible states of matter (say, of classical point particles), and the Poincar\'e group $\mathcal{G}_{\rm sym}$ as the group of symmetry transformations.
	\item[(ii)] Electromagnetism in some bounded region of spacetime. This is a gauge theory with $\mathcal{G}_{\rm sym}$ the group of local ${\rm U}(1)$-transformations as its symmetry group.
	\item[(iii)] A spin  in quantum mechanics with Hilbert space $\mathcal{H}$ and projective representation $g\mapsto U_g$ of the rotation group $\mathcal{G}=\rm{SO}(3)$. Here, $\mathcal{S}$ is the set of density matrices $\rho$, and $\mathcal{G}_{\rm sym}$ consists of all maps of the form $\rho\mapsto U_g\rho U_g^\dagger$.
\end{itemize}
These three examples illustrate an important subtlety: to claim that $\rho$ and $G\rho$ are physically indistinguishable, one needs to speak about $\rho$ and $G\rho$ as different objects in the first place. In other words, one has to \emph{somehow} define $\rho$ and $G\rho$ as distinct states. But in order to do so, one would need something \emph{external} to the system $S$ to refer to.

\begin{figure}[hbt]
\begin{center}
\includegraphics[width=.3\textwidth]{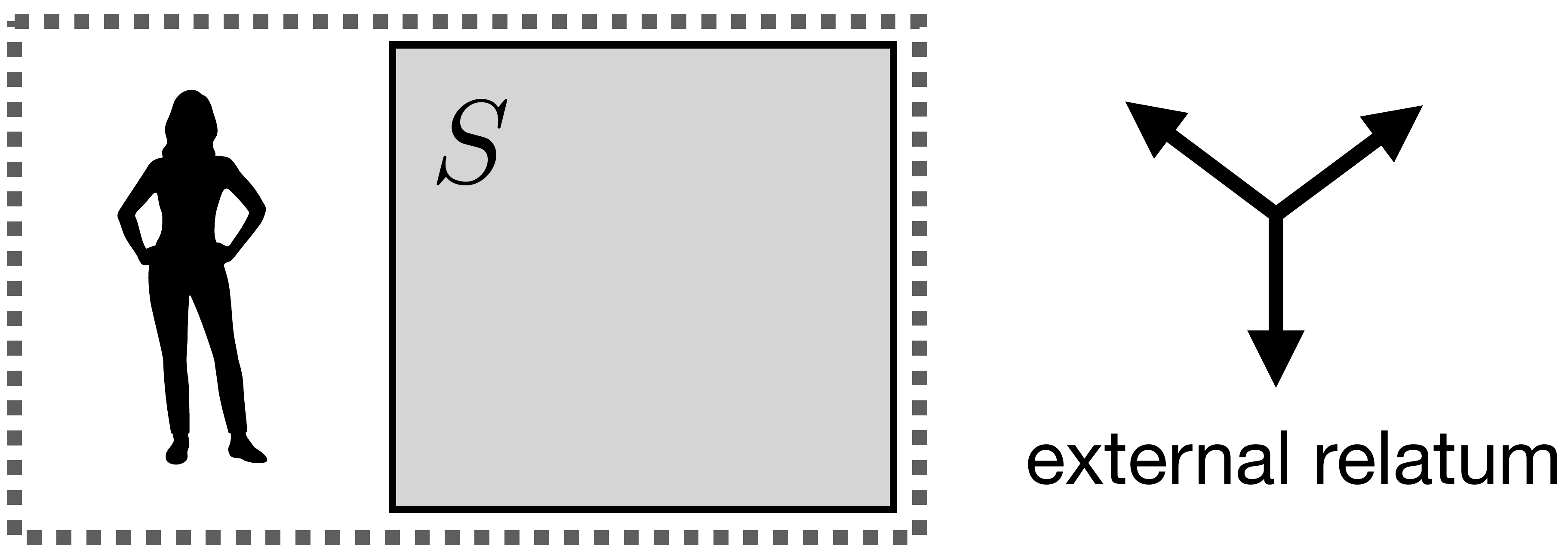}
\caption{What both approaches have in common: a system $S$ with a symmetry group $\mathcal{G}_{\rm sym}$ acting on its states $\rho\in\mathcal{S}$. States are implicitly defined via some (physical or fictional) external relatum, but \emph{internally} (that is, for observers without access to the relatum) $\rho$ and $G\rho$ are indistinguishable, for all $G\in\mathcal{G}_{\rm sym}$. }
\label{fig_symmetry}
\end{center}
\end{figure}

In example (i), there simply is no material external relatum, while in example (ii), it is given by electromagnetism \emph{outside} of the bounded region. As emphasized in Ref.~\cite{Rovelli}, while gauge symmetries do not change the physics of a given system, they alter the way that the system interacts with other systems. This observation is at the heart of the recent pivot to edge modes in gauge theory and gravity \cite{Donnelly:2014fua,Donnelly:2016auv,Geiller:2019bti,Freidel:2020xyx,Gomes:2018shn,Riello:2020zbk,Wieland:2017zkf,Wieland:2017cmf} and our resolution of the paradox of the third particle in Section~\ref{Section:Paradox} can also be viewed in this light. In case (iii), the external relatum would be best described as an external classical reference frame, for example the laboratory of an agent experimenting with $S$. This illustrates that to consider a system ``in isolation'' in the sense of Assumption 1 does \emph{not} imply that the system $S$ is literally a physically isolated system. It simply means that we have chosen to describe the system without the external relatum relative to which the action of the symmetry group is defined. Moreover, the setting does \emph{not} imply that the agent who treats $\rho$ and $G\rho$ as indistinguishable is herself part of the system $S$, but only that the agent considers $S$ without the external relatum.

Here we argue that the essential difference between the two approaches to quantum references frames can succinctly be stated as follows:
\begin{shaded}
The \textbf{quantum information (QI) approach} as in e.g.\ Refs.~\cite{Bartlett,Marvian,Frameness,Modes,ResourceTheoryQRF,Smith2019} emphasizes the fact that quantum states are often only defined relative to an external relatum (as in Figure~\ref{fig_symmetry}), and that this relatum may ultimately be a quantum system, too. This leads to questions like: how can quantum information-theoretic protocols be performed in the absence of a shared reference frame~\cite{Bartlett}? How well can quantum states stand in as resources of asymmetry if there is no shared frame~\cite{Marvian,ResourceTheoryQRF}? What are fundamental quantum limits for communicating or aligning reference frames~\cite{Bartlett}? Addressing questions as these often involves encoding information in quantum states in an external relatum independent manner and, as such, requires external relatum independent descriptions of states.

The \textbf{structural approach} as in e.g.\ Refs.~\cite{Giacomini,Vanrietvelde,Hamette} is not primarily concerned with operational protocols. While it shares the aim of external relatum independent descriptions of states with the QI approach, it goes further: it disregards the external relatum altogether, and instead asks  whether and how \emph{physical subsystems} of $S$ can be promoted to an \emph{internal} reference frame. This emphasizes the fact that the distinction between quantum systems and their reference frames is not fundamental, but merely conventional. It leads to questions like: what is the description of the quantum state relative to one of its particles? Can we find a Hilbert space basis in which the description of the physics is simplified, e.g., in which superpositions of subsystems of interest may be removed? More generally, what are the ``QRF transformations'' that relate the descriptions relative to different choices of internal reference frame?
\end{shaded}
In the QI approach, it is usually not necessary to take the extra step to internal frame choices and to ask how a system is described relative to one of its subsystems, as we will explain shortly. It suffices to focus on invariant properties of $S$ which have a meaning relative to an arbitrary choice of external frame in order to successfully carry out communication tasks in the absence of a shared frame. It is also worth emphasizing that there does not exist a sharp distinction between the two approaches in the body of literature on QRFs. Since the structural approach shares external relatum independent state descriptions with the QI approach, there exist ``hybrid'' works which arguably incorporate elements from both. For example, Refs.\ \cite{Loveridge2017,Loveridge2018,Miyadera,Loveridge2020,Smith2016,Smith2019,Palmer,Angelo} use standard quantum information techniques to define external relatum independent states, but also use the latter to explore to some degree the question of how a quantum state is described relative to a subsystem. However, these works do not study the relations between the different such descriptions and thus, in particular, do not study the QRF transformations.

The structural approach to QRFs is sometimes illustrated in ways that seem at first sight to be in conflict to the characterization above. For example, Figure 1 in Ref.~\cite{Giacomini} suggests to think of QRFs as physically attached to an observer and its laboratory (defined by its own quantum state), similarly as reference frames in Special Relativity are often thought of as being attached to an observer (defined by its state of motion). QRF transformations would then relate the descriptions of ``quantum'' observers who are relative to each other in superposition in a Wigner's-friend-type fashion.

However, we will show below that the structural framework of QRFs can be derived and analyzed exactly under an alternative and simpler interpretation. As we will elaborate and generalize below, choosing a QRF amounts to \emph{aligning one's description of the physics with respect to some choice of internal quantum subsystem}, such as the position of one of the particles. Two different observers can choose two different subsystems (say, particles) that are relative to each other in superposition, even if the observers themselves are fully classical. Their descriptions will then be related by QRF transformations. The observer who assigns the quantum state may thus retain the status of a \emph{classical} entity external to the quantum system (at least in laboratory situations), as illustrated in Figure~\ref{fig_symmetry}. While more conservative, this new interpretation is operationally more immediate, and it is sufficient to reconstruct and extend the full machinery of QRF transformations, as we will see.

The characterization above is also in line with another version of the structural approach: the so-called perspective-neutral approach~\cite{Vanrietvelde,Vanrietvelde2,Hoehn:2018aqt,Hoehn:2018whn,Hoehn:2019owq,Hoehn:2020epv} which, motivated by quantum gravity, is formulated in the language of constrained Hamiltonian systems ~\cite{Dirac,HenneauxTeitelboim}. The starting point of this approach is a deep physical and operational motivation: \emph{take the idea seriously that there are no reference frames, such as rods or clocks, that are external to the universe}. To implement this idea, one starts with a ``kinematical Hilbert space'' that defines all the involved quantum degrees of freedom and some gauge symmetry, but is interpreted as purely auxiliary. The absence of external references is then implemented by restricting to the gauge-invariant subset of states where the description becomes purely relational.

The actual mathematical machinery applied in this approach still fits the description above: the kinematical Hilbert space can be viewed as being described relative to a \emph{fictional} external relatum. The insight that there is nothing external to the universe motivates to ask --- purely formally at first --- whether some of the \emph{internal} degrees of freedom of the theory can be promoted to a frame of reference, such as a rod or clock. One may finally ask whether observers who are part of the theory may in fact have good operational access to that chosen frame of reference, but this is an \emph{additional} (though important) question that we here regard as secondary.

\subsection{Communication scenarios illustrating the two approaches}

Before we turn to the structural approach in detail, and relate the verbal description above to the mathematical formalization, let us elaborate on the distinction by means of two communication scenarios. To do so, let us informally introduce some piece of notation that we will later on define more formally. If $\rho\in\mathcal{S}$ is some state, denote by $[\rho]$ the set of all states that are symmetrically equivalent to $\rho$, i.e.\ $[\rho]:=\{G\rho\,\,|\,\, G \in \mathcal{G}_{\rm sym}\}$. The $[\rho]$ can be viewed as equivalence classes of states, or as orbits of the symmetry group.

Adapting the quantum information terminology from Ref.~\cite{Bartlett}, we refer to physical properties of $S$ that only depend on the equivalence class $[\rho]$ as \emph{speakable information}. Being invariant under the action of $\mathcal{G}_{\rm sym}$ and thus not requiring an external relatum in order to be defined, two agents can agree on the description of these properties by classical communication even in the absence of a shared frame. By contrast, we refer to physical properties of $S$ that depend on the concrete representative $\rho$ from an equivalence class $[\rho]$ of states as \emph{unspeakable information}. These properties thus require the external relatum to be meaningful and cannot be communicated purely classically between two agents who do not share a frame.

\subsubsection{The quantum information approach: communicating quantum systems}\label{sssec_qiapp}

Consider the scenario in
Figure~\ref{fig_qit}. Alice holds a quantum system $S$ that she has
prepared in some state $\rho\in\mathcal{S}(\mathcal{H})$, and
$\mathcal{S}(\mathcal{H})$ denotes the density matrices on the
corresponding Hilbert space $\mathcal{H}$. We assume that there is a
(for now, for simplicity) compact group $\mathcal{G}$ of symmetries and
a projective representation $\mathcal{G}\ni g\mapsto U_g$ such that
$\mathcal{G}$ acts on $\mathcal{S}$ via $\mathcal{U}_g(\rho)=U_g \rho
U_g^\dagger$. In this case, the symmetry group is $\mathcal{G}_{\rm
sym}=\{\mathcal{U}_g\,\,|\,\,g\in\mathcal{G}\}$. If we assume that
Alice's quantum system $S$ has the properties of Assumption 1, then the
very definition of $\rho$ is relative to her local frame of reference.

\begin{figure}[hbt]
\begin{center}
\includegraphics[width=.47\textwidth]{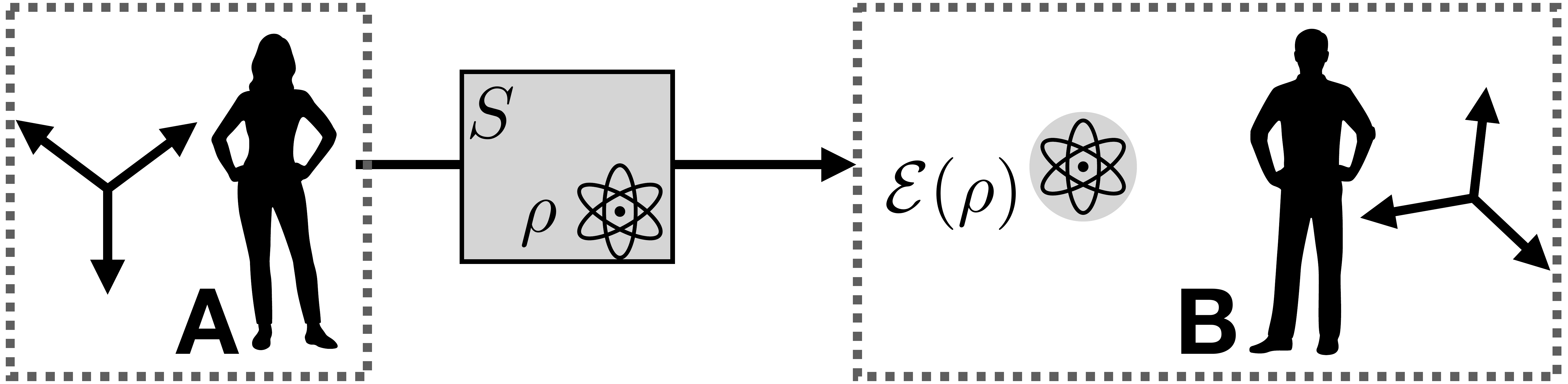}
\caption{A communication scenario within the \textbf{quantum information approach} as in Ref.~\cite{Bartlett}. The focus is on sending and recovering actual physical (quantum) states that are defined (as in Assumption 1) with respect to some external relatum, i.e.\ that may contain unspeakable information. This task becomes interesting if Alice's and Bob's reference frames are initially unaligned.}
\label{fig_qit}
\end{center}
\end{figure}

Suppose that Alice sends the quantum system physically to Bob. Since Bob's reference frame is not aligned with Alice's, he will describe the situation as receiving a randomly sampled representative of the equivalence class $[\rho]$. Thus, he will assign the state $\mathcal{E}(\rho):=\int_{\mathcal{G}} U_g \rho U_g^\dagger\, dg$ to the incoming quantum system.

The QI approach is concerned with the possibility to devise protocols that can be performed even in the absence of a shared reference frame. For example, the task to send quantum information from Alice to Bob can be accomplished by encoding it into a \emph{decoherence-free subspace}, i.e.\ a subsystem within the set of $\rho\in\mathcal{S}(\mathcal{H})$ for which $\mathcal{E}(\rho)=\rho$ (see e.g. Ref.~\cite[Sec.\ A.2]{Bartlett} for a concrete example). Another possibility to do so is by sending several quantum systems (e.g.\ spin-coherent states) that break the symmetry, and that allow Bob to partially correlate his reference frame with Alice's via suitable measurements on those states. The key to carrying out communication protocols without a shared frame is thus to focus on invariant physical properties that are meaningful in any external laboratory frame. This does not require describing $S$ relative to one of its subsystems.

Nevertheless, in the QI approach, the quantum nature of reference frames is sometimes taken into account, for example, by ``quantizing'' them to overcome superselection rules that arise in the absence of a shared classical frame~\cite{Bartlett}. This ``quantization'' of a frame means \emph{adding} a reference quantum system $R$ to the system of interest $S$ in order to define relative quantities between $R$ and $S$, such as relative phases~\cite{Bartlett} or relative distances~\cite{Smith2016,Smith2019}, that are invariant under $\cg_{\rm sym}$ and thereby meaningful relative to \emph{any} external laboratory frame. In a communication scenario between two parties Alice and Bob who do not share a classical frame, the reference system $R$ will typically be communicated together with $S$. While this also constitutes an internalization of a frame in the sense thats the reference system $R$ is now a quantum system too, it is still external to $S$. Furthermore, since the relative quantities between $R$ and $S$ are meaningful relative to any external laboratory frame with respect to which a measurement will be carried out, it is not necessary to take an extra step and ask how $S$ is described ``from the perspective'' of $R$ in order for Alice and Bob to succeed in their communication task.

In summary, in the QI approach, the quantum system $S$ of interest (say, a set of spins) is treated as a distinct entity from the reference frame (say, a gyroscope). Thus, ``QRF transformations'' relating descriptions relative to different subsystems (which may be in relative superposition) are typically not studied in this approach.\footnote{This includes Ref.~\cite{Palmer}, where transformations between different ``quantized'' reference systems $R_1$ and $R_2$ are studied. However, in the spirit of the QI approach, the derived transformations proceed between different invariant states (i.e.\ essentially $\cg$-twirls of $\rho_S\otimes\rho_{R_i}$, $i=1,2$) and are thus \emph{not} transformations between descriptions of the quantum state of $S$ relative to different choices of subsystem, as we will see them later. In particular, the descriptions of the quantum state of $S$ relative to different subsystems will be different descriptions of one and the same invariant state.}
The focus is on \emph{correlating} (aligning) Alice's and Bob's frames, and it is the absence of alignment that is modelled by the $\cg$-twirl, $\rho\mapsto\mathcal{E}(\rho)$. The external relatum independent (or relational) state descriptions of the QI approach are thus the \emph{incoherently} group-averaged states.

\subsubsection{The structural approach: agreeing on a redundancy-free internal description of quantum states}
\label{sssec_structural}

The structural approach does not stop at an external relatum independent state description. It also asks for a description of a quantum state relative to an internal frame that is part of the system of interest.

A transparent way to understand the structural approach operationally is as follows. Alice and Bob in their respective labs would like to agree on a concrete description of the quantum state of a system \emph{without} external relatum, i.e.\ in particular without shared reference frame. They have the option of describing $S$ in terms of the equivalence classes $[\rho]$ of quantum states. However, there is an evident \emph{redundancy} in the description of each equivalence class in terms of concrete quantum states: each member of the equivalence class is a legitimate (and non-unique) description of it. In order to break this redundancy and succeed in this task, they can take advantage of the fact that any equivalence class $[\rho]$ of states admits certain ``canonical choices'' for its description which are associated with different internal reference frame choices. The transformations relating these different canonical choices amount to ``QRF transformations'' and they will be elements of the symmetry group $\mathcal{G}_{\rm sym}$ defining the equivalence classes.

\begin{figure}[hbt]
\begin{center}
\includegraphics[width=.47\textwidth]{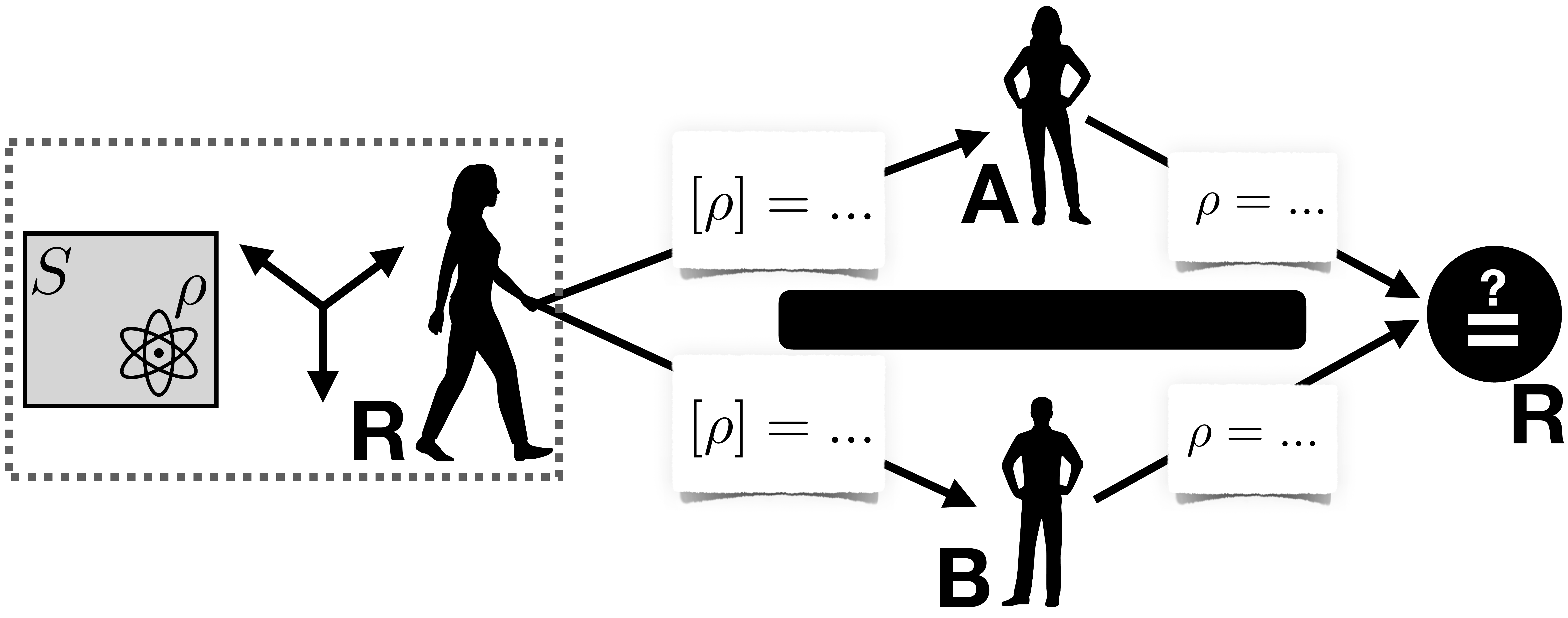}
\caption{A simple communication scenario which we choose for illustrating the operational essence of the \textbf{structural approach} as in Refs.~\cite{Giacomini,Vanrietvelde,Hamette}. The focus is on agents agreeing on a (redundancy-free) \emph{description} of quantum states in the absence of an external relatum.}
\label{fig_structural}
\end{center}
\end{figure}

For example, one could imagine the following communication scenario depicted in Figure~\ref{fig_structural} to illustrate the role of ``canonical choices'':
\begin{itemize}
\item Referee Refaella informs Alice and Bob in their separate laboratories that she will prepare quantum states of a particular system $S$ (subject to Assumption 1) relative to her (freely aligned) frame, but that she will only communicate the \emph{description} of the respective equivalence classes $[\rho]$ to Alice and Bob separately. 
\item Alice's and Bob's task is to separately return a concrete (redundancy-free) description of each quantum state to Refaella and they will win the game provided their descriptions always agree (either for all states of $S$, or for a particular class $\mathcal{C}$ of states).
\item Alice and Bob are only permitted to communicate prior to the beginning of the game to agree on a strategy.
\end{itemize}
Let us consider two examples for how this can be accomplished. These examples illustrate that there will generally exist multiple ``canonical choices'' for describing $[\rho]$ in terms of concrete quantum states, however, that Alice and Bob can  always agree in their communication beforehand which such choice to pick. This will also give a hint on the relation to quantum reference frames as described in Refs.~\cite{Giacomini,Vanrietvelde,Hamette}, and we will elaborate on this further in the following sections.
\begin{example}
\label{ExTrivial}
Consider a single quantum spin-$1/2$ particle, with state space $\mathcal{S}(\C^2)$.  Let us assume that the symmetry group is the full projective unitary group, i.e.\ $\mathcal{G}_{\rm sym}=\{\rho\mapsto U\rho U^\dagger\,\,|\,\, U^\dagger U=\mathbf{1}\}$, which is isomorphic to the rotation group ${\rm SO}(3)$.

Let $\rho$ be an arbitrary state that Refaella is for some reason interested in preparing. The equivalence class $[\rho]$ consists of all states with the same eigenvalues $\lambda_1,\lambda_2$ as $\rho$. Describing $[\rho]$ is equivalent to listing the eigenvalues $\lambda_1,\lambda_2$ and this is what Refaella may communicate to Alice and Bob. Clearly, there are many ways to represent this information in terms of a concrete quantum state $\rho$.

The strategy that Alice and Bob can agree on in order to win the game, but prior to it starting, is trivial: they can agree to always choose a basis (i.e.\ a specific reference frame alignment) such that $\rho$ described relative to it is a diagonal matrix. This leaves two ``canonical choices'' of representation: ordering the eigenvalues such that $\lambda_1\geq\lambda_2$ they could decide to always return either $\rho={\rm diag}(\lambda_1,\lambda_2)$ or $\rho={\rm diag}(\lambda_2,\lambda_1)$ to Refaella. The transformation relating the two descriptions is the unitary ``QRF transformation'' $U=\left(\begin{array}{cc} 0 & 1 \\ 1 & 0 \end{array}\right)$.
\end{example}
This trivial example relies on the simple fact that every quantum state has a \emph{canonical description}: the matrix representation in its own eigenbasis (up to a choice of order of eigenvalues). In some sense, every quantum state defines a finite set of natural representations of itself. It is in this sense that the structural approach interprets quantum systems as \emph{quantum reference frames}: the system's state breaks the fundamental symmetry, and admits, at least on the level of classical descriptions, a canonical choice of representation.

Example~\ref{ExTrivial} illustrates a general consequence of the symmetry structure: for any particular choice of QRF, the \emph{set} of state descriptions relative to that QRF corresponds in general only to a \emph{subset} or \emph{subspace} of states. In this example, any such choice only allows to describe a \emph{subset of states that corresponds to a classical bit}: namely, the convex hull of the density matrices ${\rm diag}(1,0)$ and ${\rm diag}(1/2,1/2)$. The following example demonstrates how a full subspace of states can be encoded.

\begin{example}
\label{ExTwoQubits}
Consider \emph{two} spin	-$1/2$ particles with rotational symmetry. That is, the symmetry group is $\mathcal{G}_{\rm sym}=\{\rho\mapsto U\otimes U \rho U^\dagger \otimes U^\dagger\,\,|\,\,U\in{\rm SU}(2)\}$, acting on states in $\mathcal{S}(\mathbb{C}^2\otimes\mathbb{C}^2)$. Let us make a somewhat arbitrary, but nonetheless illustrative choice of a class $\mathcal{C}$ of states for which the above communication game can be played. These will be the pure states
\be
   \mathcal{C}=\left\{\cos\frac\theta 2 |\phi_-\rangle+e^{i\varphi}\sin\frac\theta 2 |\phi\rangle\otimes|\phi\rangle\right\},
\ee
where $0\leq\theta\leq\pi$, $0\leq\varphi<2\pi$, $|\phi_-\rangle$ is the singlet state, and $|\phi\rangle\in\mathbb{C}^2$ an arbitrary normalized state. The set of states $\mathcal{C}$ is the disjoint union of the sets $\mathcal{C}_{\theta,\varphi}$ for which the two angles are fixed and $|\phi\rangle$ is still an arbitrary qubit state. Since $U\otimes U|\phi_-\rangle=|\phi_-\rangle$, the $\mathcal{C}_{\theta,\varphi}$ are orbits of the symmetry group, i.e.\ equivalence classes of states.

If Refaella gives Alice and Bob a description of such an equivalence class $[|\psi\rangle]=\mathcal{C}_{\theta,\varphi}$, they can agree on returning the standard description $|\psi'\rangle=\cos\frac\theta 2 |\phi_-\rangle+e^{i\varphi}\sin\frac\theta 2 |0\rangle\otimes|0\rangle$, for example. This prescription has the added benefit of \emph{preserving superposition across different equivalence classes}. Namely, if for $i=1,2$, we have $|\psi_i\rangle=\alpha_i|\phi_-\rangle+\beta_i|\phi\rangle\otimes|\phi\rangle$ such that $|\alpha_1|\neq |\alpha_2|$, then $|\psi_1\rangle$ and $|\psi_2\rangle$ are in different equivalence classes, and so are (in general) their superpositions. But the states that Alice and Bob return respect superpositions: if $|\psi\rangle=\kappa|\psi_1\rangle+\lambda|\psi_2\rangle$, then the returned states satisfy $|\psi'\rangle=\kappa|\psi'_1\rangle+\lambda|\psi'_2\rangle$. That is, this choice of QRF admits the description of a subspace, a \emph{qubit}, inside the joint state space. Other choices of QRF do so as well. These would correspond to canonical descriptions where $|0\rangle\otimes|0\rangle$ is replaced by some arbitrary $|\phi_0\rangle\otimes|\phi_0\rangle$, and they are related by ``QRF transformations'' $U\otimes U$.
\end{example}

There are also seemingly natural choices of QRF that, however, are deficient in that the set of admissible descriptions relative to them cannot encompass a state space, as the following example illustrates.
\begin{example}
Consider again two spin-$1/2$ particles, but now under slightly different circumstances. There is a canonical choice of factorization of the Hilbert space: by looking at the system in isolation, observers can determine the decomposition into two distinguishable particles. If we assume that this is the only structure that can be determined by such observers, then we have the symmetry group
\[
   \mathcal{G}_{\rm sym}=\{\rho\mapsto (U\otimes V) \rho (U^\dagger\otimes V^\dagger)\,\,|\,\, U^\dagger U=V^\dagger V=\mathbf{1}\}.
\]
Under these circumstances, a canonical choice of frame is such that any pure state $|\psi\rangle$ becomes identical to its own Schmidt representation, $\displaystyle |\psi\rangle=\sum_{i=0}^1 \sqrt{\alpha_i} |ii\rangle$, where $\alpha_0\geq \alpha_1$.

While Alice and Bob could easily agree on such a convention, the ensuing canonical description would not preserve complex superpositions and, in particular, not lead to a subspace of states as its image, owing to the real nature and ordering of the Schmidt-coefficients.
\end{example}
\emph{A priori}, a choice of QRF in the structural approach can therefore be quite arbitrary. However, as the examples above motivate, a ``good'' choice of QRF will correspond to one that admits the description of a set of states relative to it which carries sufficient convex or linear structure to encode classical or quantum information. Preferably, that set of states should correspond to a subspace of maximal size within $\mathcal{C}$.

In the remainder of this article, we will focus on a more interesting realization of such a scenario which reproduces the notion of QRFs in the structural picture. We will define particular systems $S$ that we call ``$\mathcal{G}$-systems'', and we will see that these carry an interesting group of symmetries $\mathcal{G}_{\rm sym}$. If we ask what kind of canonical choices of (redundancy-free) description $\mathcal{G}$-systems admit, such that Alice and Bob can succeed in the communication scenario of Figure~\ref{fig_structural}, we will find that these correspond to choosing one of the subsystems of $S$ as a reference system and to describing the remaining degrees of freedom relative to it. In this manner, we will recover and generalize the ``quantum states relative to a particle'' of Refs.~\cite{Giacomini,Vanrietvelde,Hamette}. In particular, the transformations among the canonical choices of description of $S$ are elements of the symmetry group $\mathcal{G}_{\rm sym}$ and exactly the QRF transformations of Ref.~\cite{Hamette}, which are also equivalent to the ones in \cite{Giacomini,Vanrietvelde} (restricted to a discrete setting). In Ref.~\cite{MMP} we will further explicitly demonstrate the equivalence with the perspective-neutral approach to QRFs \cite{Vanrietvelde} and elucidate that any equivalence class $[\rho]$ of quantum states above corresponds precisely to a perspective-neutral quantum state. As we will see, this means that the relational states of the structural approach are \emph{coherently} (not incoherently as in the QI approach) group-averaged states.

\section{From symmetries to QRF transformations and invariant observables}
\label{SecTechnical}
Quantum reference frames as described in Refs.~\cite{Giacomini,Vanrietvelde,Hamette} have first been considered for the case of wave functions on the real line. We have a Hilbert space of square-integrable functions, $\mathcal{H}=L^2(\R)$, and a physical claim that there is no absolute notion of origin. In other words, the ``physics'' does not change under translations (we will soon formulate what this means in detail). If we have $N$ particles on the real line, the total Hilbert space is $L^2(\R)^{\otimes N}$.

As noted in Ref.~\cite{Hamette}, the real numbers $\R$ play a double role in this case: on the one hand, they label the configuration space on which the wave functions are supported; on the other hand, they also label the possible translations, i.e.\ the fundamental symmetry group $(\R,+)$.

In this section, we will analyze this particular situation in a simplified setting: one in which the  group is finite and Abelian. In the simplest case, we discretize the real line and make it periodic, as in Figure~\ref{fig_cyclic}. Formally, for some $n\in\N$, we consider the \emph{cyclic group}
\be
   \mathbb{Z}_n:=\{0,1,2,\ldots,n-1\}
\ee
with addition modulo $n$ as its group operation. To this, we associate a single-particle Hilbert space
\be
   \mathcal{H}=\ell^2(\mathbb{Z}_n)={\rm span}\{|0\rangle,|1\rangle,\ldots,|n-1\rangle\}
\ee
and a total Hilbert space $\mathcal{H}^{\otimes N}$ for $N$ distinguishable particles. We will denote the particles with labels $A,B,C,\ldots$, and later in this paper with integers $1,2,3,\ldots$. Within this formalism, we can realize the main ideas of quantum references frames as in Refs.~\cite{Giacomini,Vanrietvelde,Hamette}. For the case $N=2$, consider the quantum state
\be
   |\psi\rangle_{AB}=|0\rangle_A\otimes \frac 1 {\sqrt{2}}\left(|1\rangle+|2\rangle\right)_B.
\ee
We are interested in a situation where ``only the relation between the particles'' matters, but not their total position. That is, in some sense, ``applying elements of $\mathbb{Z}_n$ to a quantum state doesn't change the physics''. Intuitively, this means, for example, that the quantum state
\be
   |\psi'\rangle_{AB}=|1\rangle_A\otimes \frac 1 {\sqrt{2}}\left(|2\rangle+|3\rangle\right)_B
\ee
should be an equivalent description of the system's properties, since it is related to $|\psi\rangle$ by a translation. Motivated by Ref.~\cite{Giacomini}, we can do something more interesting. First, in the terminology of Refs.~\cite{Giacomini,Vanrietvelde,Hamette}, the form of $|\psi\rangle$ can be interpreted as saying that ``particle $B$, as seen by $A$, is in the state $\frac 1 {\sqrt{2}}(|1\rangle+|2\rangle)$''. Second, we can then use the prescription of Refs.~\cite{Giacomini,Vanrietvelde,Hamette} to ``jump into $B$'s reference frame'', and consider the state
\be
   |\psi''\rangle=\frac 1 {\sqrt{2}}(|n-2\rangle+|n-1\rangle)_A\otimes |0\rangle_B
\ee
and conclude that ``particle $A$, as seen by $B$, is in the state $\frac 1 {\sqrt{2}}(|n-1\rangle+|n-2\rangle)$''. After all, this still expresses the fact that with amplitudes $\frac 1 {\sqrt{2}}$, $B$ is either one or two positions to the right of $A$.

We will now show that we can understand these transformations as natural symmetry transformations in a simple class of physical systems which we call ``$\mathcal{G}$-systems''. Choosing one of the particles as a reference frame (as sketched above) will correspond to a choice of canonical representation of a state as in the structural approach outlined above. This will give the idea of ``jumping into a particle's perspective'' a thorough operational interpretation.

\subsection{$\mathcal{G}$-systems and their symmetries}
\label{Section:Algebra}
We begin by considering a specific physical system which is motivated by translation-invariant quantum physics on the real line with Hilbert space $L^2(\R)$. Here we consider a finite, discrete group-theoretic analogue, again using the group as both the configuration space and set of transformations. Some aspects of QRF transformations in this case were also considered in Ref.~\cite{Hamette}. In contrast to Ref.~\cite{Hamette}, we restrict our attention to finite Abelian groups $\mathcal{G}$ for simplicity. Due to the structure theorem~\cite{Simon}, every such $\mathcal{G}$ can be interpreted as the group of translations of a discrete torus of some dimension. In the simplest case where $\mathcal{G}=\mathbb{Z}_n$, this torus is the circle\footnote{This representation is not unique. For example, we can interpret $\mathbb{Z}_6$ as the translation group of six points on a circle, but the structure theorem tells us that $\mathbb{Z}_6\simeq\mathbb{Z}_2\times\mathbb{Z}_3$. Thus, we can also interpret this group as the translations of a two-dimensional $(2\times 3)$-torus.}, and we are in the setting of Figure~\ref{fig_cyclic}.
\begin{definition}[$\mathcal{G}$-system]
\label{DefGSystem}
Fix some finite Abelian group $\mathcal{G}$, interpreted as a classical configuration space. That is, we regard the $g\in\mathcal{G}$ as perfectly distinguishable orthonormal basis vectors $|g\rangle$, spanning a Hilbert space $\mathcal{H}$. Formally, this Hilbert space is $\mathcal{H}=\ell^2(\mathcal{G})$, and it carries a distinguished basis $\{|g\rangle\}_{g\in\mathcal{G}}$, similarly as quantum mechanics on the real line carries a distinguished position basis.

Consider $N$ distinguishable particles on such a classical configuration space, where $N\in\N$. That is, the total Hilbert space is $\mathcal{H}^{\otimes N}$, and it carries a natural orthonormal basis
\be
   \mathcal{H}^{\otimes N}={\rm span}\{ |g_1,\ldots,g_N\rangle\,\,|\,\, g_i\in\mathcal{G}\}.
\ee
The physical system $S$ described by this Hilbert space will carry a group of symmetries $\mathcal{G}_{\rm sym}$ as introduced in Assumption 1 and Figure~\ref{fig_symmetry}. Clearly, the basic Hilbert space structure of $S$, i.e.\ the notion of linearity and the inner product, must not depend on the orientation of the external reference frame. Hence, the symmetry group will be of the form
\be
   \mathcal{G}_{\rm sym}=\{U\bullet U^\dagger\,\,|\,\,U\in\mathcal{U}_{\rm sym}\},
\ee
for $\mathcal{U}_{\rm sym}$ some group of unitaries. Furthermore, we assume that the classical configuration space, i.e.\ the \emph{set} of basis vectors, $\{|g_1,\ldots,g_N\rangle\,\,|\,\,g_i\in\mathcal{G}\}$, is an internal structure of $S$ that is defined without the external reference frame. We now postulate that the classical configurations carry $\mathcal{G}$-symmetry. In particular, any given configuration
\be
   |\mathbf{g}\rangle:=|g_1,g_2,\ldots,g_N\rangle
\ee
and its ``translated'' version
\be
   U_g^{\otimes N} |\mathbf{g}\rangle=|g\mathbf{g}\rangle:=|g g_1, gg_2,\ldots,g g_n\rangle
\ee
are internally indistinguishable. On the other hand, we postulate that the \emph{relation between the particles \textbf{is} accessible to observers without the external frame.} To formalize this, consider some tuple $\mathbf{h}\in\mathcal{G}^{N-1}$ of group elements, i.e.\ $\mathbf{h}=(h_1,\ldots,h_{N-1})$. Any state of the form
\begin{equation}
   |g,h_1 g, h_2 g, \ldots, h_{N-1}g\rangle=:|g,\mathbf{h}g\rangle
   \label{eqRelationFirstParticle}
\end{equation}
has the same pairwise relations between its particles, no matter what the state $|g\rangle$ of the first particle is. We now define $\mathcal{G}_{\rm sym}$ as the largest possible symmetry group that is compatible with these postulates. To this end, $\mathcal{U}_{\rm sym}$ must be the group of unitary transformations with the following properties:
\begin{enumerate}
	\item $U$ maps classical configurations to classical configurations, i.e. $U\ket{g_1,...,g_n} = \ket{g'_1,...,g'_n}$.
	\item On classical configurations, $U$ preserves relative positions, i.e. $U \ket{g,\mathbf h g} = \ket{g', \mathbf h g'}$.
	\item If two classical configurations are $g$-translations of each other, then $U$ preserves this fact, i.e.
\be
   |\mathbf{g}\rangle=U_g^{\otimes N}|\mathbf{j}\rangle \Rightarrow U|\mathbf{g}\rangle=U_g^{\otimes N} \left(\strut U |\mathbf{j}\rangle\right).
\ee
\end{enumerate}
\end{definition}
A few words of justification are in place. While two choices of external reference frame may yield a different description of any configuration, they must agree on the \emph{set} of all possible configurations that $S$ can be in, for otherwise their descriptions of $S$ cannot be placed in full relation with one another.\footnote{This assumes that the external frame choices in the ambient laboratory that an agent may have access to are complete in the sense that all quantum properties of $S$ can be described relative to them.} The \emph{set} of basis vectors $\{|\mathbf{g}\rangle\}_{\mathbf{g}\in\mathcal{G}^N}$ must thus be independent of the external frame and hence should remain invariant under $\mathcal{G}_{\rm sym}$. It is also clear that the symmetry group must preserve the linear and probabilistic structure of quantum theory and thereby leave the inner product on $\mathcal{H}^{\otimes N}$ invariant.\footnote{{In constraint quantization, $\mathcal{H}^{\otimes N}$ corresponds to the kinematical Hilbert space and so the preservation refers here to the kinematical inner product. While one is usually only interested in the physical inner product (i.e.\ the inner product on the space of solutions to the constraints), it nevertheless holds that also the kinematical inner product is left invariant by the group generated by the (self-adjoint) constraints.}} After all, by Assumption 1, symmetry related quantum states should be indistinguishable even probabilistically. Furthermore, the $\mathbf{h}$ label the 'relative positions' among the $N$ particles. These are internal properties of $S$ and so independent of any external relatum. Finally, configurations that are $g$-translations of each other are by assumption internally indistinguishable. The symmetry group must preserve this indistinguishability.

Note that it is possible to drop assumption 3., and to assume only 1.\ and 2. In this case, one obtains similar results to those presented here, but with modified structures: the algebra of invariant operators then becomes what we call $\mathcal{A}_{\rm alg}$ in Lemma~\ref{LemAAlg}, and the symmetry group becomes the group of conditional \emph{permutations} (not only conditional translations). Physically, this does not seem particularly well-motivated, and it leads to the loss of certain uniqueness results, including the uniqueness of $U\in\mathcal{U}_{\rm sym}$ in Theorem~\ref{lem_align1}.

The symmetry group of a $\mathcal{G}$-system can now easily be written down. To this end, define the subspaces
\be
   \mathcal{H}_{\mathbf{h}}:={\rm span}\{|g,\mathbf{h}g\rangle\,\,|\,\, g\in\mathcal{G}\}
\ee
and the corresponding orthogonal projectors $\Pi_{\mathbf{h}}$. Note that $\mathcal{H}^{\otimes N}=\bigoplus_{\mathbf{h}\in\mathcal{G}^{N-1}} \mathcal{H}_{\mathbf{h}}$, and the $\{\Pi_{\mathbf{h}}\}_{\mathbf{h}\in\mathcal{G}^{N-1}}$ define a projective measurement.

\begin{lemma}
\label{LemDecomposeU}
The symmetry group of a $\mathcal{G}$-system is 
\begin{equation}
   \mathcal{U}_{\rm sym}=\left\{\left.U=\bigoplus_{\mathbf{h}\in\mathcal{G}^{N-1}} U^{\otimes N}_{g_{\mathbf{h}}}\,\,\right|\,\, g_{\mathbf{h}}\in\mathcal{G}\right\},
  \label{eqRep}
\end{equation}
where $U_{g_{\mathbf{h}}}^{\otimes N}$ denotes the global translation by $g_{\mathbf{h}}$, but restricted to the subspace $\mathcal{H}_{\mathbf{h}}$.
\end{lemma}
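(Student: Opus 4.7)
The plan is to establish Eq.~\eqref{eqRep} by a double inclusion, matching the three properties in Definition~\ref{DefGSystem} to the block-diagonal translation structure on the right-hand side. The easy direction is to check that any $U = \bigoplus_{\mathbf{h}} U_{g_{\mathbf{h}}}^{\otimes N}|_{\mathcal{H}_{\mathbf{h}}}$ obeys properties~1--3: each block $U_{g_{\mathbf{h}}}^{\otimes N}|_{\mathcal{H}_{\mathbf{h}}}$ cyclically permutes the basis vectors $|g,\mathbf{h}g\rangle$ of $\mathcal{H}_{\mathbf{h}}$ (giving properties~1 and~2 simultaneously), and since every $\mathcal{H}_{\mathbf{h}}$ is invariant under all global translations $U_g^{\otimes N}$ while two translations on an abelian group commute, property~3 follows by a direct computation. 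I would dispose of this direction first.

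For the converse, let $U\in\mathcal{U}_{\rm sym}$. Property~1 gives $U|\mathbf{g}\rangle = |\pi(\mathbf{g})\rangle$ for some permutation $\pi$ of $\mathcal{G}^N$. Property~2 refines this by asserting $\pi(g,\mathbf{h}g) = (g',\mathbf{h}g')$ for some $g'=g'(g,\mathbf{h})$, which is precisely the statement that $\pi$, and hence $U$, preserves every subspace $\mathcal{H}_{\mathbf{h}}$. Combined with the orthogonal decomposition $\mathcal{H}^{\otimes N} = \bigoplus_{\mathbf{h}} \mathcal{H}_{\mathbf{h}}$ recalled just before the lemma, this already yields $U = \bigoplus_{\mathbf{h}} U_{\mathbf{h}}$, with each $U_{\mathbf{h}}$ a permutation of the basis of $\mathcal{H}_{\mathbf{h}}$.

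It remains to show that each $U_{\mathbf{h}}$ is in fact a translation. Property~3 asserts $U\,U_g^{\otimes N}|\mathbf{j}\rangle = U_g^{\otimes N}\,U|\mathbf{j}\rangle$ whenever $|\mathbf{g}\rangle = U_g^{\otimes N}|\mathbf{j}\rangle$; since $|\mathbf{j}\rangle$ and $g$ are arbitrary, this is simply the requirement that $U$ commute with every global translation. Identifying $\mathcal{H}_{\mathbf{h}}$ with the group algebra $\C[\mathcal{G}]$ via $|g,\mathbf{h}g\rangle \leftrightarrow |g\rangle$, the restriction of $U_g^{\otimes N}$ becomes the left regular representation of $\mathcal{G}$, while $U_{\mathbf{h}}$ becomes a permutation $\sigma_{\mathbf{h}}\colon\mathcal{G}\to\mathcal{G}$ satisfying $\sigma_{\mathbf{h}}(gj) = g\,\sigma_{\mathbf{h}}(j)$ for all $g,j\in\mathcal{G}$. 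Setting $j=e$ forces $\sigma_{\mathbf{h}}(g) = g\cdot g_{\mathbf{h}}$ with $g_{\mathbf{h}}:=\sigma_{\mathbf{h}}(e)$; abelianness rewrites this right-multiplication as left-multiplication by $g_{\mathbf{h}}$, and reading the identification backwards gives $U_{\mathbf{h}} = U_{g_{\mathbf{h}}}^{\otimes N}|_{\mathcal{H}_{\mathbf{h}}}$, as required.

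The conceptually delicate point is this final identification: for non-abelian $\mathcal{G}$, the commutant of the left regular representation consists of right translations, which do \emph{not} coincide with restrictions of the global translations $U_g^{\otimes N}$, so one would obtain a strictly larger symmetry group than Eq.~\eqref{eqRep}. The abelian hypothesis thus enters exactly at the step where the intertwiner equation is converted into a block-wise global translation; everything else is bookkeeping, and I anticipate no substantial technical obstacle.
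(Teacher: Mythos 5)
Your proof is correct and follows essentially the same route as the paper's: conditions 1 and 2 give the block decomposition over the subspaces $\mathcal{H}_{\mathbf{h}}$, and condition 3 (read as commutation with all global translations) pins each block down to a translation by evaluating at $|e,\mathbf{h}\rangle$ --- your group-algebra identification with $\sigma_{\mathbf{h}}(gj)=g\,\sigma_{\mathbf{h}}(j)$ and the choice $j=e$ is the paper's computation in different notation. You additionally verify the reverse inclusion (that every such block-diagonal translation satisfies properties 1--3), which the paper leaves implicit.
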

That is, the symmetries in $\mathcal{U}_{\rm sym}$ act as \emph{relation-conditional global translations}: every classical configuration is globally translated via some $U_{g_{\mathbf{h}}}^{\otimes N}$, but the amount of translation $g_{\mathbf{h}}$ may depend on the relation $\mathbf{h}$ between the particles. We will soon identify the QRF transformations of Refs.~\cite{Giacomini,Vanrietvelde,Hamette} with elements of this group. Thus, the above highlights that these transformations make sense in a purely classical context; indeed, the corresponding classical frame transformations were also studied in \cite{Vanrietvelde,Hamette} and shown to be conditional on the interparticle relations.\footnote{More precisely, in the perspective-neutral approach these classical reference frame transformations correspond to \emph{conditional} gauge transformations, i.e.\ the gauge flow distance depends on the subsystem relations, see Appendix B of Ref.~\cite{Vanrietvelde} and also Refs.~\cite{Vanrietvelde2,Hoehn:2018aqt,Hoehn:2018whn}.} For example, they can also be applied if one deals with statistical mixtures of particle positions instead of superpositions. Nonetheless, their unitary extension to all of $\mathcal{H}^{\otimes N}$ leads to interesting quantum effects like the frame-dependence of entanglement~\cite{Giacomini,Vanrietvelde,Hamette}. This is similar to the behavior of the CNOT gate in quantum information theory, which is defined by its classical action on two bits, but nonetheless can create entanglement.
\begin{proof}
Due to conditions 1.\ and 2.\ of Definition~\ref{DefGSystem}, the $U\in{\mathcal{U}_{\rm sym}}$ leave every $\mathcal{H}_{\mathbf{h}}$ invariant. Thus, $U$ decomposes into a direct sum $U=\bigoplus_{\mathbf{h}\in\mathcal{G}^{N-1}} U_{\mathbf{h}}$. Fix some $\mathbf{h}\in\mathcal{G}^{N-1}$. Since $\mathcal{H}_{\mathbf{h}}$ is invariant, there exists some $g_{\mathbf{h}}\in\mathcal{G}$ such that $U|e,\mathbf{h}\rangle=|g_{\mathbf{h}},\mathbf{h} g_{\mathbf{h}}\rangle$. Now, for every $g\in\mathcal{G}$, we have $|g,\mathbf{h}g\rangle=U_g^{\otimes N}|e,\mathbf{h}\rangle$. Thus, condition 3.\ of Definition~\ref{DefGSystem} implies that
\begin{eqnarray}
U|g,\mathbf{h}g\rangle &=& U_g^{\otimes N}\left(\strut U|e,\mathbf{h}\rangle\right)=U_g^{\otimes N} |g_{\mathbf{h}},\mathbf{h} g_{\mathbf{h}}\rangle\nn\\
&=& U_{g g_{\mathbf{h}}}^{\otimes N}|e,\mathbf{h}\rangle=U_{g_{\mathbf{h}}}^{\otimes N}|g,\mathbf{h}g\rangle.
\end{eqnarray}
This shows that $U_{\mathbf{h}}$ acts like $U_{g_{\mathbf{h}}}^{\otimes N}$ on $\mathcal{H}_{\mathbf{h}}$.
\end{proof}

When working with pure state vectors, we sometimes want to allow global phases. Thus, we use the notation
\[
   \mathcal{U}_{\rm sym}^*:=\mathcal{U}_{\rm sym}\times {\rm U}(1)=\{e^{i\theta}U\,\,|\,\,U\in\mathcal{U}_{\rm sym},\theta\in\R\}.
\]
Above, we have decided to denote the state of the particles \emph{relative to the first particle}, but this also defines the relations between all other pairs of particles: the equation $|\mathbf{g}\rangle=|g,\mathbf{h}g\rangle\in\mathcal{H}_{\mathbf{h}}$ means that $g_i=h_{i-1}g_1$ for $i\geq 2$, but this implies that $g_i=(h_{i-1}h_{j-1}^{-1}) g_j$ for all $i,j$ if we set $h_0:=e$, the unit element of the group. Thus, the $\mathcal{H}_{\mathbf{h}}$ decompose the global Hilbert space into sectors of equal pairwise relations.

It is clear that global $\mathcal{G}$-translations are elements of the symmetry group, but they do not exhaust it:
\begin{example}
Given any $\mathcal{G}$-system, the global translations $U_g^{\otimes N}$ are symmetry transformations. Since they represent the global action of $\mathcal{G}$ on the $N$-particle Hilbert space, this can be written as
\be
   \mathcal{G}\subset \mathcal{G}_{\rm sym}.
\ee
However, there are other symmetries that are not global translations. For example, for $N=2$ particles, the unitary $U$ which acts on all basis vectors $|g_1,g_2\rangle$ as
\be
   U|g_1,g_2\rangle:=|g_2,g_1^{-1}g_2^2\rangle
\ee
is a symmetry transformation, i.e.\ $U\in\mathcal{U}_{\rm sym}$. Namely, $|g_1,g_2\rangle\in\mathcal{H}_h$ for $h=g_1^{-1}g_2$, and $U$ implements the global translation $U_{g(h)}^{\otimes 2}$ on $\mathcal{H}_h$, where $g(h)=h$.

On the other hand, the transformation
\be
   V|g_1,g_2\rangle:=|g_2^{-1},g_1^{-1}\rangle
\ee
is \emph{not} a symmetry transformation: it satisfies conditions 1.\ and 2.\ of Definition~\ref{DefGSystem}, but violates condition 3.
\end{example}
We will later see that QRF transformations correspond to elements in $\cg_{\rm sym}\setminus\cg$.

\subsection{Invariant observables and Hilbert space decomposition}\label{ssec_invobs}

Which observables can we internally measure in a $\mathcal{G}$-system, i.e.\ without access to the external relatum that was used to define the state space and the symmetry group? These must be the observables that are invariant under all symmetry transformations and which thus correspond to speakable information:
\begin{definition}[Invariant observable]\label{def_invobs} 
We define the \emph{invariant subalgebra} $\mathcal{A}_{\rm inv}$ as
\[
   \mathcal{A}_{\rm inv}=\{A\in\mathcal{L}(\mathcal{H}^{\otimes N})\,\,|\,\, [U,A]=0\mbox{ for all }U\in\mathcal{U}_{\rm sym}\},
\]
where $\mathcal{L}(\mathcal{H})$ denotes the set of linear operators on Hilbert space $\mathcal{H}$. These are the operators $A$ that are invariant under all symmetry transformations $A\mapsto U A U^\dagger$. A self-adjoint element $A=A^\dagger\in\mathcal{A}_{\rm inv}$ is called an \emph{invariant observable}.
\end{definition}
Since all observable properties of our system are assumed to be invariant under $\mathcal{G}_{\rm sym}$, it follows that the observables in Definition~\ref{def_invobs} comprise the set of \emph{all} observables that can be physically measured by an observer who does not have access to the external reference frame.

Clearly, all the $\Pi_{\mathbf{h}}$ are invariant observables, i.e.\ $\Pi_{\mathbf{h}}\in\mathcal{A}_{\rm inv}$. However, due to the fact that we have declared a classical basis to be a distinguished structure of the $\mathcal{G}$-system, there are many more invariant observables. To determine the algebra $\mathcal{A}_{\rm inv}$, recall the decomposition of $U\in\mathcal{U}_{\rm sym}$ from Lemma~\ref{LemDecomposeU}. We can regard $\mathcal{U}_{\rm sym}$ as a representation of several copies of the group $\mathcal{G}$, and thus further refine this decomposition via basic representation theory of finite Abelian groups~\cite{Simon}.

A major role is played by the \emph{characters} of $\mathcal{G}$. These are the homomorphisms $\chi:\mathcal{G}\to S^1$, i.e.\ the maps from $\mathcal{G}$ to the complex unit vectors $S^1:=\{z\in\mathbb{C},\,|\,\,|z|=1\}$ with $\chi(gh)=\chi(g)\chi(h)$. In other words, the characters are the one-dimensional irreducible representations (irreps) of $\mathcal{G}$, and these turn out to exhaust all irreps. The set of all characters of $\mathcal{G}$ is denoted $\mathcal{\hat G}$.

Denote the order of the group by $n:=|\mathcal{G}|$, then $g^n=e$ for all $g\in\mathcal{G}$~\cite{Simon}. Thus, every $\chi(g)$ must be among the $n$-th roots of unity: $\chi(g)^n=1$. Moreover, there are exactly $n$ characters, i.e.\ $|{\mathcal{\hat G}}|=n$.

Furthermore, note that $\dim\mathcal{H}_{\mathbf{h}}=n$. We claim that these subspaces can be decomposed as follows:
\be
   \mathcal{H}_{\mathbf{h}}=\bigoplus_{\chi\in\mathcal{\hat G}}\mathcal{H}_{\mathbf{h};\chi}
\ee
with $\mathcal{H}_{\mathbf{h};\chi}$ the one-dimensional subspace spanned by the vector
\begin{equation}
   |\mathbf{h};\chi\rangle:=\frac 1 {\sqrt{|\mathcal{G}|}} \sum_{g\in\mathcal{G}} \chi(g^{-1})|g,\mathbf{h}g\rangle.
   \label{eqChi}
\end{equation}
Indeed, due to Ref.~\cite[Proof of Corollary III.2.3]{Simon}, we have the well-known orthogonality relations $\sum_{g\in\mathcal{G}}\overline{\chi(g)}\chi'(g)=n\delta_{\chi,\chi'}$. Using this, direct calculation shows that the $|\mathbf{h};\chi\rangle$ are orthonormalized states, and
\begin{equation}
   U_g^{\otimes N}|\mathbf{h};\chi\rangle=\chi(g)|\mathbf{h};\chi\rangle\quad\mbox{for all }g\in\mathcal{G}.
   \label{EqEigenbasis}
\end{equation}

\begin{example}
\label{ExCyclic}
 	As a simple example, consider the cyclic group $\mathcal G = \mathbb{Z}_n = \{0,1,.\ldots, n-1\}$ with addition modulo $n$, see Figure 4. This group can be interpreted as a finite analogue of a part of the real line with periodic boundary conditions, by distributing finitely many possible positions along a ring. Its irreducible representations and the respective characters are given by $\chi_k(g) := e^{i \frac{2 \pi}{n} k g}$ with $k\in\{0,1,\ldots, n-1\}$~\cite{Tinkham}. Indeed, one can directly verify that the $\chi_k$ form one-dimensional representations of $\mathbb{Z}_n$, and they are inequivalent. We explicitly obtain
 	\begin{align}
 		\ket{\mathbf h; \chi_k} = \frac{1}{\sqrt n} \sum_{g=0}^{n-1} e^{-i \frac{2 \pi}{n} k g} \ket{g, g+ \mathbf h},
 	\end{align}
 	where $g + \mathbf h$ means that $g$ is added to each component of $\mathbf h$, modulo $n$. Similarly, one can directly verify that 
 	\begin{align}
 		U_g^{\otimes N} \ket{\mathbf h; \chi_k} = e^{i\frac{2\pi}{n} k g} \ket{\mathbf h; \chi_k}.
 	\end{align}
 	 One can see that the $\ket{\mathbf h; \chi_k}$ are obtained via a kind of discrete Fourier transform~\cite{NielsenChuang} from the classical configurations, and therefore they are reminiscent of momentum eigenstates. 
	
 Since elements of $\mathcal U_{\rm sym}$ translate all particles by the same amount, and momentum is the generator of translations, one may identify $\ket{\mathbf h; \chi_k}$ with the eigenstates of total momentum. However, since we are not explicitly interested in dynamics in this paper, we will postpone any elaboration on this analogy to our upcoming work, Ref.~\cite{MMP}.
\end{example}
From Eqs.~\eqref{eqRep}--\eqref{EqEigenbasis} it is clear that the subspace spanned by the eigenstates with trivial character $\chi=\mathbf{1}$ is the subspace of $\mathcal{U}_{\rm sym}$-invariant states, $|\psi\rangle=U|\psi\rangle$ for all $U\in\cu_{\rm sym}$. We denote it by
\begin{equation}
   \mathcal{H}_{\rm phys}:=\bigoplus_{\mathbf{h}\in\mathcal{G}^{N-1}} \mathcal{H}_{\mathbf{h};\mathbf{1}}
   ={\rm span}\left\{|\mathbf{h};\mathbf{1}\rangle\,\,|\,\,\mathbf{h}\in\mathcal{G}^{N-1}\right\}.
   \label{DefH1}
\end{equation}
We have equipped the total invariant subspace with the label ``$\rm phys$'' because it is the finite group version of the so-called physical Hilbert space of constraint quantization. When the symmetry group is generated by (self-adjoint) constraints, the physical Hilbert space corresponds to the set of solutions to the quantum constraints and is thereby precisely the Hilbert space on which the group acts trivially. It is usually called `physical' because quantum states of a gauge system are required to satisfy the constraints imposed by gauge symmetry. Nevertheless, we will see that we can give quantum states that are not invariant under the symmetry group a useful physical interpretation, and we will clarify their relation with the `physical' states in $\mathcal{H}_{\rm phys}$ in Ref.~\cite{MMP}. Being spanned by the states $\ket{\mathbf{h};\mathbf{1}}$ which encode the particle relations in an invariant manner, we shall henceforth also refer to $\ch_{\rm phys}$ as the subspace of \emph{relational states}. Its dimension is $|\cg|^{N-1}$, and thus:
\begin{lemma}\label{lem_hphys}
The subspace of relational states $\mathcal{H}_{\rm phys}$ is isomorphic to $\mathcal{H}^{\otimes (N-1)}$.
\end{lemma}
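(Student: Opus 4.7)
The plan is to exhibit an explicit unitary isomorphism between $\mathcal{H}_{\rm phys}$ and $\mathcal{H}^{\otimes(N-1)}$, which is essentially forced by dimension counting together with the orthonormality of the distinguished basis $\{|\mathbf{h};\mathbf{1}\rangle\}_{\mathbf{h}\in\mathcal{G}^{N-1}}$ already established via Eq.~\eqref{eqChi}.

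First, I would observe that the subspaces $\mathcal{H}_{\mathbf{h};\chi}$ are mutually orthogonal: vectors in $\mathcal{H}_{\mathbf{h};\chi}$ and $\mathcal{H}_{\mathbf{h}';\chi'}$ with $\mathbf{h}\neq\mathbf{h}'$ are orthogonal because they are supported on orthogonal sets of classical configurations $\{|g,\mathbf{h}g\rangle\}$ and $\{|g',\mathbf{h}'g'\rangle\}$, while for $\mathbf{h}=\mathbf{h}'$ but $\chi\neq\chi'$ the standard character orthogonality relations $\sum_{g}\overline{\chi(g)}\chi'(g)=|\mathcal{G}|\,\delta_{\chi,\chi'}$ (already invoked in the paragraph following Eq.~\eqref{eqChi}) give $\langle\mathbf{h};\chi|\mathbf{h};\chi'\rangle=\delta_{\chi,\chi'}$. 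In particular, restricting to $\chi=\mathbf{1}$, the family $\{|\mathbf{h};\mathbf{1}\rangle\}_{\mathbf{h}\in\mathcal{G}^{N-1}}$ is orthonormal, and by definition \eqref{DefH1} it is a basis of $\mathcal{H}_{\rm phys}$. Hence $\dim\mathcal{H}_{\rm phys}=|\mathcal{G}^{N-1}|=|\mathcal{G}|^{N-1}$.

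Second, I would note that $\mathcal{H}^{\otimes(N-1)}$ has the orthonormal product basis $\{|\mathbf{h}\rangle:=|h_1\rangle\otimes\cdots\otimes|h_{N-1}\rangle\}_{\mathbf{h}\in\mathcal{G}^{N-1}}$, which also has cardinality $|\mathcal{G}|^{N-1}$. Defining the linear map
\begin{equation}
V:\mathcal{H}_{\rm phys}\to\mathcal{H}^{\otimes(N-1)},\qquad V|\mathbf{h};\mathbf{1}\rangle:=|\mathbf{h}\rangle,
\end{equation}
and extending by linearity, yields a bijection between two orthonormal bases, hence a unitary isomorphism of Hilbert spaces.

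There is no real obstacle here beyond bookkeeping: the only thing to verify is that the $|\mathbf{h};\mathbf{1}\rangle$ really form an orthonormal basis of $\mathcal{H}_{\rm phys}$, which reduces to the character orthogonality already used in the derivation of Eq.~\eqref{EqEigenbasis}. I would not claim that $V$ is canonical or carries additional structure (it does not intertwine the tensor product structure with anything natural on $\mathcal{H}_{\rm phys}$, since the latter subspace mixes all particles); the statement is merely that the two Hilbert spaces have the same dimension. Any deeper relational interpretation of this isomorphism, in which $\mathbf{h}$ is read as the $(N-1)$-tuple of relative positions ``seen from the first particle,'' belongs to the subsequent discussion of QRF alignment rather than to this lemma.
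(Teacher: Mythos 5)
Your proof is correct and matches the paper's argument, which is simply the observation that $\mathcal{H}_{\rm phys}$ is spanned by the $|\mathcal{G}|^{N-1}$ orthonormal vectors $|\mathbf{h};\mathbf{1}\rangle$ and hence has the same dimension as $\mathcal{H}^{\otimes(N-1)}$. Your additional remark that the isomorphism is merely dimensional and carries no canonical structure is a fair and accurate caveat.
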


To determine the invariant subalgebra, consider any $A\in\mathcal{L}(\mathcal{H}^{\otimes N})$ and develop it into the $|\mathbf{h};\chi\rangle$-eigenbasis: $A=\sum_{\mathbf{h},\mathbf{h}',\chi,\chi'}a_{\mathbf{h},\mathbf{h'},\chi,\chi'}|\mathbf{h};\chi\rangle\langle\mathbf{h'};\chi'|$. Using Eqs.~(\ref{eqRep}) and~(\ref{EqEigenbasis}), conjugation with some $U\in\mathcal{U}_{\rm sym}$ yields
\[
   UAU^\dagger=\sum_{\mathbf{h},\mathbf{h'},\chi,\chi'}\chi(g_{\mathbf{h}})\chi'(g_{\mathbf{h'}})^{-1}a_{\mathbf{h},\mathbf{h'},\chi,\chi'}|\mathbf{h};\chi\rangle\langle\mathbf{h'};\chi'|.
\]
This is equal to $A$ for all $U$ if and only if for all $\mathbf{h},\mathbf{h'},\chi,\chi'$, one of the following is true: either $a_{\mathbf{h},\mathbf{h'},\chi,\chi'}=0$ or $\chi(g_{\mathbf{h}})=\chi'(g_{\mathbf{h'}})$ for all possible choices of $g_{\mathbf{h}}$, $g_{\mathbf{h'}}$. The latter condition is automatically satisfied if $\chi=\chi'=\mathbf{1}$. Thus, all operators $A$ that are fully supported on the relational subspace $\ch_{\rm phys}$
will be elements of $\mathcal{A}_{\rm inv}$. Let us denote the set of such operators by $\mathcal{A}_{\rm phys}$, then we have just shown that $\mathcal{A}_{\rm phys}\subset \mathcal{A}_{\rm inv}$. For reasons that will become clear later, we will call the observables in $\mathcal{A}_{\rm phys}$ \emph{relational observables}.

Now consider the other cases in which at least one of $\chi$ or $\chi'$ differs from $\mathbf{1}$. Clearly, if $\mathbf{h}=\mathbf{h'}$ and $\chi=\chi'$ then the character condition $\chi(g_{\mathbf{h}})=\chi'(g_{\mathbf{h'}})$ is trivially satisfied, and $a_{\mathbf{h},\mathbf{h},\chi,\chi}$ does not need to be zero. Consider the case $\mathbf{h}=\mathbf{h'}$ and $\chi\neq\chi'$. Choosing any $g_{\mathbf{h}}$ with $\chi(g_{\mathbf{h}})\neq \chi'(g_{\mathbf{h}})$ shows that we must have $a_{\mathbf{h},\mathbf{h},\chi,\chi'}=0$. Finally, if $\mathbf{h}\neq\mathbf{h'}$ and at least one of $\chi$ or $\chi'$ (say, $\chi$) differs from $\mathbf{1}$, choose $g_{\mathbf{h'}}=e$ and $g_{\mathbf{h}}$ such that $\chi(g_{\mathbf{h}})\neq 1$. This violates the character condition and implies $a_{\mathbf{h},\mathbf{h'},\chi,\chi'}=0$. In summary, we have proven the following:
\begin{lemma} \label{lem_algdecomp}
The invariant algebra consists exactly of the block matrices of the form
\be
   \mathcal{A}_{\rm inv}=\left\{A_{\rm phys}\oplus\bigoplus_{\mathbf{h}\in\mathcal{G}^{N-1}}\bigoplus_{\chi\neq\mathbf{1}} a_{\mathbf{h};\chi} |\mathbf{h};\chi\rangle\langle\mathbf{h};\chi|\right\},
\ee
where $A_{\rm phys}\in\mathcal{A}_{\rm phys}$ is supported on the relational subspace $\mathcal{H}_{\rm phys}$ defined in Eq.~(\ref{DefH1}), and the $a_{\mathbf{h};\chi}$ are complex numbers.
\end{lemma}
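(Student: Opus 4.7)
The plan is to exploit the fact that $\{|\mathbf{h};\chi\rangle\}_{\mathbf{h},\chi}$ is an orthonormal eigenbasis for every $U\in\mathcal{U}_{\rm sym}$, which reduces the invariance condition $[U,A]=0$ to an elementary algebraic condition on matrix entries. Concretely, I would first expand an arbitrary operator as
\[
A=\sum_{\mathbf{h},\mathbf{h}',\chi,\chi'} a_{\mathbf{h},\mathbf{h}',\chi,\chi'}\,|\mathbf{h};\chi\rangle\langle\mathbf{h}';\chi'|,
\]
and then use Lemma~\ref{LemDecomposeU} together with the eigenvalue equation~\eqref{EqEigenbasis} to compute, for any $U=\bigoplus_{\mathbf{h}} U_{g_{\mathbf{h}}}^{\otimes N}$,
\[
U A U^\dagger = \sum_{\mathbf{h},\mathbf{h}',\chi,\chi'} \chi(g_{\mathbf{h}})\,\overline{\chi'(g_{\mathbf{h}'})}\, a_{\mathbf{h},\mathbf{h}',\chi,\chi'}\,|\mathbf{h};\chi\rangle\langle\mathbf{h}';\chi'|.
\]
Demanding $UAU^\dagger = A$ for \emph{every} choice of tuple $(g_{\mathbf{h}})_{\mathbf{h}\in\mathcal{G}^{N-1}}$ is then equivalent, coefficient by coefficient, to the dichotomy: either $a_{\mathbf{h},\mathbf{h}',\chi,\chi'}=0$, or $\chi(g_{\mathbf{h}})=\chi'(g_{\mathbf{h}'})$ for every admissible pair $(g_{\mathbf{h}},g_{\mathbf{h}'})$.

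The remaining work is a short case analysis on the indices $(\mathbf{h},\chi)$ vs.\ $(\mathbf{h}',\chi')$, using the crucial fact from Lemma~\ref{LemDecomposeU} that the $g_{\mathbf{h}}$ may be chosen \emph{independently} for distinct values of $\mathbf{h}$. If $\chi=\chi'=\mathbf{1}$, the character condition is vacuous and $a_{\mathbf{h},\mathbf{h}',\mathbf{1},\mathbf{1}}$ is unconstrained; these contributions assemble into the block $A_{\rm phys}\in\mathcal{A}_{\rm phys}$ supported on $\mathcal{H}_{\rm phys}$. If $\mathbf{h}=\mathbf{h}'$ but $\chi\neq\chi'$, the condition reads $\chi(g)=\chi'(g)$ for all $g\in\mathcal{G}$, which is false for some $g$ because $\chi\neq\chi'$; hence $a_{\mathbf{h},\mathbf{h},\chi,\chi'}=0$. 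If $\mathbf{h}\neq\mathbf{h}'$ and, say, $\chi\neq\mathbf{1}$, I would set $g_{\mathbf{h}'}=e$ (so $\chi'(g_{\mathbf{h}'})=1$) and pick $g_{\mathbf{h}}$ with $\chi(g_{\mathbf{h}})\neq 1$, again forcing $a_{\mathbf{h},\mathbf{h}',\chi,\chi'}=0$. The only remaining surviving off-diagonal-in-$\chi$ contributions are the diagonal terms $\mathbf{h}=\mathbf{h}'$, $\chi=\chi'\neq\mathbf{1}$, where the character condition is automatic; these contribute the $\bigoplus_{\mathbf{h}}\bigoplus_{\chi\neq\mathbf{1}} a_{\mathbf{h};\chi}|\mathbf{h};\chi\rangle\langle\mathbf{h};\chi|$ summand. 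Conversely, any operator of the asserted form manifestly commutes with every $U\in\mathcal{U}_{\rm sym}$, completing both inclusions.

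The only mildly delicate point — and the one I would treat carefully — is the independence argument in the off-diagonal case $\mathbf{h}\neq\mathbf{h}'$: one must explicitly invoke that $g_{\mathbf{h}}$ and $g_{\mathbf{h}'}$ can be chosen independently in~\eqref{eqRep}, since otherwise the character condition could be satisfied accidentally along a constrained trajectory. Everything else is bookkeeping with orthogonality of characters and the block structure of $\mathcal{U}_{\rm sym}$ already established in Lemma~\ref{LemDecomposeU}.
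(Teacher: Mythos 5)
Your argument is correct and follows essentially the same route as the paper: the paper also expands $A$ in the $|\mathbf{h};\chi\rangle$-eigenbasis, conjugates by $U=\bigoplus_{\mathbf{h}}U_{g_{\mathbf{h}}}^{\otimes N}$ to obtain the coefficient condition $a_{\mathbf{h},\mathbf{h}',\chi,\chi'}=0$ or $\chi(g_{\mathbf{h}})=\chi'(g_{\mathbf{h}'})$ for all admissible choices, and then runs the identical case analysis, including the choice $g_{\mathbf{h}'}=e$ with $\chi(g_{\mathbf{h}})\neq 1$ in the off-diagonal case. The independence of the $g_{\mathbf{h}}$ across distinct $\mathbf{h}$, which you rightly flag as the crux, is exactly what Lemma~\ref{LemDecomposeU} supplies, so no gap remains.
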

A few words are in place regarding the physical interpretation of these observables. Due to Eq.~(\ref{eqChi}), $\chi$ labels the irreps of the global translations on state space. As already mentioned in Example~\ref{ExCyclic}, they can thus be interpreted as a discrete analog of (an exponentiated version of) \emph{total momentum}. We can hence interpret the operator $|\mathbf{h};\chi\rangle\langle\mathbf{h};\chi|$ as describing a projective measurement that asks \emph{whether the relation between the particles is $\mathbf{h}$, and whether the total momentum corresponds to $\chi$}. Since this operator is contained in $\mathcal{A}_{\rm inv}$, this measurement can be performed by an observer without access to the external reference frame. In the special case if $\chi=\mathbf{1}$, i.e.\ on the relational subspace $\ch_{\rm phys}$ which corresponds to ``total momentum zero'', such an observer can also perform measurements that correspond to superpositions of different particle relations $\mathbf{h}$. However, for ``non-zero total momentum'' ($\chi\neq\mathbf{1}$), we obtain an emergent superselection rule that forbids such superpositions and the corresponding measurements.

The reader familiar with constraint quantization will notice that the invariant observables $A_{\rm phys}$ on the subspace $\mathcal{H}_{\rm phys}$ are the finite group analog of so-called Dirac observables~\cite{Dirac,Rovellibook,Thiemann}. Given some continuous group that is generated by an algebra of constraints, Dirac observables are operators that commute with the constraint operators (up to terms proportional to the constraints themselves). As such, they are invariant under the group generated by the constraints and observables on solutions to the constraints, i.e.\ on the so-called physical Hilbert space.

There is, however, a subtlety in this analogy: usually the (continuous) group generated by the constraints would be the analog of the `classical' group $\cg$ given here which is a strict subgroup of $\cg_{\rm sym}$. Thus, it is natural to ask whether the $\mathcal{G}_{\rm sym}$-invariant subspace $\mathcal{H}_{\rm phys}$ is a strict subset of the subspace of $\mathcal{G}$-invariant states. Accordingly, one may wonder whether the entire invariant algebra $\ca_{\rm inv}$ defined in terms of invariance under the larger group $\cg_{\rm sym}$ in Definition~\ref{def_invobs} is a strict subset of the algebra that is invariant under the smaller group $\cg$. It is this latter algebra which thus gives rise to the actual analog of Dirac observables for the finite groups considered here. We will address these questions in the next subsection.

\subsection{Group averaging states}
\label{SubsecGroupAveraging}
Although we work with a representation of the larger group $\cg_{\rm sym}$, Eqs.~\eqref{eqRep}--\eqref{DefH1} indicate that the total Hilbert space decomposes naturally in terms of the representation of the smaller group $\cg$; e.g., the physical Hilbert space is also precisely the subspace invariant under $\mathcal{G}$. We will now clarify this observation by considering the corresponding (coherent) group averaging operations, 
\be
\!\!\!\Pi_{\rm phys}:=\frac{1}{|\mathcal{U}_{\rm sym}|}\,\sum_{U\in\cu_{\rm sym}}\!\!\!U,\quad\Pi_{\rm phys}':=\frac{1}{|\cg|}\,\sum_{g\in\cg}\,U_g^{\otimes N},\label{projector}
\ee
which are standard in constraint quantization~\cite{Giulini:1998kf,Marolf:2000iq,Thiemann}, and for which the following holds.
\begin{lemma}\label{lem_proj}
The two coherent group averaging operations coincide, $\Pi_{\rm phys}=\Pi_{\rm phys}'$, and $\Pi_{\rm phys}$ is the orthogonal projector onto the relational subspace $\ch_{\rm phys}$.
\end{lemma}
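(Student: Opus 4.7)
The plan is to exploit the common eigenbasis $\{|\mathbf{h};\chi\rangle\}_{\mathbf{h}\in\mathcal{G}^{N-1},\chi\in\mathcal{\hat G}}$ of $\mathcal{H}^{\otimes N}$ introduced in Eq.~(\ref{eqChi}) on which both the global translations and all of $\mathcal{U}_{\rm sym}$ act diagonally. Since $\Pi_{\rm phys}$ and $\Pi_{\rm phys}'$ are linear, it suffices to verify the three assertions (equality of the two operators, idempotence, and the identification of the range with $\mathcal{H}_{\rm phys}$) on this one orthonormal basis.

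First I would handle $\Pi_{\rm phys}'$. By Eq.~(\ref{EqEigenbasis}), $U_g^{\otimes N}|\mathbf{h};\chi\rangle = \chi(g)|\mathbf{h};\chi\rangle$, so
\[
   \Pi_{\rm phys}'|\mathbf{h};\chi\rangle = \Bigl(\tfrac{1}{|\mathcal{G}|}\sum_{g\in\mathcal{G}}\chi(g)\Bigr)|\mathbf{h};\chi\rangle = \delta_{\chi,\mathbf{1}}\,|\mathbf{h};\chi\rangle,
\]
using the standard character orthogonality relation (the trivial character gives $1$, any non-trivial $\chi$ gives $0$). Comparing with the definition~(\ref{DefH1}) of $\mathcal{H}_{\rm phys}$, this shows $\Pi_{\rm phys}'$ is the orthogonal projector onto $\mathcal{H}_{\rm phys}$.

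Next I would treat $\Pi_{\rm phys}$. By Lemma~\ref{LemDecomposeU}, an element $U\in\mathcal{U}_{\rm sym}$ is specified uniquely by a tuple $(g_{\mathbf{h}})_{\mathbf{h}\in\mathcal{G}^{N-1}}\in\mathcal{G}^{|\mathcal{G}|^{N-1}}$, with $U$ acting on the sector $\mathcal{H}_{\mathbf{h}}$ as $U_{g_{\mathbf{h}}}^{\otimes N}$; hence $|\mathcal{U}_{\rm sym}|=|\mathcal{G}|^{|\mathcal{G}|^{N-1}}$. Fixing $\mathbf{h}_0$ and applying Eq.~(\ref{EqEigenbasis}),
\[
   U|\mathbf{h}_0;\chi\rangle = \chi(g_{\mathbf{h}_0})|\mathbf{h}_0;\chi\rangle,
\]
so only the single component $g_{\mathbf{h}_0}$ enters, and summing over the remaining $|\mathcal{G}|^{N-1}-1$ independent components just contributes a factor $|\mathcal{G}|^{|\mathcal{G}|^{N-1}-1}$. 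This telescopes into
\[
   \Pi_{\rm phys}|\mathbf{h}_0;\chi\rangle = \tfrac{1}{|\mathcal{G}|}\sum_{g\in\mathcal{G}}\chi(g)\,|\mathbf{h}_0;\chi\rangle = \delta_{\chi,\mathbf{1}}\,|\mathbf{h}_0;\chi\rangle,
\]
which is exactly the same action as $\Pi_{\rm phys}'$ computed above. Hence $\Pi_{\rm phys}=\Pi_{\rm phys}'$, and this common operator is the orthogonal projector onto $\mathcal{H}_{\rm phys}$.

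The main (minor) obstacle is the combinatorial bookkeeping in the second step: one has to be careful that parametrizing $\mathcal{U}_{\rm sym}$ as an independent product of $|\mathcal{G}|$-many copies of $\mathcal{G}$ (one copy per relational sector $\mathbf{h}$) correctly reproduces the normalization $1/|\mathcal{U}_{\rm sym}|$, so that the sums over the ``inactive'' sectors cancel and leave exactly the $\mathcal{G}$-average appearing in $\Pi_{\rm phys}'$. Everything else reduces to character orthogonality.
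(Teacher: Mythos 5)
Your proof is correct, but it takes a genuinely different route from the paper's. The paper first checks self-adjointness and idempotence of both operators directly, then identifies ${\rm im}(\Pi'_{\rm phys})$ by applying $\Pi'_{\rm phys}$ to the configuration basis $|g,\mathbf{h}g\rangle$ (obtaining $|\mathbf{h};\mathbf{1}\rangle/\sqrt{|\mathcal{G}|}$), and finally gets $\Pi_{\rm phys}=\Pi'_{\rm phys}$ by a two-sided inclusion argument for ${\rm im}(\Pi_{\rm phys})$: invariance of the image under all of $\mathcal{U}_{\rm sym}$ gives ${\rm im}(\Pi_{\rm phys})\subseteq\mathcal{H}_{\rm phys}$, and $\Pi_{\rm phys}|\mathbf{h};\mathbf{1}\rangle=|\mathbf{h};\mathbf{1}\rangle$ gives the reverse inclusion. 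You instead diagonalize both averages simultaneously in the character basis $|\mathbf{h};\chi\rangle$ and show each acts as multiplication by $\delta_{\chi,\mathbf{1}}$, which delivers the equality, the orthogonality of the projection, and the identification of the range in one stroke. Your route is more uniform (character orthogonality does all the work, and the equality of the two operators is manifest rather than inferred from a common image), at the price of having to justify the factorization of the sum over $\mathcal{U}_{\rm sym}$ --- in particular the implicit fact that the parametrization $U\leftrightarrow(g_{\mathbf{h}})_{\mathbf{h}}$ from Lemma~\ref{LemDecomposeU} is a bijection, so that $|\mathcal{U}_{\rm sym}|=|\mathcal{G}|^{|\mathcal{G}|^{N-1}}$; this holds because $U_g^{\otimes N}$ restricted to $\mathcal{H}_{\mathbf{h}}$ is the identity only for $g=e$, but it deserves a sentence. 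One slip in your closing remark: $\mathcal{U}_{\rm sym}$ is an independent product of $|\mathcal{G}|^{N-1}$ copies of $\mathcal{G}$ (one per sector $\mathbf{h}$), not $|\mathcal{G}|$ copies; your displayed computation uses the correct count.
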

\begin{proof}
Direct calculation shows that $\Pi_{\rm phys}^\dag=\Pi_{\rm phys}$ and $\Pi_{\rm phys}'^\dag=\Pi_{\rm phys}'$, and that $\Pi_{\rm phys}'=\Pi_{\rm phys}'^2$ and $\Pi_{\rm phys}=\Pi_{\rm phys}^2$. Thus, $\Pi_{\rm phys}$ and $\Pi_{\rm phys}'$ are orthogonal projectors. Since $\mathcal{H}^{\otimes N}$ is spanned by the $|g,\mathbf{h}g\rangle$ for $g\in\mathcal{G}$ and $\mathbf{h}\in\mathcal{G}^{N-1}$, the image ${\rm im}(\Pi_{\rm phys}')$ of $\Pi_{\rm phys}'$ is spanned by
\ba
\Pi'_{\rm phys}\ket{g,\mathbf{h}g} = \frac{1}{|\cg|}\,\sum_{g'\in\cg}\,U_{g'}^{\otimes N}\ket{g,\mathbf{h}g} = \frac{|\mathbf{h};\mathbf{1}\rangle}{\sqrt{|\cg|}}.
\label{eqProjPhys}
\ea
Since these states span $\mathcal{H}_{\rm phys}$, this proves that $\Pi_{\rm phys}'$ is the orthogonal projector onto the physical subspace. By construction, every $|\psi\rangle\in {\rm im}(\Pi_{\rm phys})$ is invariant under every $U\in\mathcal{U}_{\rm sym}$, and thus in particular under every $U_g^{\otimes N}\in\mathcal{U}_{\rm sym}$. Thus, ${\rm im}(\Pi_{\rm phys})\subseteq \mathcal{H}_{\rm phys}$. On the other hand, decomposing $U\in\mathcal{U}_{\rm sym}$ as in~(\ref{eqRep}), we get
\be
   \Pi_{\rm phys}|\mathbf{h};\mathbf{1}\rangle=\frac 1 {|\mathcal{U}_{\rm sym}|} \sum_{U\in\mathcal{U}_{\rm sym}}U_{g_{\mathbf{h}}}^{\otimes N}|\mathbf{h};\mathbf{1}\rangle =|\mathbf{h};\mathbf{1}\rangle
\ee
since $|\mathbf{h};\mathbf{1}\rangle$ is invariant under global translations. Thus, ${\rm im}(\Pi_{\rm phys})\supseteq \mathcal{H}_{\rm phys}$, and so $\Pi_{\rm phys}=\Pi_{\rm phys}'$.
\end{proof}
In conclusion, any basis state in $\ch_\mathbf{h}$ projects to the same invariant subnormalized state $\Pi_{\rm phys}\ket{g,\mathbf{h}g}=\Pi_{\rm phys}\ket{g',\mathbf{h}g'}=\frac{1}{\sqrt{|\cg|}}|\mathbf{h};\mathbf{1}\rangle$ under coherent group averaging, and it does not matter whether one averages with respect to the larger group $\cg_{\rm sym}$ or the smaller $\cg$.

However, we will now see that the set of invariant observables, i.e.\ the observables resulting from \emph{incoherent} group averaging ($\mathcal{G}$-twirling), differs for the two choices, but only outside of the relational subspace $\ch_{\rm phys}$. These operations are defined by
\begin{eqnarray}
   \Pi_{\rm inv}(\rho)&:=&\frac 1 {|\mathcal{U}_{\rm sym}|} \sum_{U\in\mathcal{U}_{\rm sym}} U\rho U^\dagger,\\
   \Pi'_{\rm inv}(\rho)&:=&\frac 1 {|\mathcal{G}|} \sum_{g\in\mathcal{G}} U_g^{\otimes N}\rho (U_g^{\otimes N})^\dagger.
\end{eqnarray}
It is well-known, and easy to check by direct calculation, that these maps are projectors, i.e.\ $\Pi_{\rm inv}^2=\Pi_{\rm inv}$ and ${\Pi'}_{\rm inv}^2=\Pi'_{\rm inv}$, and that they are orthogonal with respect to the Hilbert-Schmidt inner product, i.e.\ for all $A,B\in\mathcal{L}(\mathcal{H}^{\otimes N})$,
\be
   {\rm tr}\left(A^\dagger \Pi_{\rm inv}(B)\right)={\rm tr}\left(\Pi_{\rm inv}(A)^\dagger B\right).
\ee
If $A\in\mathcal{L}(\mathcal{H}^{\otimes N})$ satisfies $[U,A]=0$ for all $U\in\mathcal{U}_{\rm sym}$, then $\Pi_{\rm inv}(A)=A$. Conversely, if $B\in{\rm im}(\Pi_{\rm inv})$, then $UBU^\dagger=B$, i.e.\ $[U,B]=0$, for all $U\in\mathcal{U}_{\rm sym}$. Thus, $\Pi_{\rm inv}$ projects into the invariant algebra $\mathcal{A}_{\rm inv}$. Similarly, $\Pi'_{\rm inv}$ projects into
\[
   \mathcal{A}'_{\rm inv}=\{A\in\mathcal{L}(\mathcal{H}^{\otimes N})\,\,|\,\, [U_g^{\otimes N},A]=0\mbox{ for all }g\in\mathcal{G}\}.
\]
Clearly, $\mathcal{A}_{\rm inv}\subseteq\mathcal{A}'_{\rm inv}$, but are these algebras equal? The following lemma collects the above insights, and answers this question in the negative.
\begin{theorem}
\label{TheAlgebraProjections}
$\Pi_{\rm inv}$ is the orthogonal projector onto the invariant subalgebra $\mathcal{A}_{\rm inv}$. It can also be written in the form
\[
   \Pi_{\rm inv}(\rho)=\Pi_{\rm phys}\rho\Pi_{\rm phys}+\sum_{\mathbf{h},\chi\neq\mathbf{1}} \langle\mathbf{h};\chi|\rho|\mathbf{h};\chi\rangle |\mathbf{h};\chi\rangle\langle\mathbf{h};\chi|.
\]
Similarly, $\Pi'_{\rm inv}$ is the orthogonal projector onto the strictly larger subalgebra
\be
   \mathcal{A}'_{\rm inv}=\left\{\bigoplus_{\chi\in{\mathcal{\hat G}}}A_{\chi}\right\}=\left\{A_{\rm phys}\oplus\bigoplus_{\chi\neq\mathbf{1}}A_{\chi}\right\},
\ee
where every $A_{\chi}$ is an arbitrary operator supported on the subspace $\mathcal{H}_{\chi}:={\rm span}\{|\mathbf{h};\chi\rangle\,\,|\,\,\mathbf{h}\in\mathcal{G}^{N-1}\}$ (note that $\mathcal{H}_{\mathbf{1}}=\mathcal{H}_{\rm phys}$, so $A_{\rm phys}=A_{\mathbf{1}}$).
\end{theorem}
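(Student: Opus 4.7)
The plan is to work everything out in the eigenbasis $\{|\mathbf{h};\chi\rangle\}_{\mathbf{h}\in\cg^{N-1},\chi\in\hat{\cg}}$ and use character orthogonality. The preceding paragraphs already establish that $\Pi_{\rm inv}$ and $\Pi'_{\rm inv}$ are orthogonal idempotents (in Hilbert--Schmidt) and that their images are contained in $\ca_{\rm inv}$ resp.\ $\ca'_{\rm inv}$; the observation that $\Pi_{\rm inv}(A)=A$ for $A\in\ca_{\rm inv}$ (and likewise for the primed versions) promotes these inclusions to equalities. So the real work is to (a) derive the closed-form expression for $\Pi_{\rm inv}(\rho)$, and (b) determine $\ca'_{\rm inv}$ explicitly and show strict containment.

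For part (a), I would expand $\rho=\sum \rho_{\mathbf{h},\mathbf{h}';\chi,\chi'}|\mathbf{h};\chi\rangle\langle\mathbf{h}';\chi'|$ and use Lemma~\ref{LemDecomposeU} together with Eq.~\eqref{EqEigenbasis} to get
\[
U\rho U^\dagger=\sum \chi(g_{\mathbf{h}})\,\overline{\chi'(g_{\mathbf{h}'})}\,\rho_{\mathbf{h},\mathbf{h}';\chi,\chi'}|\mathbf{h};\chi\rangle\langle\mathbf{h}';\chi'|.
\]
Since a generic $U\in\cu_{\rm sym}$ is parametrized by the independent choices $\{g_{\mathbf{h}}\}_{\mathbf{h}}$, averaging over $\cu_{\rm sym}$ factorizes into independent uniform averages over each $g_{\mathbf{h}}\in\cg$. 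Now two cases arise: if $\mathbf{h}\neq\mathbf{h}'$, the variables $g_{\mathbf{h}}$ and $g_{\mathbf{h}'}$ are summed separately, and character orthogonality $\frac{1}{|\cg|}\sum_{g}\chi(g)=\delta_{\chi,\mathbf{1}}$ forces $\chi=\chi'=\mathbf{1}$; if $\mathbf{h}=\mathbf{h}'$, a single sum remains and $\frac{1}{|\cg|}\sum_{g}\chi(g)\overline{\chi'(g)}=\delta_{\chi,\chi'}$ forces $\chi=\chi'$. Collecting the surviving terms, the $(\chi,\chi')=(\mathbf{1},\mathbf{1})$ contributions (for all $\mathbf{h},\mathbf{h}'$) reassemble into $\Pi_{\rm phys}\rho\Pi_{\rm phys}$ by the definition of $\ch_{\rm phys}$ and Lemma~\ref{lem_proj}, while the $\chi=\chi'\neq\mathbf{1}$, $\mathbf{h}=\mathbf{h}'$ contributions give exactly the diagonal sum $\sum_{\mathbf{h},\chi\neq\mathbf{1}}\langle\mathbf{h};\chi|\rho|\mathbf{h};\chi\rangle\,|\mathbf{h};\chi\rangle\langle\mathbf{h};\chi|$.

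For part (b), the analogous computation for $\Pi'_{\rm inv}$ is simpler: only a single translation parameter $g\in\cg$ is averaged, so every matrix element $|\mathbf{h};\chi\rangle\langle\mathbf{h}';\chi'|$ acquires the factor $\frac{1}{|\cg|}\sum_{g}\chi(g)\overline{\chi'(g)}=\delta_{\chi,\chi'}$, with no constraint on $\mathbf{h},\mathbf{h}'$. Hence the image of $\Pi'_{\rm inv}$ consists exactly of operators block-diagonal with respect to the character decomposition $\ch^{\otimes N}=\bigoplus_{\chi}\ch_{\chi}$ with $\ch_\chi={\rm span}\{|\mathbf{h};\chi\rangle\}_{\mathbf{h}}$, which is precisely the claimed form of $\ca'_{\rm inv}$. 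Strict containment $\ca_{\rm inv}\subsetneq\ca'_{\rm inv}$ follows by exhibiting, for any $\mathbf{h}\neq\mathbf{h}'$ and any $\chi\neq\mathbf{1}$, the operator $|\mathbf{h};\chi\rangle\langle\mathbf{h}';\chi|$: it lies in $\ca'_{\rm inv}$ by the block-diagonal characterization just derived, but is excluded from $\ca_{\rm inv}$ by Lemma~\ref{lem_algdecomp}.

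The only mildly delicate point is the case split $\mathbf{h}=\mathbf{h}'$ vs.\ $\mathbf{h}\neq\mathbf{h}'$ in part (a), where one must remember that the $g_{\mathbf{h}}$'s are independent labels of $\cu_{\rm sym}$ and therefore the two factors in $\chi(g_{\mathbf{h}})\overline{\chi'(g_{\mathbf{h}'})}$ decouple only in the latter case; this is precisely what produces the asymmetric treatment of the trivial and non-trivial character sectors in the final formula and, ultimately, the strictness of the inclusion $\ca_{\rm inv}\subsetneq\ca'_{\rm inv}$.
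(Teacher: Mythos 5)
Your proof is correct, and it reaches the result by a somewhat different route than the paper. For the formula for $\Pi_{\rm inv}(\rho)$, the paper does not redo the twirl: it observes that the claimed expression is a Hilbert--Schmidt-orthogonal projection whose image is exactly the algebra $\mathcal{A}_{\rm inv}$ already characterized in Lemma~\ref{lem_algdecomp}, and then identifies it with $\Pi_{\rm inv}$ by uniqueness of the orthogonal projector onto a given subspace. You instead compute the average over $\mathcal{U}_{\rm sym}$ directly, exploiting that the $g_{\mathbf{h}}$ are independent parameters so the group average factorizes, and letting character orthogonality ($\frac{1}{|\mathcal{G}|}\sum_g\chi(g)\overline{\chi'(g)}=\delta_{\chi,\chi'}$, and $=\delta_{\chi,\mathbf{1}}\delta_{\chi',\mathbf{1}}$ when the two sums decouple) produce the case split; this is essentially the same character computation that the paper uses \emph{before} the theorem to establish Lemma~\ref{lem_algdecomp}, repackaged in averaged rather than invariance form, so it buys a self-contained derivation at the cost of some redundancy. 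For $\mathcal{A}'_{\rm inv}$ the paper invokes Schur's lemma on the decomposition of $g\mapsto U_g^{\otimes N}$ into irreps, whereas your explicit single-parameter twirl is more elementary and additionally hands you the block-diagonal form of $\Pi'_{\rm inv}$ itself. Your witness $|\mathbf{h};\chi\rangle\langle\mathbf{h}';\chi|$ for strictness is fine (implicitly assuming $N\geq 2$ and $|\mathcal{G}|\geq 2$ so that such $\mathbf{h}\neq\mathbf{h}'$ and $\chi\neq\mathbf{1}$ exist, which is the regime of interest throughout the paper).
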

\begin{proof}
To see the claimed form of $\Pi_{\rm inv}$, note that the combination of projections is a Hilbert-Schmidt-orthogonal projection with image $\mathcal{A}_{\rm inv}$. It remains to be shown that $\mathcal{A}'_{\rm inv}$ has the claimed form. Note that $g\mapsto U_g^{\otimes N}$ is a representation of the finite Abelian group $\mathcal{G}$. It thus decomposes into one-dimensional irreps, and the equivalence classes of irreps are labelled by the characters $\chi$. Thus, the form of $\mathcal{A}'_{\rm inv}$ follows again from Schur's lemma.
\end{proof}
This theorem has interesting implications for the physical properties of $\mathcal{G}$-systems $S$. Recall our initial scenario as depicted in Figure~\ref{fig_symmetry}. Suppose that we \emph{only} demand symmetry of $S$ with respect to ordinary, \emph{unconditional} global translations $U_g^{\otimes N}$, and ask which observables can be measured by an observer without access to the external relatum. The answer is: all observables in $\mathcal{A}'_{\rm inv}$. On the other hand, if we demand symmetry with respect to all \emph{conditional} global translations in $\mathcal{U}_{\rm sym}$ --- and we will soon see that the QRF transformations of Refs.~\cite{Giacomini,Vanrietvelde,Hamette} are among those --- then this turns out to be a more stringent requirement. In this case, fewer observables are measurable, namely only those in $\mathcal{A}_{\rm inv}$.

In this sense, QRF transformations have fewer frame-independent observables than classical transformations: if all QRF transformations are symmetries, then superpositions of different particle relations $\mathbf{h}$ are forbidden by an emergent superselection rule whenever the ``total momentum is non-zero'', i.e.\ $\chi\neq \mathbf{1}$. On the other hand, if we only demand that global classical translations are symmetries, then these superpositions remain allowed.

Let us return to the analogy with contraint quantization discussed in the previous subsection. Lemma~\ref{lem_proj} and Theorem~\ref{TheAlgebraProjections} show that an observable $A_{\rm phys}$, i.e.\ the analog of a Dirac observable in our context, does \emph{not} depend on whether it is constructed relative to $\cg_{\rm sym}$ or its subgroup $\cg$. Later, we will also see that $\mathcal{A}_{\rm phys}$ is the finite group analog of the algebra generated by so-called \emph{relational} Dirac observables. These are invariant observables that encode relations between the subsystems, and they are common use in canonical quantum gravity \cite{Rovellibook,Thiemann,Tambornino,Rovelli1,Rovelli2,Rovelli3,Dittrich1,Dittrich2,Chataignier,Hoehn:2018aqt,Hoehn:2018whn,Hoehn:2019owq,Hoehn:2020epv}. This explains why we have called the observables in $\mathcal{A}_{\rm phys}$ ``relational observables''. They will become crucial in the resolution of the paradox of the third particle in Section~\ref{Section:Paradox}, and they will  turn out to be tomographically complete for the QRF states which we introduce in the next subsection.

Recall the notion of equivalence classes $[\rho]$ from Section~\ref{SecOperationalStructural}. We are now ready to introduce this notion formally for $\mathcal{G}$-systems:
\begin{definition}
\label{DefEquivalence}
We call two quantum states $\rho,\sigma\in\mathcal{S}(\mathcal{H}^{\otimes N})$ \emph{symmetry-equivalent}, and write $\rho\simeq\sigma$, if there exists some symmetry $U\in\mathcal{U}_{\rm sym}	$ such that $\sigma=U\rho U^\dagger$. We call them \emph{observationally equivalent}, and write $\rho\sim\sigma$, if ${\rm tr}(A\rho)={\rm tr}(A\sigma)$ for all invariant observables (and thus all operators) $A\in\mathcal{A}_{\rm inv}$.
\end{definition}
Clearly, if $\rho\simeq\sigma$ then $\rho\sim\sigma$, but the converse is not in general true. The equivalence class $[\rho]$ from Section~\ref{SecOperationalStructural} can now be defined as $[\rho]=\{\sigma\,\,|\,\,\sigma\simeq\rho\}$. In the case of pure state vectors $|\psi\rangle,|\psi'\rangle$, we must allow global phases and write $\psi\simeq\psi'$ if and only if there is some $U\in\mathcal{U}_{\rm sym}^*$ such that $|\psi'\rangle=U|\psi\rangle$.

Observational equivalence can be characterized in terms of the projection into the invariant subalgebra:
\begin{lemma}
\label{LemEquivProj}
Two states $\rho$ and $\sigma$ are observationally equivalent, i.e.\ $\rho\sim\sigma$, if and only if
\be
   \Pi_{\rm inv}(\rho)=\Pi_{\rm inv}(\sigma).
\ee
\end{lemma}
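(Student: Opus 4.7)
The plan is to exploit the two features of $\Pi_{\rm inv}$ already established in Theorem~\ref{TheAlgebraProjections}: (i) it is the Hilbert--Schmidt orthogonal projector onto $\mathcal{A}_{\rm inv}$, so $\Pi_{\rm inv}(A)=A$ whenever $A\in\mathcal{A}_{\rm inv}$; and (ii) it is self-adjoint with respect to the Hilbert--Schmidt inner product, i.e.\ $\tr(X^\dagger \Pi_{\rm inv}(Y))=\tr(\Pi_{\rm inv}(X)^\dagger Y)$ for all operators $X,Y$. I will also use the obvious fact that $\mathcal{A}_{\rm inv}$ is closed under taking adjoints, since $[U,A]=0$ for a unitary $U$ implies $[U,A^\dagger]=0$.

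For the ``if'' direction, assume $\Pi_{\rm inv}(\rho)=\Pi_{\rm inv}(\sigma)$ and pick an arbitrary $A\in\mathcal{A}_{\rm inv}$. Then $A^\dagger\in\mathcal{A}_{\rm inv}$ as well, so using (i) and (ii),
\begin{equation}
\tr(A\rho)=\tr\bigl(\Pi_{\rm inv}(A^\dagger)^\dagger\,\rho\bigr)=\tr\bigl((A^\dagger)^\dagger\,\Pi_{\rm inv}(\rho)\bigr)=\tr(A\,\Pi_{\rm inv}(\rho)),
\end{equation}
and the analogous identity holds for $\sigma$. Since $\Pi_{\rm inv}(\rho)=\Pi_{\rm inv}(\sigma)$ by hypothesis, this yields $\tr(A\rho)=\tr(A\sigma)$ for every $A\in\mathcal{A}_{\rm inv}$, i.e.\ $\rho\sim\sigma$.

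For the ``only if'' direction, assume $\rho\sim\sigma$ and set $D:=\Pi_{\rm inv}(\rho)-\Pi_{\rm inv}(\sigma)$. For any operator $X\in\mathcal{L}(\mathcal{H}^{\otimes N})$, property (ii) gives
\begin{equation}
\tr(X^\dagger D)=\tr\bigl(\Pi_{\rm inv}(X)^\dagger\rho\bigr)-\tr\bigl(\Pi_{\rm inv}(X)^\dagger\sigma\bigr).
\end{equation}
Because $\Pi_{\rm inv}(X)\in\mathcal{A}_{\rm inv}$ and $\mathcal{A}_{\rm inv}$ is closed under adjoints, $\Pi_{\rm inv}(X)^\dagger\in\mathcal{A}_{\rm inv}$, so observational equivalence makes the two traces equal and hence $\tr(X^\dagger D)=0$. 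Choosing $X=D$ and using nondegeneracy of the Hilbert--Schmidt inner product forces $D=0$, which is the desired equality.

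The argument is essentially a two-line computation once one has Theorem~\ref{TheAlgebraProjections} in hand, so I do not anticipate a genuine obstacle; the only point requiring a moment's care is to track adjoints carefully (both of $A$ and of $X$) so that the self-adjointness identity for $\Pi_{\rm inv}$ is invoked in its correct form, and to note explicitly that $\mathcal{A}_{\rm inv}$ is $\dagger$-closed so that ``for all $A\in\mathcal{A}_{\rm inv}$'' can be freely replaced by ``for all $A^\dagger\in\mathcal{A}_{\rm inv}$''.
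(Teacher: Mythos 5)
Your proof is correct and follows essentially the same route as the paper: both arguments rest on $\Pi_{\rm inv}$ being the Hilbert--Schmidt-orthogonal (hence self-adjoint) projector onto the $\dagger$-closed algebra $\mathcal{A}_{\rm inv}$, combined with nondegeneracy of the Hilbert--Schmidt inner product; the paper merely packages the two directions into a single chain of equivalences. No gap to report.
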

\begin{proof}
This follows from the chain of equivalences
\begin{eqnarray*}
&&\rho\sim\sigma\nn\\
&\Leftrightarrow& \langle A,\rho\rangle_{\rm HS}=\langle A,\sigma\rangle_{\rm HS}\quad\forall A\in\mathcal{A}_{\rm inv}\\
&\Leftrightarrow& \langle \Pi_{\rm inv}(B),\rho\rangle_{\rm HS}=\langle \Pi_{\rm inv}(B),\sigma\rangle_{\rm HS}\quad\forall B\in\mathcal{L}(\mathcal{H}^{\otimes N})\\
&\Leftrightarrow& \langle B,\Pi_{\rm inv}(\rho)\rangle_{\rm HS}=\langle B,\Pi_{\rm inv}(\sigma)\rangle_{\rm HS}\quad\forall B\in\mathcal{L}(\mathcal{H}^{\otimes N})\\\
&\Leftrightarrow& \Pi_{\rm inv}(\rho)=\Pi_{\rm inv}(\sigma).
\end{eqnarray*}
\vskip -2em
\end{proof}

\subsection{Alignable states as states with a canonical representation}
With these technical insights at hand, we are ready to return to the discussion of Section~\ref{SecOperationalStructural}. In the structural approach to QRFs, we ask whether a given state has a natural representation, depending only on internal data, such that the communication task of Figure~\ref{fig_structural} can be successfully accomplished. In the following, let us focus on pure states $|\psi\rangle\in\mathcal{H}^{\otimes N}$ for simplicity. Our task is to find another state $|\psi'\rangle\in\mathcal{H}^{\otimes N}$ that is symmetry-equivalent to $|\psi\rangle$ and that is in some sense distinguished, i.e.\ yields a ``canonical choice'' for describing the set of symmetry-equivalent states, cf.\ Section~\ref{sssec_structural}.

In general, there may be many different possible ways to define such a ``canonical choice''. Let us pick one possible choice. Suppose that we fix one of the particles, say, particle $i$, where $1\leq i \leq N$. Can we set the external reference frame such that this particle ends up at the ``origin'' --- the unit element of the group? In other words, can we align our state ``relative to particle $i$''? This is certainly possible in classical mechanics of $N$ point particles in one dimension, given translation-invariance. Classically, it would indeed define us a unique representation. We will now see that a similar construction can be done for $\mathcal{G}$-systems, and that it leads to the notion of QRF of Refs.~\cite{Giacomini,Hamette}.
\begin{definition}
\label{DefAlignable}
Let $i\in\{1,2,\ldots,N\}$. A pure state $|\psi\rangle\in\mathcal{H}^{\otimes N}$ is called \emph{$i$-alignable} if there exists some state $|\psi'\rangle\in\mathcal{H}^{\otimes N}$ with $\psi\simeq\psi'$ such that
\be
   |\psi'\rangle\equiv |\psi'\rangle_{1,\ldots,N}=|e\rangle_i\otimes |\varphi\rangle_{1,\ldots,i-1,i+1,\ldots,N}.
\ee
In the following, we will also use the notation $|\varphi\rangle_{\bar i}$ for the vector $|\varphi\rangle_{1,\ldots,i-1,i+1,\ldots,N}$.
\end{definition}
The state $|\varphi\rangle_{\bar{i}}$ in Definition~\ref{DefAlignable} is exactly what is interpreted in Refs.~\cite{Giacomini,Vanrietvelde,Hamette} as ``the state of the remaining $N-1$ particles as seen by particle $i$''. Similarly, we will thus interpret $|e\rangle_i\otimes|\varphi\rangle_{\bar{i}}$ as the description of the $N$ particle system relative to the QRF `perspective' defined by particle $i$ (which defines the origin).

Not all pure states are $i$-alignable. For example, the relational state $|\mathbf{h};\mathbf{1}\rangle$ is an element of the subspace $\mathcal{H}_{\rm phys}$, hence every $U\in\mathcal{U}_{\rm sym}^*$ satisfies $U|\mathbf{h};\mathbf{1}\rangle=e^{i\theta}|\mathbf{h};\mathbf{1}\rangle$ for some $\theta\in\R$. Thus this state cannot be $i$-alignable for any $i$. While devoid of alignable states, we will see later that $\ch_{\rm phys}$ contains the complete relational information about all alignable states.

To analyze this notion further, the following lemma will be useful.
\begin{lemma}\label{LemUniqueBasisVector}
For every $i\in\{1,\ldots,N\}$ and $\mathbf{h}\in\mathcal{G}^{N-1}$,	 there is a unique basis vector $|\mathbf{g}\rangle\in\mathcal{H}_{\mathbf{h}}$ with $g_i=g$, namely $|h_{i-1}^{-1}g,\mathbf{h} h_{i-1}^{-1}g\rangle$, with $h_0:=e$ the unit element of $\mathcal{G}$.
\end{lemma}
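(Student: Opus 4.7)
The plan is to unfold the definitions and exploit the fact that the basis vectors spanning $\mathcal{H}_{\mathbf{h}}$ are parametrized bijectively by $\mathcal{G}$, so fixing any one coordinate determines the parameter uniquely.

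First I would recall that $\mathcal{H}_{\mathbf{h}} = \mathrm{span}\{|g',\mathbf{h}g'\rangle : g'\in\mathcal{G}\}$, and that with the convention $h_0 := e$ the defining basis vector can be written as
\[
   |g',\mathbf{h}g'\rangle = |h_0 g', h_1 g', h_2 g', \ldots, h_{N-1} g'\rangle.
\]
In particular, the $i$-th tensor factor of this basis vector carries the group element $h_{i-1} g'$. Since different choices of $g'$ already differ in the first coordinate (namely $h_0 g' = g'$), these $|\mathcal{G}|$ basis vectors are pairwise orthogonal and hence in bijective correspondence with $g' \in \mathcal{G}$.

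Next, suppose $|\mathbf{g}\rangle \in \mathcal{H}_{\mathbf{h}}$ is a basis vector with $g_i = g$. By the parametrization above, there exists some $g' \in \mathcal{G}$ with $|\mathbf{g}\rangle = |g',\mathbf{h}g'\rangle$, and the constraint $g_i = g$ becomes the equation $h_{i-1} g' = g$ in the Abelian group $\mathcal{G}$. This equation has the unique solution $g' = h_{i-1}^{-1} g$, yielding the unique basis vector
\[
   |\mathbf{g}\rangle = |h_{i-1}^{-1} g, \mathbf{h} h_{i-1}^{-1} g\rangle,
\]
as claimed. Conversely, direct substitution shows that this basis vector lies in $\mathcal{H}_{\mathbf{h}}$ and has its $i$-th entry equal to $h_{i-1}\cdot h_{i-1}^{-1} g = g$, confirming existence.

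There is no substantive obstacle here: the statement is a bookkeeping lemma whose only subtlety is the convention $h_0 = e$ that makes the formula uniform in $i$, and the argument amounts to solving a single linear equation in the Abelian group $\mathcal{G}$.
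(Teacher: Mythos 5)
Your proof is correct and follows essentially the same route as the paper: parametrize the basis vectors of $\mathcal{H}_{\mathbf{h}}$ as $|l,\mathbf{h}l\rangle$ for $l\in\mathcal{G}$, note that the $i$-th coordinate is $h_{i-1}l$, and solve $h_{i-1}l=g$ uniquely for $l$. The only cosmetic difference is that you spell out the bijectivity of the parametrization and the converse check, which the paper leaves implicit; note also that the unique solvability needs nothing about $\mathcal{G}$ being Abelian, only cancellation.
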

\begin{proof}
If $|\mathbf{g}\rangle\in\ch_{\mathbf{h}}$ then $|\mathbf{g}\rangle=|l,\mathbf{h}l\rangle$ for some $l\in\cg$. Hence $g_i=h_{i-1}l$ which is equal to $g$ if and only if $l=h_{i-1}^{-1}g$.
\end{proof}
This allows us to show that the state $|\psi'\rangle$ in Definition~\ref{DefAlignable} is unique, and thus defines indeed a natural representation of the symmetry-equivalence class of $|\psi\rangle$:
\begin{lemma}
\label{LemUniqueness}
If $|\psi\rangle\in\mathcal{H}^{\otimes N}$ is $i$-alignable, then the state $|\varphi\rangle_{\bar i}$ in Definition~\ref{DefAlignable} is unique up to a global phase.
\end{lemma}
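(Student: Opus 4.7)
The plan is to exploit the uniqueness in Lemma~\ref{LemUniqueBasisVector} together with the explicit block form of $\mathcal{U}_{\rm sym}$ from Lemma~\ref{LemDecomposeU}. Suppose $|\psi\rangle$ is $i$-alignable via two aligned representatives
\[
  |\psi'\rangle=|e\rangle_i\otimes|\varphi\rangle_{\bar i},\qquad
  |\psi''\rangle=|e\rangle_i\otimes|\varphi'\rangle_{\bar i}.
\]
Since $\simeq$ is an equivalence relation, $\psi'\simeq\psi''$, so there exists $U\in\mathcal{U}_{\rm sym}^*$ with $|\psi''\rangle=U|\psi'\rangle$. My goal is to show $U$ acts on $|\psi'\rangle$ as a global phase.

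First I would expand both aligned states in the classical basis. By Lemma~\ref{LemUniqueBasisVector}, the constraint $g_i=e$ singles out, within each relation sector $\mathcal{H}_{\mathbf{h}}$, the \emph{unique} basis vector $|h_{i-1}^{-1},\mathbf{h}\,h_{i-1}^{-1}\rangle$ (with $h_0:=e$). Therefore the aligned form forces the expansions
\[
  |\psi'\rangle=\sum_{\mathbf{h}\in\mathcal{G}^{N-1}} c_{\mathbf{h}}\,|h_{i-1}^{-1},\mathbf{h}\,h_{i-1}^{-1}\rangle,\quad
  |\psi''\rangle=\sum_{\mathbf{h}\in\mathcal{G}^{N-1}} c'_{\mathbf{h}}\,|h_{i-1}^{-1},\mathbf{h}\,h_{i-1}^{-1}\rangle,
\]
with each sum having exactly one nonzero amplitude per sector $\mathbf{h}$.

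Next, using Lemma~\ref{LemDecomposeU} I would write $U=e^{i\theta}\bigoplus_{\mathbf{h}}U_{g_{\mathbf{h}}}^{\otimes N}$ and apply it sector by sector. For each $\mathbf{h}$ with $c_{\mathbf{h}}\neq 0$, the equation $U|\psi'_{\mathbf{h}}\rangle=|\psi''_{\mathbf{h}}\rangle$ becomes
\[
  e^{i\theta}c_{\mathbf{h}}\,|g_{\mathbf{h}}h_{i-1}^{-1},\mathbf{h}\,g_{\mathbf{h}}h_{i-1}^{-1}\rangle
  = c'_{\mathbf{h}}\,|h_{i-1}^{-1},\mathbf{h}\,h_{i-1}^{-1}\rangle.
\]
Since both sides are (up to scalar) basis vectors in $\mathcal{H}_{\mathbf{h}}$, comparing the $i$-th entry and invoking Lemma~\ref{LemUniqueBasisVector} again forces $g_{\mathbf{h}}=e$ and $c'_{\mathbf{h}}=e^{i\theta}c_{\mathbf{h}}$. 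For $\mathbf{h}$ with $c_{\mathbf{h}}=0$ the corresponding $c'_{\mathbf{h}}$ also vanishes (and the value of $g_{\mathbf{h}}$ is irrelevant). Putting the sectors back together gives $|\psi''\rangle=e^{i\theta}|\psi'\rangle$, hence $|\varphi'\rangle_{\bar i}=e^{i\theta}|\varphi\rangle_{\bar i}$, as claimed.

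I do not anticipate a serious obstacle: the entire argument reduces to bookkeeping once one recognizes that the alignment condition $g_i=e$ pins down a single basis vector per sector, which is exactly the content of Lemma~\ref{LemUniqueBasisVector}. The only point requiring mild care is remembering that $U\in\mathcal{U}_{\rm sym}^*$ may carry an overall phase $e^{i\theta}$ and that the $g_{\mathbf{h}}$'s in sectors with vanishing amplitude are unconstrained — neither subtlety affects the conclusion.
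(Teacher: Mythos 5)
Your proof is correct and follows essentially the same route as the paper's: both arguments rest on Lemma~\ref{LemUniqueBasisVector} pinning down a single basis vector per sector $\mathcal{H}_{\mathbf{h}}$ in an $i$-aligned state, and on the block form of $U\in\mathcal{U}_{\rm sym}$ preserving those sectors, which forces $U$ to act as the identity (up to the overall phase) on every sector carrying nonzero amplitude. Your explicit conclusion $g_{\mathbf{h}}=e$ on those sectors is just a slightly more concrete phrasing of the paper's statement that $V$ fixes the relevant basis vectors.
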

\begin{proof}
Suppose that both $\psi'$ and $\tilde\psi'$ are states that satisfy the conditions of Definition~\ref{DefAlignable}. In particular, this means that $\psi\simeq\psi'$ and $\psi\simeq\tilde\psi'$, and so there exists some $U\in\mathcal{U}_{\rm sym}^*$ such that $|\tilde\psi'\rangle=U|\psi'\rangle$, and $U=e^{i\theta}V$ for $V\in\mathcal{U}_{\rm sym}$. In $|\psi'\rangle=|e\rangle_i\otimes|\varphi\rangle_{\bar i}$ and $|\tilde\psi'\rangle=|e\rangle_i\otimes|\tilde\varphi\rangle_{\bar i}$, we decompose $\varphi$ and $\tilde\varphi$ into product basis vectors: $|\varphi\rangle=\sum_{\mathbf{g}\in\mathcal{G}^{N-1}} \alpha_{\mathbf{g}} |g_1,\ldots,g_{N-1}\rangle$, and similarly for $|\tilde\varphi\rangle$ with amplitudes $\tilde\alpha_{\mathbf{g}}$. This implies that
\begin{eqnarray}
   U\sum_{\mathbf{g}\in\mathcal{G}^{N-1}} \alpha_{\mathbf{g}} |g_1,\ldots,g_{i-1},e,g_i,\ldots,g_{N-1}\rangle\nn\\
   =\sum_{\mathbf{g}\in\mathcal{G}^{N-1}} \tilde\alpha_{\mathbf{g}} |g_1,\ldots,g_{i-1},e,g_i,\ldots,g_{N-1}\rangle.
\end{eqnarray}
Now, according to Lemma~\ref{LemUniqueBasisVector}, both of the decompositions $\sum_{\mathbf{g}\in\mathcal{G}^{N-1}}\ldots$ contain \emph{at most one basis vector from every subspace $\mathcal{H}_{\mathbf{h}}$} with non-zero amplitude, namely $|h_{i-1}^{-1},\mathbf{h} h_{i-1}^{-1}\rangle$. But since $U$ (and thus $V$) leaves the subspaces $\mathcal{H}_{\mathbf{h}}$ invariant, the last equation implies that $V |h_{i-1}^{-1},\mathbf{h} h_{i-1}^{-1}\rangle=|h_{i-1}^{-1},\mathbf{h} h_{i-1}^{-1}\rangle$ for every such vector that appears with non-zero amplitude $\alpha_{\mathbf{g}}\neq 0$ (and thus $\tilde\alpha_{\mathbf{g}}\neq 0$). Hence $\tilde\alpha_{\mathbf{g}}=e^{-i\theta}\alpha_{\mathbf{g}}$, and so $\tilde\varphi=e^{-i\theta}\varphi$.
\end{proof}

\subsection{QRF transformations as symmetry group elements}

In the structural approach in Refs.~\cite{Giacomini,Vanrietvelde,Hamette}, we can ``jump'' from one particle's reference frame into any other's. How is this idea expressed in our formalism? To see this, let us first show the following.
\begin{theorem}[QRF state transformations] \label{lem_align1}
If there is some $i\in\{1,2,\ldots,N\}$ such that $|\psi\rangle$ is $i$-alignable, then $|\psi\rangle$ is $j$-alignable for every $j\in\{1,2,\ldots,N\}$. We will then simply call $|\psi\rangle$ \emph{alignable}. Moreover, for every $i,j\in\{1,\ldots,N\}$, there is a unique symmetry transformation $U\in\mathcal{U}_{\rm sym}$ such that $U\left(|e\rangle_i\otimes|\varphi\rangle_{\bar i}\right)=|e\rangle_j\otimes|\varphi\rangle_{\bar j}$ for all $|\varphi\rangle_{\bar i}$. Furthermore, if $i\neq j$ then $U$ is a proper \emph{conditional} global translation, i.e.\ $U\bullet U^\dagger\in\mathcal{G}_{\rm sym}\setminus\mathcal{G}$. Every such $U$ induces a \emph{unique} unitary (``QRF transformation'') $V_{i\to j}$ such that $V_{i\to j}|\varphi\rangle_{\bar i}=|\varphi\rangle_{\bar j}$. This transformation can be written
\be
   V_{i\to j}=\mathbb{F}_{i,j}\sum_{g\in\mathcal{G}}|g^{-1}\rangle\langle g|_j\otimes U_{g^{-1}}^{\otimes(N-2)},
\ee
where $\mathbb{F}_{i,j}$ flips (swaps) particles $i$ and $j$. This is the discrete version of the form given in Refs.~\cite{Giacomini,Hamette}.
\end{theorem}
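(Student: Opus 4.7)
The strategy is to exploit the block-diagonal form of $\mathcal{U}_{\rm sym}$-elements from Lemma~\ref{LemDecomposeU} together with the ``unique basis vector per block'' feature of Lemma~\ref{LemUniqueBasisVector}. I will first establish an $i$-independent characterization of alignability: a pure state $|\psi\rangle$ is $i$-alignable for some (and then every) $i$ precisely when, in its decomposition $|\psi\rangle = \sum_{\mathbf{h}} |\psi_{\mathbf{h}}\rangle$ with $|\psi_{\mathbf{h}}\rangle \in \mathcal{H}_{\mathbf{h}}$, each component is proportional to a single basis vector of $\mathcal{H}_{\mathbf{h}}$. Forward direction: $|e\rangle_i \otimes |\varphi\rangle_{\bar i}$ expands over basis vectors with $g_i = e$, of which Lemma~\ref{LemUniqueBasisVector} supplies at most one per block, and elements of $\mathcal{U}_{\rm sym}$ act on each $\mathcal{H}_{\mathbf{h}}$ by basis-permuting global translations, preserving this single-support structure under $\simeq$. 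Reverse direction: given such a $|\psi\rangle$, choosing block by block the translation that carries the supported basis vector to the unique one with $g_i = e$ produces the desired alignment. Since the characterization does not mention $i$, $i$-alignability implies $j$-alignability for every $j$.

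Next I will prove existence and uniqueness of $U$. For every $\mathbf{h} \in \mathcal{G}^{N-1}$, Lemma~\ref{LemUniqueBasisVector} provides the unique basis vector $|\mathbf{g}^{(k)}_{\mathbf{h}}\rangle \in \mathcal{H}_{\mathbf{h}}$ with particle $k$ at $e$, whose first coordinate is $h_{k-1}^{-1}$ (with $h_0 := e$). The condition that $U$ maps $|e\rangle_i \otimes |\varphi\rangle_{\bar i}$ to $|e\rangle_j \otimes |\varphi\rangle_{\bar j}$ for every $|\varphi\rangle_{\bar i}$ is equivalent to $U|\mathbf{g}^{(i)}_{\mathbf{h}}\rangle = |\mathbf{g}^{(j)}_{\mathbf{h}}\rangle$ for every $\mathbf{h}$. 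By Lemma~\ref{LemDecomposeU}, $U$ acts on each block as translation by some $g_{\mathbf{h}}$, so this condition pins down $g_{\mathbf{h}} = h_{i-1} h_{j-1}^{-1}$ uniquely and yields exactly one $U \in \mathcal{U}_{\rm sym}$. For $i \neq j$ this $g_{\mathbf{h}}$ genuinely depends on $\mathbf{h}$, so $U \notin \mathcal{G}$. The induced unitary $V_{i \to j}$ on $\mathcal{H}^{\otimes(N-1)}$ then exists tautologically, and uniqueness of the reduced factor follows from Lemma~\ref{LemUniqueness}.

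Finally, to derive the closed-form expression for $V_{i \to j}$, I will apply $U$ to a generic basis input $|e\rangle_i \otimes |\mathbf{g}\rangle_{\bar i}$ and compare term by term with the claimed formula. Matching this vector against the $\mathcal{H}_{\mathbf{h}}$ decomposition reveals that the required block translation is $g_{\mathbf{h}} = g_j^{-1}$, where $g_j$ is the value at slot $j$ of the input. Translating all $N$ slots by $g_j^{-1}$ sends particle $j$ to $e$ and particle $i$ from $e$ to $g_j^{-1}$; the factor $\sum_g |g^{-1}\rangle\langle g|_j \otimes U_{g^{-1}}^{\otimes(N-2)}$ in the claimed formula implements exactly this on $|\varphi\rangle_{\bar i}$, and $\mathbb{F}_{i,j}$ performs the final relabeling that reinterprets the resulting state on slots $\bar i$ as a state on slots $\bar j$ by moving the occupant of slot $j$ to slot $i$. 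I expect this last relabeling step to be the main (though largely mechanical) obstacle, in particular agreeing on index conventions compatible with Refs.~\cite{Giacomini,Hamette}; everything else reduces to straightforward bookkeeping with Lemmas~\ref{LemDecomposeU} and~\ref{LemUniqueBasisVector}.
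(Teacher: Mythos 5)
Your proposal is correct and follows essentially the same route as the paper: both rest on the block decomposition of Lemma~\ref{LemDecomposeU} and the unique-basis-vector-per-block observation of Lemma~\ref{LemUniqueBasisVector}, and both pin down the same block translations $g_{\mathbf{h}}=h_{i-1}h_{j-1}^{-1}$. The only (harmless) difference is that you first prove the $i$-independent characterization of alignable states (which is the paper's later Lemma~\ref{lem_align}) to get the ``$i$-alignable $\Rightarrow$ $j$-alignable'' claim, whereas the paper reads it off directly from the constructed $U$.
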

\begin{proof}
Fix $i,j\in\{1,\ldots,N\}$. For every $\mathbf{h}\in\mathcal{G}^{N-1}$, let $g_{\mathbf{h}}:=h_{j-1}^{-1}h_{i-1}$ (setting, as before, $h_0:=e$). Then the global translation by $g_{\mathbf{h}}$ satisfies
\be
   U_{g_{\mathbf{h}}}^{\otimes N} |h_{i-1}^{-1},\mathbf{h} h_{i-1}^{-1}\rangle=|h_{j-1}^{-1},\mathbf{h} h_{j-1}^{-1}\rangle.
\ee
Set $U:=\bigoplus_{\mathbf{h}\in\mathcal{G}^{N-1}} {U^{\otimes N}_{g_{\mathbf{h}}}}$, then $U\in\mathcal{U}_{\rm sym}$. According to Lemma~\ref{LemUniqueBasisVector}, for every $\mathbf{h}$, $U$ maps the unique basis vector $|\mathbf{g}\rangle\in\mathcal{H}_{\mathbf{h}}$ with $g_i=e$ to the unique basis vector $|\mathbf{g'}\rangle\in\mathcal{H}_{\mathbf{h}}$ with $g'_j=e$, and it is clear that $U$ is the only symmetry transformation that does this. Thus, $U$ maps all states of the form $|e\rangle_i\otimes|\varphi\rangle_{\bar i}$ to states of the form $|e\rangle_j\otimes|\tilde\varphi\rangle_{\bar j}$. Furthermore, if $i\neq j$ then there exist $\mathbf{h},\mathbf{j}$ such that $g_{\mathbf{h}}\neq g_{\mathbf{j}}$. Thus, any such $U$ is an $\mathbf{h}$-dependent transformation and thus cannot be a global translation.

Fix an arbitrary orthonormal basis $\{|\varphi\rangle_{\bar i}\}_{\varphi}$ of $\mathcal{H}^{\otimes(N-1)}$, then $U\left(|e\rangle_i\otimes|\varphi\rangle_{\bar i}\right)=|e\rangle_j\otimes|\varphi\rangle_{\bar j}$ yields another orthonormal basis. Thus, we can view this as a unitary $V_{i\to j}$ from $\mathcal{H}^{\otimes(N-1)}$ into another copy of $\mathcal{H}^{\otimes(N-1)}$. Since its action on basis vectors is fixed, there can be no more than one such map. To determine that it has the form as claimed, simply look at its action on the basis vectors $|g_1,\ldots,g_{N-1}\rangle$.
\end{proof}
So indeed, for every alignable state, and any particle $j\in\{1,\ldots,N\}$, there is a unique representation of that state ``relative to the $j$th particle''. Furthermore, the QRF transformation from $i$'s to $j$'s `perspective' at the level of the \emph{full} Hilbert space $\ch^{\otimes N}$ corresponds to a symmetry transformation which lies in $\mathcal{G}_{\rm sym}$, but \emph{not} in $\mathcal{G}$ (if $i\neq j$). This observation highlights the physical significance of the symmetry group $\cg_{\rm sym}$. While we have seen that the set of invariant states $\mathcal{H}_{\rm phys}$ is independent of whether one constructs it through coherently averaging over $\cg_{\rm sym}$ or its `classical translation subgroup' $\cg$, the symmetry group $\cg_{\rm sym}$ is key for understanding the meaning of the QRF transformations (which transform non-invariant descriptions): they are \emph{conditional} symmetry transformations that depend on the interparticle relation $\mathbf{h}$.

To clarify the notation used in the definition of the QRF transformation $V_{i\to j}$, we give a simple example.
\begin{example}
Suppose we have $N=4$ particles, and an alignable state $|\psi\rangle$ such that
\be
   |\psi\rangle\simeq |e\rangle_2\otimes |g_1,g_3,g_4\rangle.
\ee
Thus, the state relative to the second particle is $|g_1,g_3,g_4\rangle$. To determine the state relative to the third particle, compute
\begin{eqnarray}
V_{2\to 3}|g_1,g_3,g_4\rangle&=&\mathbb{F}_{2,3}\sum_{g\in\mathcal{G}}|g^{-1}\rangle\langle g|_3\otimes U_{g^{-1}}^{\otimes 2} |g_1,g_3,g_4\rangle\nn\\
&=& \mathbb{F}_{2,3}|g_3^{-1}\rangle\langle g_3|\otimes U_{g_3^{-1}}^{\otimes 2}|g_1,g_3,g_4\rangle\nn\\
&=& \mathbb{F}_{2,3} |\underbrace{g_3^{-1}g_1}_{1},\underbrace{g_3^{-1}}_{3},\underbrace{g_3^{-1}g_4}_{4}\rangle\nn\\
&=& |\underbrace{g_3^{-1}g_1}_{1},\underbrace{g_3^{-1}}_{2},\underbrace{g_3^{-1}g_4}_{4}\rangle,
\end{eqnarray}
where the integers at the bottom denote the particle labels.
\end{example}
In Ref.~\cite{MMP}, we will demonstrate equivalence of the above QRF transformations with the ``quantum coordinate changes'' of the perspective-neutral approach \cite{Vanrietvelde,Vanrietvelde2,Hoehn:2018aqt,Hoehn:2018whn,Hoehn:2019owq,Hoehn:2020epv}.

As Theorem~\ref{lem_align1} has shown, the symmetry group $\mathcal{U}_{\rm sym}$ contains the QRF transformations which switch from state descriptions relative to particle $i$ to descriptions relative to particle $j$. But these do not exhaust the symmetry group. The following lemma gives another physically motivated example.
\begin{example}[Center of mass]
\label{ExCenterMass}
Consider again the cyclic group $\mathbb{Z}_n$ with addition modulo $n$, as shown in Figure~\ref{fig_cyclic}. Let $m_1,\ldots,m_N$ be non-negative real numbers and $m:=m_1+\ldots+m_N$. For $\mathbf{h}\in\mathcal{G}^{N-1}$, define the group elements
\be
   g(\mathbf{h}):=-\left\lfloor \frac 1 m\left(m_2 h_1+\ldots+m_N h_{N-1}\right)\right\rfloor,
\ee
and set $U:=\bigoplus_{\mathbf{h}\in\mathcal{G}^{N-1}} U_{g(\mathbf{h})}^{\otimes N}$. If we interpret the $m_i$ as the \emph{masses} of the particles, with the origin as the position of particle $1$ (i.e.\ $h_0=0$), then this symmetry transformation $U$ describes a change of quantum coordinates such that the (integer part of) the ``center of mass'' becomes the origin.
\end{example}

\subsection{Characterization of alignable states}

We have already seen that not all global states are alignable. The following lemma characterizes those that are.
\begin{lemma} \label{lem_align}
A state $|\psi\rangle\in\mathcal{H}^{\otimes N}$ is alignable if and only if it can be written in the form
\begin{equation}
   |\psi\rangle=\sum_{\mathbf{h}\in\mathcal{G}^{N-1}} \alpha_{\mathbf{h}} |g_{\mathbf{h}},\mathbf{h}\, g_{\mathbf{h}}\rangle
   \label{eqFormAlignable}
\end{equation}
for some $g_{\mathbf{h}}\in\mathcal{G}$ and $\alpha_{\mathbf{h}}\in\C$. Moreover, the $\alpha_{\mathbf{h}}$ characterize the alignable state up to symmetry-equivalence. That is, for two alignable states $|\psi\rangle$ and $|\psi'\rangle$, we have $\psi\simeq\psi'$ if and only if their coefficients satisfy $\alpha_{\mathbf{h}}=e^{i\theta}\alpha'_{\mathbf{h}}$ for some $\theta\in\R$.
\end{lemma}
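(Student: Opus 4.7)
The plan is to prove both directions constructively, using the sector decomposition $\mathcal{H}^{\otimes N}=\bigoplus_{\mathbf{h}\in\mathcal{G}^{N-1}}\mathcal{H}_{\mathbf{h}}$ together with the explicit form of symmetries from Lemma~\ref{LemDecomposeU}, and then apply Lemma~\ref{LemUniqueness} for the uniqueness claim.

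For the ``if'' direction, given $|\psi\rangle = \sum_{\mathbf{h}}\alpha_{\mathbf{h}}|g_{\mathbf{h}},\mathbf{h}g_{\mathbf{h}}\rangle$, I would exhibit an explicit symmetry aligning it to particle $1$: the conditional translation $U:=\bigoplus_{\mathbf{h}}U_{g_{\mathbf{h}}^{-1}}^{\otimes N}\in\mathcal{U}_{\rm sym}$ sends each $|g_{\mathbf{h}},\mathbf{h}g_{\mathbf{h}}\rangle$ to $|e,\mathbf{h}\rangle$, so $U|\psi\rangle=|e\rangle_1\otimes\sum_{\mathbf{h}}\alpha_{\mathbf{h}}|\mathbf{h}\rangle$, demonstrating $1$-alignability, and hence alignability.

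For the ``only if'' direction, assuming $|\psi\rangle$ is alignable, I may without loss of generality take $i=1$ by Theorem~\ref{lem_align1}, so there exists $U\in\mathcal{U}_{\rm sym}^*$ with $U|\psi\rangle=|e\rangle_1\otimes|\varphi\rangle_{\bar 1}$. Expanding $|\varphi\rangle=\sum_{\mathbf{h}}\alpha_{\mathbf{h}}|\mathbf{h}\rangle$ yields $U|\psi\rangle=\sum_{\mathbf{h}}\alpha_{\mathbf{h}}|e,\mathbf{h}\rangle$. Applying $U^\dagger$, whose sectorwise form from Lemma~\ref{LemDecomposeU} maps $|e,\mathbf{h}\rangle\in\mathcal{H}_{\mathbf{h}}$ to some $|g_{\mathbf{h}},\mathbf{h}g_{\mathbf{h}}\rangle$ up to a uniform overall phase that can be absorbed into the coefficients, expresses $|\psi\rangle$ in the claimed form~(\ref{eqFormAlignable}).

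For the characterization up to symmetry-equivalence, if $\alpha_{\mathbf{h}}=e^{i\theta}\alpha'_{\mathbf{h}}$ then the conditional translation $V:=\bigoplus_{\mathbf{h}}U_{g_{\mathbf{h}}(g'_{\mathbf{h}})^{-1}}^{\otimes N}\in\mathcal{U}_{\rm sym}$ satisfies $V|\psi'\rangle=e^{-i\theta}|\psi\rangle$, so $\psi\simeq\psi'$. Conversely, if $\psi\simeq\psi'$, apply the ``if''-direction construction to align each separately to particle $1$, obtaining $|e\rangle_1\otimes\sum_{\mathbf{h}}\alpha_{\mathbf{h}}|\mathbf{h}\rangle$ and $|e\rangle_1\otimes\sum_{\mathbf{h}}\alpha'_{\mathbf{h}}|\mathbf{h}\rangle$; by transitivity these aligned states are themselves symmetry-equivalent, whereupon Lemma~\ref{LemUniqueness} forces the ``remaining particles'' parts to agree up to a single global phase $e^{i\theta}$, giving $\alpha_{\mathbf{h}}=e^{i\theta}\alpha'_{\mathbf{h}}$.

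I expect the main subtlety to lie in the phase bookkeeping for the ``only if'' part of the characterization: one must use the fact, implicit in Lemma~\ref{LemDecomposeU}, that conditional translations act as pure permutations on classical basis vectors within each sector and hence introduce no $\mathbf{h}$-dependent phases on the coefficients. This ensures that the ambiguity granted by Lemma~\ref{LemUniqueness} is a single overall phase rather than one phase per sector, yielding the uniform $e^{i\theta}$ in the statement.
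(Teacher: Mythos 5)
Your proof is correct and follows essentially the same route as the paper's: align to particle $1$, expand $|\varphi\rangle_{\bar 1}$ in the product group basis, and use the block decomposition of $\mathcal{U}_{\rm sym}$ from Lemma~\ref{LemDecomposeU} to obtain the form~(\ref{eqFormAlignable}), with Lemma~\ref{LemUniqueness} supplying the uniqueness-up-to-global-phase of the coefficients. You are merely more explicit than the paper (which dismisses the converse as ``analogous'' and states the characterization without detail), and your closing remark about conditional translations acting as phase-free permutations within each sector correctly pins down why the ambiguity is a single global phase.
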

\begin{proof}
If $|\psi\rangle$ is alignable, then it is in particular $1$-alignable. That is, there exists $U\in\mathcal{U}_{\rm sym}^*$ such that
\[
   |\psi\rangle=U\left(\strut |e\rangle_1\otimes |\varphi\rangle_{\bar 1}\right)=e^{i\theta}\sum_{\mathbf{h}\in\mathcal{G}^{N-1}} \alpha_{\mathbf{h}} {U^ {\otimes N}_{g_{\mathbf{h}}}} |e,\mathbf{h}\rangle,
\]
where the $\alpha_{\mathbf{h}}$ are the coefficients of $|\varphi\rangle_{\bar 1}$ in terms of the product group basis, and the $U^ {\otimes N}_{g_{\mathbf{h}}}$ translate the basis vectors of the subspaces $\mathcal{H}_{\mathbf{h}}$. Hence we get the claimed form for $|\psi\rangle$. The converse direction of the proof of the first part of this lemma is analogous. We also see that the $\alpha_{\mathbf{h}}$ only depend on the symmetry-equivalence class of the state $|\psi\rangle$ (up to a global phase); and, in the case of equality of those coefficients, two states must have the same $|\varphi\rangle_{\bar 1}$ (up to a global phase), hence they must be symmetry-equivalent.
\end{proof}
The above form shows that any maximal subspace of $\mathcal{H}^{\otimes N}$ which is contained in the set of alignable states has dimension $|\mathcal{G}|^{N-1}$, i.e.\ is isomorphic to $\mathcal{H}^{\otimes(N-1)}$. This shows that the QRFs as defined above are indeed ``good'' QRFs as explained in Subsection~\ref{sssec_structural}. Due to Lemma~\ref{lem_hphys}, it also shows that the relational subspace $\ch_{\rm phys}$ has the right dimension for its states to contain ``all the particles' internal QRF perspectives at once''. This observation is corroborated by the following useful Lemma and will be further discussed in Ref.~\cite{MMP}. 
\begin{lemma}
\label{LemProjection}
Given any alignable state $|\psi\rangle$ (which is hence of the form~(\ref{eqFormAlignable})), its projection onto the invariant subalgebra is
\ba
   &&\Pi_{\rm inv}(|\psi\rangle\langle\psi|)\\
   &=&\sum_{\mathbf{h},\mathbf{j}\in\mathcal{G}^{N-1}} \frac{\alpha_{\mathbf{h}}\overline{\alpha_{\mathbf{j}}}}{|\mathcal{G}|}|\mathbf{h};\mathbf{1}\rangle\langle\mathbf{j};\mathbf{1}|+\sum_{\mathbf{h}\in\mathcal{G}^{N-1}} \frac{|\alpha_{\mathbf{h}}|^2}{|\mathcal{G}|}\Pi_{\mathbf{h};\chi\neq\mathbf{1}},\nn
\ea
where $\Pi_{\mathbf{h};\chi\neq\mathbf{1}}:=\sum_{\chi\neq\mathbf{1}}|\mathbf{h};\chi\rangle\langle\mathbf{h};\chi|$. Thus, for any two alignable states $|\psi\rangle,|\psi'\rangle$, we have $\psi\sim\psi'$ if and only if $\psi\simeq\psi'$: such states are symmetry-equivalent if and only if they are observationally equivalent. Furthermore, they are equivalent if and only if $\langle\psi|A|\psi\rangle=\langle\psi'|A|\psi'\rangle$ for all $A\in\mathcal{A}_{\rm phys}$; that is, all invariant information of alignable states is fully contained in their projection $\Pi_{\rm phys}|\psi\rangle$.
\end{lemma}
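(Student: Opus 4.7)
The plan is to compute $\Pi_{\rm inv}(|\psi\rangle\langle\psi|)$ explicitly via Theorem~\ref{TheAlgebraProjections}, and then read off the three advertised consequences from its structure. Concretely, Theorem~\ref{TheAlgebraProjections} gives
\[
\Pi_{\rm inv}(|\psi\rangle\langle\psi|)=\Pi_{\rm phys}|\psi\rangle\langle\psi|\Pi_{\rm phys}+\!\!\sum_{\mathbf{h},\chi\neq\mathbf{1}}\!\!|\langle\mathbf{h};\chi|\psi\rangle|^2\,|\mathbf{h};\chi\rangle\langle\mathbf{h};\chi|,
\]
so the two summands in the claimed formula will come from these two pieces. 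For the first piece, I plug the alignable form~(\ref{eqFormAlignable}) into Eq.~(\ref{eqProjPhys}) to obtain $\Pi_{\rm phys}|\psi\rangle=\sum_{\mathbf{h}}\alpha_{\mathbf{h}}|\mathbf{h};\mathbf{1}\rangle/\sqrt{|\cg|}$, whose outer product with itself reproduces the first double sum. For the second piece, Lemma~\ref{LemUniqueBasisVector} ensures that within each $\mathcal{H}_{\mathbf{h}}$ the vector $|\psi\rangle$ has support on the single basis element $|g_{\mathbf{h}},\mathbf{h}g_{\mathbf{h}}\rangle$, and Eq.~(\ref{eqChi}) then gives $\langle\mathbf{h};\chi|\psi\rangle=\alpha_{\mathbf{h}}\chi(g_{\mathbf{h}})/\sqrt{|\cg|}$; since $|\chi(g_{\mathbf{h}})|=1$, the modulus squared collapses to $|\alpha_{\mathbf{h}}|^2/|\cg|$ and the sum over $\chi\neq\mathbf{1}$ assembles into $\Pi_{\mathbf{h};\chi\neq\mathbf{1}}$.

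For the equivalence $\psi\sim\psi'\Leftrightarrow\psi\simeq\psi'$ for alignable states, I combine the explicit formula with Lemma~\ref{LemEquivProj}: observational equivalence is equivalent to equality of the invariant projections, which by linear independence of the $|\mathbf{h};\mathbf{1}\rangle\langle\mathbf{j};\mathbf{1}|$ and the $\Pi_{\mathbf{h};\chi\neq\mathbf{1}}$ amounts to the two conditions $\alpha_{\mathbf{h}}\overline{\alpha_{\mathbf{j}}}=\alpha'_{\mathbf{h}}\overline{\alpha'_{\mathbf{j}}}$ for all $\mathbf{h},\mathbf{j}$ and $|\alpha_{\mathbf{h}}|=|\alpha'_{\mathbf{h}}|$ for all $\mathbf{h}$. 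Choosing any $\mathbf{j}_0$ with $\alpha'_{\mathbf{j}_0}\neq 0$ (which must also satisfy $\alpha_{\mathbf{j}_0}\neq 0$ by the modulus condition), the off-diagonal relations force $\alpha_{\mathbf{h}}=e^{i\theta}\alpha'_{\mathbf{h}}$ with $e^{i\theta}:=\alpha_{\mathbf{j}_0}/\alpha'_{\mathbf{j}_0}$, and an easy check shows this phase has modulus one by the diagonal condition. Lemma~\ref{lem_align} then identifies this with symmetry-equivalence. The converse direction $\psi\simeq\psi'\Rightarrow\psi\sim\psi'$ is immediate from Definition~\ref{DefEquivalence}.

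For the final statement, I note that by definition $\mathcal{A}_{\rm phys}$ consists of all operators supported on $\mathcal{H}_{\rm phys}$, hence for any $A\in\mathcal{A}_{\rm phys}$ one has $\langle\psi|A|\psi\rangle=\langle\Pi_{\rm phys}\psi|A|\Pi_{\rm phys}\psi\rangle$. Since $\mathcal{A}_{\rm phys}$ is the full algebra $\mathcal{L}(\mathcal{H}_{\rm phys})$, the collection of these expectation values is tomographically complete for $\Pi_{\rm phys}|\psi\rangle\langle\psi|\Pi_{\rm phys}$, so equality of all such expectations is equivalent to equality of $\Pi_{\rm phys}|\psi\rangle$ up to a global phase. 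By the explicit formula for $\Pi_{\rm phys}|\psi\rangle$, this in turn is equivalent to the coefficients $\alpha_{\mathbf{h}}$ of $|\psi\rangle$ and $|\psi'\rangle$ agreeing up to a common phase, which by Lemma~\ref{lem_align} is precisely $\psi\simeq\psi'$.

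The computation is essentially routine given the preceding lemmas; the only delicate point is Paragraph~2, where I must be careful that the off-diagonal relations $\alpha_{\mathbf{h}}\overline{\alpha_{\mathbf{j}}}=\alpha'_{\mathbf{h}}\overline{\alpha'_{\mathbf{j}}}$ really do force a \emph{single} global phase rather than $\mathbf{h}$-dependent phases, which requires fixing one index $\mathbf{j}_0$ where both coefficient vectors are simultaneously nonzero and checking the diagonal condition makes the resulting ratio unimodular.
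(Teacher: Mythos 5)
Your proposal is correct and follows essentially the same route as the paper's proof: compute $\Pi_{\rm phys}|\psi\rangle$ from Eq.~(\ref{eqProjPhys}), evaluate $|\langle\mathbf{h};\chi|\psi\rangle|^2=|\alpha_{\mathbf{h}}|^2/|\mathcal{G}|$ directly, assemble the result via Theorem~\ref{TheAlgebraProjections}, and reduce both equivalence statements to the fact that the amplitudes $\alpha_{\mathbf{h}}$ agree up to a global phase (Lemma~\ref{lem_align}). Your second paragraph merely spells out, more carefully than the paper does, why the conditions $\alpha_{\mathbf{h}}\overline{\alpha_{\mathbf{j}}}=\alpha'_{\mathbf{h}}\overline{\alpha'_{\mathbf{j}}}$ force a single global phase; that bookkeeping is sound.
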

\begin{proof}
Using~(\ref{eqProjPhys}), we obtain
\be
   \Pi_{\rm phys}|\psi\rangle=\frac 1 {\sqrt{|\mathcal{G}|}} \sum_{\mathbf{h}\in\mathcal{G}^{N-1}}\alpha_{\mathbf{h}} |\mathbf{h};\mathbf{1}\rangle.
\ee
Direct calculation shows that $|\langle \psi|\mathbf{h};\chi\rangle|^2=|\alpha_{\mathbf{h}}|^2/|\mathcal{G}|$. The result then follows by using the form of $\Pi_{\rm inv}$ as given in Theorem~\ref{TheAlgebraProjections}. Both notions of equivalence boil down to the fact that the states have the same amplitudes $\alpha_{\mathbf{h}}$ up to a global phase. Finally, for $\rho$ any alignable state, the above form of $\Pi_{\rm inv}(\rho)$ implies
\begin{eqnarray}
   &&{\rm tr}(\rho A)\label{eqSubnormalized}\\&=&{\rm tr}(\rho_{\rm phys} A)+\sum_{\mathbf{h}\in\mathcal{G}^{N-1}}\langle\mathbf{h};\mathbf{1}|\rho_{\rm phys}|\mathbf{h};\mathbf{1}\rangle{\rm tr}(\Pi_{\mathbf{h};\chi\neq\mathbf{1}}A)\nn
\end{eqnarray}
for all $A\in\mathcal{A}_{\rm inv}$, where $\rho_{\rm phys}:=\Pi_{\rm phys}\rho\Pi_{\rm phys}$. That is, the expectation values of all invariant observables, and thus the notion of observational equivalence, depends only on the state's projection into $\mathcal{A}_{\rm phys}$.
\end{proof}
In other words, the algebra $\ca_{\rm phys}$ is tomographically complete for the invariant information in alignable states. In this sense, it can be said that the external relatum independent states of the structural approach are what we called the relational states, namely the ones in $\ch_{\rm phys}$. That is, the external relatum independent states of the structural approach are the \emph{coherently} group-averaged states, while their counterparts in the QI approach are the \emph{incoherently} group-averaged states (cf.\ Subsection~\ref{sssec_qiapp}).\footnote{We thank A.\ R.\ H.\ Smith for suggesting us to emphasize this technical distinction.}

\subsection{Alignable and relational observables}

Given the notion of alignable states and the duality between states and observables, it is natural to also define alignable observables in the obvious way.
\begin{definition}[Alignable observables]
An operator $A\in\mathcal{L}(\ch^{\otimes N})$ is called \emph{$i$-alignable} for $i\in\{1,\ldots,N\}$ if there exists $U\in\cu_{\rm sym}$ such that
\be
    UAU^\dag = \ket{e}\!\bra{e}_i\otimes A_{\bar{i}}
\ee
for some $A_{\bar{i}}\in\mathcal{L}(\ch^{\otimes(N-1)})$. If $A$ is an observable, it is called an $i$-alignable observable.
\end{definition}
This leads to the following extension of Theorem~\ref{lem_align1} from QRF state to observable transformations. The proof is analogous and thus omitted.

\begin{theorem}[QRF observable transformations]\label{thm_alignobs}
If $A\in\mathcal{L}(\ch^{\otimes N})$ is $i$-alignable, then it is also $j$-alignable for every $j\in\{1,\ldots,N\}$, and so we will call $A$ simply \emph{alignable}. In particular, there is a unique $U\in\cu_{\rm sym}$ such that 
\ba
U\left( \ket{e}\!\bra{e}_i\otimes A_{\bar{i}}\right)U^\dag = \ket{e}\!\bra{e}_j\otimes A_{\bar{j}},
\ea
where $A_{\bar{j}} = V_{i\to j}\,A_{\bar{i}}\,V_{j\to i}$. Here, $U$ is the unique symmetry transformation from Theorem~\ref{lem_align1} such that $U\left(\ket{e}_i\otimes\ket{\varphi}_{\bar{i}}\right)=\ket{e}_j\otimes\ket{\varphi}_{\bar{j}}$ for all $\ket{\varphi}_{\bar{i}}$ and $V_{i\to j}$ is the unitary ``QRF transformation'' induced by it.
\end{theorem}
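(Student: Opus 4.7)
The plan is to reuse the same symmetry transformation $U\in\cu_{\rm sym}$ that Theorem~\ref{lem_align1} already singles out, and to show that it conjugates observables in exactly the required way. The key observation to exploit is that the equation $U\bigl(\ket{e}_i\otimes\ket{\varphi}_{\bar i}\bigr)=\ket{e}_j\otimes V_{i\to j}\ket{\varphi}_{\bar i}$ holds for \emph{every} $\ket{\varphi}_{\bar i}\in\ch^{\otimes(N-1)}$. Consequently, $U$ restricts to a unitary isomorphism of the subspaces $W_i:=\ket{e}_i\otimes\ch^{\otimes(N-1)}_{\bar i}$ and $W_j:=\ket{e}_j\otimes\ch^{\otimes(N-1)}_{\bar j}$ of $\ch^{\otimes N}$, whose action is precisely encoded by $V_{i\to j}$. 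On the same spaces, $U^\dag$ is the inverse isomorphism and acts by $V_{j\to i}=V_{i\to j}^{-1}$.

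First I would use this to obtain the structural part of the claim. Since $A=\ket{e}\!\bra{e}_i\otimes A_{\bar i}$ has both support and range inside $W_i$, conjugation by $U$ yields an operator $UAU^\dag$ with support and range inside $W_j$, which forces it to admit a decomposition $UAU^\dag=\ket{e}\!\bra{e}_j\otimes A_{\bar j}$ for some $A_{\bar j}\in\mathcal{L}(\ch^{\otimes(N-1)})$. This already establishes that $i$-alignability implies $j$-alignability, and that self-adjointness is preserved so alignable observables map to alignable observables.

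Next I would pin down $A_{\bar j}$ by evaluating $UAU^\dag$ on a test vector of the form $\ket{e}_j\otimes\ket{\psi}_{\bar j}$. Applying $U^\dag$ first gives $\ket{e}_i\otimes V_{j\to i}\ket{\psi}_{\bar j}$; acting with $A$ produces $\ket{e}_i\otimes A_{\bar i}V_{j\to i}\ket{\psi}_{\bar j}$; and a final application of $U$ yields $\ket{e}_j\otimes V_{i\to j}A_{\bar i}V_{j\to i}\ket{\psi}_{\bar j}$. Since $\ket{\psi}_{\bar j}$ was arbitrary, this identifies $A_{\bar j}=V_{i\to j}\,A_{\bar i}\,V_{j\to i}$, as stated.

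For uniqueness, I would simply invoke the uniqueness already proven in Theorem~\ref{lem_align1}: any $U'\in\cu_{\rm sym}$ that implements the observable transformation $\ket{e}\!\bra{e}_i\otimes A_{\bar i}\mapsto \ket{e}\!\bra{e}_j\otimes A_{\bar j}$ in a way consistent with the state-level alignment map must agree with $U$ on the subspace $W_i$, and by the block-diagonal form of $\cu_{\rm sym}$ in Lemma~\ref{LemDecomposeU} (one $g_\mathbf{h}$ per relational sector, fixed by its action on any single basis vector of $\ch_\mathbf{h}$) this fixes all the $g_\mathbf{h}$ and hence $U'=U$. I do not anticipate a real obstacle: the whole argument is a straightforward lifting of the state-alignment theorem via the fact that $U$ respects the subspaces $W_i$ and $W_j$, together with $V_{j\to i}=V_{i\to j}^{-1}$. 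The only point that needs slight care is confirming that the formula for $A_{\bar j}$ really captures the full action of $UAU^\dag$, which is guaranteed by the isomorphism $U:W_i\to W_j$ combined with the vanishing of $A$ on $W_i^\perp$.
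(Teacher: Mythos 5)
Your proof is correct and fills in exactly the argument the paper leaves implicit when it declares the proof ``analogous'' to Theorem~\ref{lem_align1} and omits it: you lift the state-level alignment unitary $U$ to observables by conjugation, using that $U$ unitarily identifies the subspaces $\ket{e}_i\otimes\ch^{\otimes(N-1)}$ and $\ket{e}_j\otimes\ch^{\otimes(N-1)}$ and that $V_{j\to i}=V_{i\to j}^{-1}$, with uniqueness inherited from the sector-wise rigidity of $\cu_{\rm sym}$. The only cosmetic gap is that you treat $A$ as already in the aligned form $\ket{e}\!\bra{e}_i\otimes A_{\bar i}$; for a general $i$-alignable $A$ one first composes with the aligning unitary, which is immediate since $\cu_{\rm sym}$ is a group.
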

These are the discrete versions of the observable transformations in Refs.~\cite{Giacomini,Vanrietvelde}.

Finally, note that if $A_{\bar 1}=\sum_{\mathbf{h},\mathbf{j}\in\mathcal{G}^{N-1}}a_{\mathbf{h},\mathbf{j}}|\mathbf{h}\rangle\langle\mathbf{j}|$, then
\be
   \hat\Pi_{\rm phys}\left(\strut |e\rangle\langle e|_1\otimes A_{\bar 1}\right)=\sum_{\mathbf{h},\mathbf{j}\in\mathcal{G}^{N-1}}\frac{a_{\mathbf{h},\mathbf{j}}}{|\mathcal{G}|} |\mathbf{h};\mathbf{1}\rangle\langle\mathbf{j};\mathbf{1}|,
\ee
where $\hat\Pi_{\rm phys}$ henceforth denotes the superoperator that acts as $\hat\Pi_{\rm phys}(A)=\Pi_{\rm phys}\,A\,\Pi_{\rm phys}$, as obvious from Lemma~\ref{LemProjection}. Up to a factor of $|\mathcal{G}|$, this is identical to the original representation of $A_{\bar 1}$, but in another basis. This proves the following theorem, extending a result from~\cite{Hoehn:2019owq,Hoehn:2020epv}:
\begin{theorem}\label{thm_relobs}
Consider the map
\begin{equation}
   A_{\bar i}\mapsto F_{A_{\bar{i}},i}:=|\mathcal{G}|\cdot\hat\Pi_{\rm phys}\left(\strut |e\rangle\langle e|_i\otimes A_{\bar i}\right),
   \label{eqRelationalObservableExpl}
\end{equation}
where $A_{\bar i}\in\mathcal{L}(\mathcal{H}^{\otimes(N-1)})$ is any operator. This defines an isomorphism between the operators $A_{\bar i}$ and $\mathcal{A}_{\rm phys}$, preserving products, linear combinations, and adjoints.
\end{theorem}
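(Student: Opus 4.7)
The plan is to reduce the theorem to the explicit matrix-element formula displayed immediately before its statement, which already resolves the case $i=1$, and then to handle general $i$ by means of the QRF transformations supplied by Theorem~\ref{lem_align1}.

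First I would dispose of $i=1$. The displayed formula gives $F_{A_{\bar 1},1}=\sum_{\mathbf{h},\mathbf{j}} a_{\mathbf{h},\mathbf{j}}|\mathbf{h};\mathbf{1}\rangle\langle\mathbf{j};\mathbf{1}|$ whenever $A_{\bar 1}=\sum a_{\mathbf{h},\mathbf{j}}|\mathbf{h}\rangle\langle\mathbf{j}|$. Since $\{|\mathbf{h};\mathbf{1}\rangle\}_{\mathbf{h}\in\mathcal{G}^{N-1}}$ is an orthonormal basis of $\mathcal{H}_{\rm phys}$ by Eq.~(\ref{DefH1}), the rank-one operators $|\mathbf{h};\mathbf{1}\rangle\langle\mathbf{j};\mathbf{1}|$ form a basis of $\mathcal{A}_{\rm phys}$. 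Thus for $i=1$ the map is simply the linear isomorphism of operator algebras induced by the unitary relabelling of bases $|\mathbf{h}\rangle\leftrightarrow|\mathbf{h};\mathbf{1}\rangle$. Linearity and adjoint-preservation are then immediate, while product-preservation follows because $\langle\mathbf{j};\mathbf{1}|\mathbf{k};\mathbf{1}\rangle=\delta_{\mathbf{j},\mathbf{k}}$ collapses the middle sum of $F_{A_{\bar 1},1}\,F_{B_{\bar 1},1}$ to precisely the matrix elements of the product $A_{\bar 1}B_{\bar 1}$.

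For general $i$, I would invoke Theorem~\ref{lem_align1}, which supplies a unique $U\in\mathcal{U}_{\rm sym}$ with $U(|e\rangle\langle e|_1\otimes A_{\bar 1})U^\dag = |e\rangle\langle e|_i\otimes A_{\bar i}$, where $A_{\bar i}=V_{1\to i}A_{\bar 1}V_{i\to 1}$. The crucial observation is that $\mathcal{H}_{\rm phys}$ is the subspace of $\mathcal{U}_{\rm sym}$-invariant vectors, so every such $U$ acts as the identity on it, which gives the operator identities $U\Pi_{\rm phys}=\Pi_{\rm phys}U=\Pi_{\rm phys}$. Substituting into the definition of $F_{A_{\bar i},i}$ therefore yields $F_{A_{\bar i},i}=F_{A_{\bar 1},1}$, so the general map factors as the composition of the unitary conjugation $A_{\bar i}\mapsto V_{i\to 1}A_{\bar i}V_{1\to i}$ (manifestly an algebra isomorphism) with the $i=1$ map, which I have just established. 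The composition is then again an algebra isomorphism, completing the proof.

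I expect the only genuinely substantive point to be the matrix-element computation underlying the $i=1$ case, namely verifying that $\Pi_{\rm phys}(|e\rangle\langle e|_1\otimes|\mathbf{h}\rangle\langle\mathbf{j}|)\Pi_{\rm phys}=\frac{1}{|\mathcal{G}|}|\mathbf{h};\mathbf{1}\rangle\langle\mathbf{j};\mathbf{1}|$ via Eq.~(\ref{eqProjPhys}); once this is in hand, the algebra-isomorphism statement amounts to recognising that a unitary relabelling of an orthonormal basis induces an isomorphism of the associated matrix algebras, and the reduction to all $i$ is handled uniformly by the pointwise fixing of $\mathcal{H}_{\rm phys}$ under $\mathcal{U}_{\rm sym}$.
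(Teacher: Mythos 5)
Your proof is correct and follows essentially the same route as the paper: the substance in both is the matrix-element computation showing that, up to the factor $|\mathcal{G}|$, the map sends the matrix units $|\mathbf{h}\rangle\langle\mathbf{j}|$ to the matrix units $|\mathbf{h};\mathbf{1}\rangle\langle\mathbf{j};\mathbf{1}|$ in the orthonormal basis of $\mathcal{H}_{\rm phys}$, which is manifestly a $*$-isomorphism. Your reduction of general $i$ to $i=1$ via Theorem~\ref{thm_alignobs} together with the pointwise invariance of $\mathcal{H}_{\rm phys}$ under $\mathcal{U}_{\rm sym}$ is a clean way of making explicit what the paper leaves implicit (and records separately afterwards as $F_{A_{\bar i},i}=F_{A_{\bar j},j}$).
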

This gives us two independent motivations to focus on $\mathcal{A}_{\rm phys}$: for alignable states, the projection into this subalgebra contains all invariant information; and it does so in a way that preserves the natural structure of the alignable observables. 

Furthermore, returning to the comparison with constraint quantization, it follows from Refs.~\cite{Hoehn:2019owq,Hoehn:2020epv} that $F_{A_{\bar{i}},i}$ is (the finite group analog of) the \emph{relational Dirac observable} which encodes in an invariant manner the question ``what is the value of $A_{\bar i}$ given that particle $i$ sits at the origin?'' Such relational observables are a standard tool in canonical quantum gravity, e.g.\ see Refs.~\cite{Rovellibook,Thiemann,Tambornino,Rovelli1,Rovelli2,Rovelli3,Dittrich1,Dittrich2,Chataignier,Hoehn:2018aqt,Hoehn:2018whn}. Theorem~\ref{thm_relobs} is the reason why we refer to $\ca_{\rm phys}$ as the algebra generated by relational observables.

Due to $\mathcal{A}_{\rm phys}\subset\mathcal{A}_{\rm inv}$, we have
\[
   \hat\Pi_{\rm phys}\left(\Pi_{\rm inv}(\ket{e}\!\bra{e}_i\otimes A_{\bar{i}})\right)=\hat\Pi_{\rm phys}\,(\ket{e}\!\bra{e}_i\otimes A_{\bar{i}})\,.
\]
Since $\Pi_{\rm inv}$ is the incoherent $\mathcal{G}$-twirl over $\mathcal{U}_{\rm sym}$, it is clear that the image $\Pi_{\rm inv}(\ket{e}\!\bra{e}_i\otimes A_{\bar{i}})\in\ca_{\rm inv}$ only depends on the symmetry equivalence class of the alignable observable $A$.  Hence, in particular we have 
\ba
   F_{A_{\bar{i}},i}=F_{A_{\bar{j}},j}\q\q\mbox{for all }\,i,j\in\{1,\ldots,N\},
\ea
and so the relational observable in Eq.~(\ref{eqRelationalObservableExpl}) does not depend on the choice of particle $i$.

This systematic equivalence of relational observables  is once more made possible by studying the larger symmetry group $\cg_{\rm sym}$ rather than its subgroup $\cg$ as usual in the literature. Indeed, Theorems~\ref{lem_align1} and~\ref{thm_alignobs} demonstrate that any two particle alignments of an observable are related by a unique symmetry group element which generically lies in $\cg_{\rm sym}\setminus\cg$.

Lemma~\ref{LemProjection} (and its obvious generalization to alignable observables) yields an interesting insight that will become relevant in Section~\ref{Section:Paradox}: if we look at the image of all alignable states and observables under $\Pi_{\rm inv}$, then we do not obtain the full invariant algebra $\mathcal{A}_{\rm inv}$. Instead, we always obtain an operator in the smaller subalgebra
\be
   \mathcal{A}_{\rm alg}:=\left\{A_{\rm phys}+\sum_{\mathbf{h}\in\mathcal{G}^{N-1}} a_{\mathbf{h}} \Pi_{\mathbf{h};\chi\neq\mathbf{1}}\right\},
\ee
where $A_{\rm phys}\in\mathcal{A}_{\rm phys}$, and $a_{\mathbf{h}}\in\C$ are arbitrary complex numbers. According to the definitions in Subsection~\ref{SubsecGroupAveraging}, this gives us the subalgebra inclusions
\be
   \mathcal{A}_{\rm phys}\subset\mathcal{A}_{\rm alg}\subset \mathcal{A}_{\rm inv}\subset \mathcal{A}'_{\rm inv}\subset \mathcal{L}(\mathcal{H}^{\otimes N}).
\ee
Hence, if we denote the orthogonal projection into $\mathcal{A}_{\rm alg}$ by $\Pi_{\rm alg}$, we have $\Pi_{\rm alg}\circ\Pi_{\rm inv}=\Pi_{\rm inv}\circ\Pi_{\rm alg}=\Pi_{\rm alg}$. More specifically, the following holds.
\begin{lemma}
\label{LemAAlg}
$\mathcal{A}_{\rm alg}$ is the smallest subalgebra of $\mathcal{L}(\mathcal{H}^{\otimes N})$ that contains $\Pi_{\rm inv}(|\psi\rangle\langle\psi|)$ for all alignable states $|\psi\rangle$.	
\end{lemma}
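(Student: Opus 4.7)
The plan is to establish two inclusions: (i) $\mathcal{A}_{\rm alg}$ is itself a subalgebra containing $\Pi_{\rm inv}(|\psi\rangle\langle\psi|)$ for every alignable $|\psi\rangle$, and (ii) every subalgebra $\mathcal{B}\subseteq\mathcal{L}(\mathcal{H}^{\otimes N})$ containing all such projections must contain $\mathcal{A}_{\rm alg}$. For (i), the subalgebra property is a direct check: when multiplying two elements $A_{\rm phys}+\sum_{\mathbf{h}} a_{\mathbf{h}}\Pi_{\mathbf{h};\chi\neq\mathbf{1}}$ and $B_{\rm phys}+\sum_{\mathbf{j}} b_{\mathbf{j}}\Pi_{\mathbf{j};\chi\neq\mathbf{1}}$, the cross terms vanish because $\mathcal{A}_{\rm phys}$ is supported on $\mathcal{H}_{\rm phys}$ while each $\Pi_{\mathbf{j};\chi\neq\mathbf{1}}$ is supported on the orthogonal complement, and the $\{\Pi_{\mathbf{h};\chi\neq\mathbf{1}}\}_{\mathbf{h}}$ are pairwise orthogonal projectors. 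Inclusion of $\Pi_{\rm inv}(|\psi\rangle\langle\psi|)$ then follows directly from Lemma~\ref{LemProjection}: the first summand there lies in $\mathcal{A}_{\rm phys}$, and the second is already a non-negative combination of the $\Pi_{\mathbf{h};\chi\neq\mathbf{1}}$.

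The substantive direction is (ii). I plan to generate a spanning set of $\mathcal{A}_{\rm alg}$ inside $\mathcal{B}$ in three stages. In stage one, I apply Lemma~\ref{LemProjection} to the 1-alignable basis states $|\psi\rangle=|e,\mathbf{h}\rangle$ (for which only $\alpha_{\mathbf{h}}=1$ is non-zero) to conclude that $\Pi_{\mathbf{h}}/|\mathcal{G}| = \bigl(|\mathbf{h};\mathbf{1}\rangle\langle\mathbf{h};\mathbf{1}|+\Pi_{\mathbf{h};\chi\neq\mathbf{1}}\bigr)/|\mathcal{G}|$ lies in $\mathcal{B}$ for every $\mathbf{h}\in\mathcal{G}^{N-1}$. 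In stage two, for each pair $\mathbf{h}\neq\mathbf{j}$ and each phase $\phi\in\{0,\pi/2\}$, I apply the lemma to $|\psi_\phi\rangle = (|e,\mathbf{h}\rangle + e^{i\phi}|e,\mathbf{j}\rangle)/\sqrt{2}$ and subtract $(\Pi_{\mathbf{h}}+\Pi_{\mathbf{j}})/(2|\mathcal{G}|)$ (already available from stage one) to obtain the purely off-diagonal element $(e^{-i\phi}|\mathbf{h};\mathbf{1}\rangle\langle\mathbf{j};\mathbf{1}|+e^{i\phi}|\mathbf{j};\mathbf{1}\rangle\langle\mathbf{h};\mathbf{1}|)/(2|\mathcal{G}|)$ in $\mathcal{B}$. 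The complex linear combination corresponding to $\phi=0$ plus $i$ times $\phi=\pi/2$ then isolates $|\mathbf{h};\mathbf{1}\rangle\langle\mathbf{j};\mathbf{1}|\in\mathcal{B}$ for all $\mathbf{h}\neq\mathbf{j}$. In stage three, I exploit that $\mathcal{B}$ is closed under products: the identity $|\mathbf{h};\mathbf{1}\rangle\langle\mathbf{j};\mathbf{1}|\cdot|\mathbf{j};\mathbf{1}\rangle\langle\mathbf{h};\mathbf{1}| = |\mathbf{h};\mathbf{1}\rangle\langle\mathbf{h};\mathbf{1}|$ places each diagonal rank-one projector in $\mathcal{B}$, and subtraction from $\Pi_{\mathbf{h}}$ recovers $\Pi_{\mathbf{h};\chi\neq\mathbf{1}}\in\mathcal{B}$. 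Since the matrix units $\{|\mathbf{h};\mathbf{1}\rangle\langle\mathbf{j};\mathbf{1}|\}_{\mathbf{h},\mathbf{j}}$ span $\mathcal{A}_{\rm phys}$ and, together with the $\Pi_{\mathbf{h};\chi\neq\mathbf{1}}$, they span $\mathcal{A}_{\rm alg}$, this yields $\mathcal{B}\supseteq\mathcal{A}_{\rm alg}$.

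The main obstacle I anticipate is the bookkeeping required in stage two: each $\Pi_{\rm inv}(|\psi_\phi\rangle\langle\psi_\phi|)$ carries along diagonal ``physical'' and superselection-sector contributions that must be canceled using elements secured in stage one before the off-diagonals become accessible. The ordering of the three stages is essential --- one needs stage one's $\Pi_{\mathbf{h}}$ to extract stage two's off-diagonals, and stage two's matrix units are in turn what allows one to split $\Pi_{\mathbf{h}}$ into its physical and non-physical parts in stage three. Beyond this careful layering, the argument is entirely linear-algebraic and relies only on Lemma~\ref{LemProjection} together with the orthogonality structure of $\mathcal{A}_{\rm alg}$ already exploited in part (i).
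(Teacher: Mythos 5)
Your proof is correct and follows essentially the same route as the paper: both directions use Lemma~\ref{LemProjection} to place the images of alignable states in $\mathcal{A}_{\rm alg}$, and then generate the matrix units $|\mathbf{h};\mathbf{1}\rangle\langle\mathbf{j};\mathbf{1}|$ and the projectors $\Pi_{\mathbf{h};\chi\neq\mathbf{1}}$ inside any containing subalgebra. The only cosmetic difference is that you obtain the off-diagonal matrix units by explicit polarization over pure superposition states, whereas the paper invokes linearity of $\Pi_{\rm inv}$ to work directly with arbitrary operators $|e\rangle\langle e|_1\otimes A_{\bar 1}$, in particular $A_{\bar 1}=|\mathbf{h}\rangle\langle\mathbf{j}|$.
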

\begin{proof}
It is clear that the rank-one projectors $|\psi\rangle\langle\psi|$ for $|\psi\rangle=|e\rangle_1\otimes|\varphi\rangle_{\bar 1}$ linearly span all of $|e\rangle\langle e|_1\otimes \mathcal{L}(\mathcal{H}^{\otimes(N-1)})$, thus we are looking for the subalgebra $\mathcal{A}$ that is generated by the image of these operators under $\Pi_{\rm inv}$. Lemma~\ref{LemProjection} shows that $\Pi_{\rm inv}(|\psi\rangle\langle\psi|)$ is contained in $\mathcal{A}_{\rm alg}$ for every alignable state $|\psi\rangle$, hence $\mathcal{A}\subseteq\mathcal{A}_{\rm alg}$. Conversely, if $A=\sum_{\mathbf{h},\mathbf{j}\in\mathcal{G}^{N-1}}a_{\mathbf{h},\mathbf{j}}|\mathbf{h}\rangle\langle\mathbf{j}|$, then
\begin{eqnarray}
   &&\Pi_{\rm inv}(|e\rangle\langle e|\otimes A)\\&=&\sum_{\mathbf{h},\mathbf{j}\in\mathcal{G}^{N-1}}\frac{a_{\mathbf{h},\mathbf{j}}}{|\mathcal{G}|}|\mathbf{h};\mathbf{1}\rangle\langle\mathbf{j};\mathbf{1}|+\sum_{\mathbf{h}\in\mathcal{G}^{N-1}}\frac{a_{\mathbf{h},\mathbf{h}}}{|\mathcal{G}|}\Pi_{\mathbf{h};\chi\neq\mathbf{1}}.\nn
\end{eqnarray}
Setting $A:=|\mathbf{h}\rangle\langle\mathbf{j}|$ for $\mathbf{h}\neq\mathbf{j}$ shows that $|\mathbf{h};\mathbf{1}\rangle\langle\mathbf{j};\mathbf{1}|\in\mathcal{A}$. But then, we also have $|\mathbf{h};\mathbf{1}\rangle\langle\mathbf{h};\mathbf{1}|=|\mathbf{h};\mathbf{1}\rangle\langle\mathbf{j};\mathbf{1}|)(|\mathbf{j};\mathbf{1}\rangle\langle\mathbf{h};\mathbf{1}|)\in\mathcal{A}$, and so all operators $A_{\rm phys}$ fully supported on $\mathcal{H}_{\rm phys}$ are in $\mathcal{A}$. Finally, considering the image of $A=|\mathbf{h};\mathbf{1}\rangle\langle\mathbf{h};\mathbf{1}|$ shows that $\Pi_{\mathbf{h};\chi\neq\mathbf{1}}\in\mathcal{A}$, and so $\mathcal{A}\supseteq \mathcal{A}_{\rm alg}$.
\end{proof}

\subsection{Communication scenario of the structural approach revisited}
\label{SubsecCommunRevisited}
We are now in a position to revisit the communication scenario elucidating the operational essence of the structural approach to QRFs in section~\ref{sssec_structural} (see also Figure \ref{fig_structural}). It is clear how Alice and Bob can win the game proposed by Refaella in the case of $N$ particles on the configuration space $\cg$ in the absence of a shared external relatum, given that the class of states $\mathcal{C}$ that they are interested in is the set of alignable states. For example, before the game begins, they can agree to always use particle $1$ as the internal reference system relative to which the remaining particles are described. This yields the ``canonical choice'' to represent any equivalence class of (pure) states in the form $|e\rangle_1\otimes|\varphi\rangle_{\bar{1}}$ and Alice's and Bob's return to Refaella will always agree.

Alice and Bob have, of course, the option to choose any of the $N$ particles as a reference system, each likewise defining a ``canonical choice''. It is clear that all these different possible conventions by Alice and Bob are precisely related by the QRF transformations of Lemma~\ref{lem_align1} and that each such transformation is an element of the symmetry group $\cu_{\rm sym}$.

\section{The paradox of the third particle and the relational trace}
\label{Section:Paradox}
Ref.~\cite{Angelo} presents a ``paradox of the third particle'' in the context of QRFs. We will now formulate this apparent paradox in our formalism, and see that the structure of observables, as elaborated in Section~\ref{SecTechnical}, helps to clarify its physical background and to resolve it in terms of relational observables. As we will see, the core of the problem  is how to embed the two-particle observables into the set of three-particle observables, and the key will be to do so in a relational manner. This bears some resemblance to the issue of boundaries and edge modes in gauge theory and gravity \cite{Donnelly:2014fua,Donnelly:2016auv,Geiller:2019bti,Freidel:2020xyx,Gomes:2018shn,Riello:2020zbk,Wieland:2017zkf,Wieland:2017cmf}, which is related to the question of how to embed the gauge-invariant observables of neighbouring subregions in spacetime into the set of gauge-invariant observables associated with the union (`gluing') of these subregions.

 The setup of Ref.~\cite{Angelo} consists of three particles in one dimension, i.e.\ on the real line $\mathbb{R}$. But since only a finite number of positions is relevant for the paradox, we can discretize space and its translations. Thus, we consider a $\mathcal{G}$-system with $\mathcal{G}=\mathbb{Z}_n$, the cyclic group of order $n$, as in Fig.~\ref{fig_cyclic}. The group operation is addition modulo $n$; in the following, whenever we write $a+b$, we actually mean $a+b\mod n$. As described earlier, this is a discrete model of the translation group acting on the real line. We start with two particles that have been prepared (by an external observer with access to the reference frame) in the state
\begin{equation}
   |\psi\rangle=\frac 1 {\sqrt{2}}\left( |-a\rangle_1|b\rangle_2+e^{i\theta}|a\rangle_1|-b\rangle_2\right),
   \label{eqOriginal}
\end{equation}
where $a,b\in\{0,1,2,\ldots,n-1\}$, and $\theta\in\R$. This state is symmetry-equivalent to\footnote{In fact, in the sequel we will assume that $a\neq0$, for otherwise the states in Eqs.~\eqref{eqOriginal} and~\eqref{eqOriginal2} coincide.}
\begin{equation}
   |\psi\rangle\simeq |0\rangle_1\otimes \frac 1 {\sqrt{2}}\left(|a+b\rangle_2+e^ {i\theta}|-a-b\rangle_2\right).
   \label{eqOriginal2}
\end{equation}
In our terminology, this means that the state is alignable. In Ref.~\cite{Angelo}, this form is used as a motivation to declare: \emph{``Therefore, we conclude that particle 1 sees particle 2 in a pure state. Importantly this implies that particle 1 can get access to the phase $\theta$ by interacting with particle 2 alone, i.e.\ without access to the external reference frame.''} In our conceptual framework, we would rather describe the situation as follows: consider an external observer who has access to particles 1 and 2, but has no access to the external reference frame. There are some observables that this observer can measure for which the phase $\theta$ is relevant.

This is because the state that is effectively seen by this observer is the projection of $\psi$ into the invariant subalgebra, which we can determine via Lemma~\ref{LemProjection}. The coefficients of this state are $\alpha_h=1/\sqrt{2}$ for $h=a+b$ and $\alpha_j=e^ {i\theta}/\sqrt{2}$ for $j=-a-b$, thus
\begin{eqnarray}
\Pi_{\rm inv}(|\psi\rangle\langle\psi|)&=& \frac 1 {2n} |h;\mathbf{1}\rangle\langle h;\mathbf{1}|+\frac{e^{-i\theta}}{2n}|h;\mathbf{1}\rangle\langle\ j;\mathbf{1}|\nonumber\\
   &&+\frac{e^{i\theta}}{2n}|j;\mathbf{1}\rangle\langle h;\mathbf{1}|+\frac 1 {2n} |j;\mathbf{1}\rangle\langle j;\mathbf{1}|\nonumber\\
   &&+\frac 1 {2n}\Pi_{h;\chi\neq\mathbf{1}}+\frac 1 {2n}\Pi_{j;\chi\neq\mathbf{1}}
   \label{eqProjection1}
\end{eqnarray}
and this state depends on $\theta$ in a nontrivial way.

Now a third particle is introduced. From the external perspective, it is prepared in a pure state $|c\rangle$ independently of the other two particles, where $c\in\{0,1,\ldots,n-1\}$. From that perspective, the global state thus reads
\begin{equation}
   |\Psi\rangle=\frac 1 {\sqrt{2}}\left( |-a\rangle_1|b\rangle_2+e^{i\theta}|a\rangle_1|-b\rangle_2\right)|c\rangle_3.
   \label{eqThreePInnocuous}
\end{equation}
This state is still alignable. Relative to particle 1, it becomes
\begin{equation}
\small
   |\Psi'\rangle= \frac {|0\rangle_1}{\sqrt{2}}\left( |a+b\rangle_2|a+c\rangle_3 + e^ {i\theta}|-a-b\rangle_2|-a+c\rangle_3\right)
   \label{eqThreeParticleState}
\end{equation}
with $\Psi'\simeq\Psi$. This is a state for which particles 2 and 3 are formally entangled. Now suppose that particle 3 is very far away, such that our external observer (or, as the authors of Ref.~\cite{Angelo} would say, such that particle 1) has no access to particle 3. If one now formally takes the partial trace over particle 3, one obtains a reduced state of particles 1 and 2 that is \emph{independent of the phase $\theta$}. This seems to contradict our earlier claim --- now it looks as if an external observer without access to the external reference frame or to particle 3 \emph{cannot} see any observable consequences of the phase $\theta$.

We arrive at an apparent paradox: computing the partial trace via $\Psi$ or via $\Psi'$ gives different predictions, even though both states are symmetry-equivalent. Moreover, the result from tracing out the third particle via $\Psi'$ seems absurd, given the seemingly innocuous role that the third particle plays in state $\Psi$. \textbf{Can the phase $\theta$ be accessed by an observer without access to the external relatum and \emph{with restricted access to only particles $1$ and $2$}?} It seems like there should be an objective answer to this question which does not depend on whether it is asked in the context of state $\Psi$ or $\Psi'$.

Clearly, since the usual partial trace yields differing results for $\Psi$ and $\Psi'$, it cannot represent the correct rule to compute reduced states in the setting of QRFs. To shed light on the reason for why this is the case, let us reconsider how the standard partial trace can be motivated. Consider three distinguishable particles as one usually does in quantum information theory --- in our setting, this implies that the state of the particles is defined relative to an accessible reference frame. Denote by $A_{12}$ some observable that is measurable if one has access to particles 1 and 2 only (and to the reference frame). Then one can equivalently describe this as an observable on the three particles, such that the third particle is ignored. Formally, this can be done via a map
\ba
   A_{12}\mapsto \Phi(A_{12})=A_{12}\otimes\mathbf{1}_3.\label{em1}
\ea
That is, the 2-particle observables are naturally embedded into the 3-particle observables via some map $\Phi$, which takes the tensor product with the identity observable. This map preserves all relevant structure (as it must): it takes linear combinations to linear combinations, products to products, the adjoint to the adjoint, and the identity to the identity. Formally, this is summarized by saying that $\Phi$ is a \emph{unital $^*$-homomorphism}. It defines what we mean when we talk about ``observables pertaining only to particles $1$ and $2$'' within the set of all $3$-particle observables. In the following, we will refer to  $^*$-homomorphisms simply as \emph{embeddings} (which may or may not be unital).

Now consider any quantum state $\rho_{123}$ on the three particles. We would like to determine the state $\rho_{12}$ that results if one has only access to particles 1 and 2. By this, we mean the state that gives the same expectation values as $\rho_{123}$ on all local observables $A_{12}$. Thus, we demand
\be
   \tr(\rho_{12}A_{12})=\tr(\rho_{123}\Phi(A_{12}))\quad\mbox{for all }A_{12}.
\ee
If we write this in terms of the Hilbert-Schmidt inner product, then
\[
   \langle \rho_{12},A_{12}\rangle_{\rm HS}=\langle\rho_{123},\Phi(A_{12})\rangle_{\rm HS}=\langle\Phi^ \dagger(\rho_{123}),A_{12}\rangle_{\rm HS}.
\]
That is, we must have $\rho_{12}=\Phi^\dagger(\rho_{123})$, with $\Phi^ \dagger$ the Hilbert-Schmidt adjoint of $\Phi$. But given Eq.~\eqref{em1}, it is easy to see that $\Phi^\dagger={\rm Tr}_3$, and this recovers the partial trace.

In the context of QRFs and $\mathcal{G}$-systems, we have a different structure of observables: what is measurable without access to the external reference frame corresponds to the \emph{invariant observables}. We therefore need an analog of the above construction for the invariant subalgebra $\mathcal{A}_{\rm inv}$. In fact, since we are only interested in alignable states, Lemma~\ref{LemAAlg} tells us that we can focus on the subalgebra $\mathcal{A}_{\rm alg}$.

\subsection{The non-uniqueness of invariant embeddings}
\label{SubsecNonUniqueEmbd}
In light of the above considerations, let us try to construct an embedding of the relevant 2-particle observables (or more generally of $\mathcal{A}_{\rm alg}^{(N)}$) into the 3-particle observables (more generally into $\mathcal{A}_{\rm alg}^{(N+M)}$, where the superscript denotes the number of particles). To obtain some crucial physical intuition, we will now define one such embedding (called $\tilde\Phi^{(1)}$) in an intuitive manner, before we turn to a more systematic treatment below.

We start our construction with the orthogonal projector $\Pi_{\mathbf{h}}^{(N)}:=\sum_{\chi\in\mathcal{\hat G}}|\mathbf{h};\chi\rangle\langle\mathbf{h};\chi|$. Its embedding must be on orthogonal projector in $\mathcal{A}_{\rm alg}^{(N+M)}$; measuring this projector amounts to asking whether the first $N$ particles have pairwise relations described by $\mathbf{h}$. Clearly, the answer must be ``yes'' whenever the $M+N$ particles are in some joint relation $(\mathbf{h},\mathbf{g})$ for arbitrary $\mathbf{g}\in\mathcal{G}^M$, and the answer must be ``no'' for all other relations $(\mathbf{h}',\mathbf{g})$ whenever $\mathbf{h}'\neq\mathbf{h}$. This suggests to choose the embedding as
\begin{equation}
   \tilde\Phi^{(1)}(\Pi_{\mathbf{h}}^{(N)})=\sum_{\mathbf{g}\in\mathcal{G}^M} \Pi_{\mathbf{h},\mathbf{g}}^{(N+M)}.
   \label{eqPhi0}
\end{equation}
The reason for the superscript `$(1)$' will become clear shortly. Now consider the orthogonal projector $\Pi_{\rm phys}^{(N)}:=\sum_{\mathbf{h}\in\mathcal{G}^{N-1}}|\mathbf{h};\mathbf{1}\rangle\langle\mathbf{h};\mathbf{1}|$. A state $|\psi\rangle$ is in the image of this projector if and only if it is translation-invariant, i.e.\ $U_g^{\otimes N}|\psi\rangle=|\psi\rangle$ for all $g\in\mathcal{G}$. Suppose that a state of $N+M$ particles is translation-invariant; then we would like to be able to say that the state of the first $N$ particles is translation-invariant, too. This motivates us to demand $\tilde\Phi^{(1)}(\Pi_{\rm phys}^{(N)})=\Pi_{\rm phys}^{(N+M)}$: in other words, we embed the translation-invariant observables of $N$ particles into the translation-invariant observables of $N+M$ particles.

We have $|\mathbf{h};\mathbf{1}\rangle\langle\mathbf{h};\mathbf{1}|=\Pi_{\mathbf{h}}^{(N)}\Pi_{\rm phys}^{(N)}$. Since $\tilde\Phi^{(1)}$ is supposed to be an embedding and thus multiplicative, this implies
\begin{equation}
   \tilde\Phi^{(1)}(|\mathbf{h};\mathbf{1}\rangle\langle\mathbf{h};\mathbf{1}|)=\sum_{\mathbf{g}\in\mathcal{G}^M} |\mathbf{h},\mathbf{g};\mathbf{1}\rangle\langle\mathbf{h};\mathbf{g};\mathbf{1}|.
   \label{eqPhi1}
\end{equation}
To exhaust all of $\mathcal{A}_{\rm alg}^{(N)}$, we still need to embed the operators $|\mathbf{h};\mathbf{1}\rangle\langle\mathbf{j};\mathbf{1}|$ for $\mathbf{h}\neq\mathbf{j}$. Given Eq.~(\ref{eqPhi1}), it seems formally natural to define
\begin{equation}
   \tilde\Phi^{(1)}(|\mathbf{h};\mathbf{1}\rangle\langle\mathbf{j};\mathbf{1}|):=\sum_{\mathbf{g}\in\mathcal{G}^M} |\mathbf{h},\mathbf{g};\mathbf{1}\rangle\langle\mathbf{j},\mathbf{g};\mathbf{1}|.
   \label{eqPhi2}
\end{equation}
Natural as this definition may seem, we will soon see that it relies on quite subtle physical assumptions. For now, let us work with this definition and explore its consequences. First, linearity and Eqs.~(\ref{eqPhi0}) and~(\ref{eqPhi1}) imply
\begin{equation}
   \tilde\Phi^{(1)}\left(\Pi_{\mathbf{h},\chi\neq\mathbf{1}}\right)=\sum_{\mathbf{g}\in\mathcal{G}^M} \Pi_{\mathbf{h},\mathbf{g};\chi\neq\mathbf{1}}.
   \label{eqPhi3}
\end{equation}
These demands yield an embedding that can equivalently be defined as follows:
\begin{lemma}
\label{LemAlternative}
There is a unique unital embedding $\tilde\Phi^{(1)}$ of $\mathcal{A}_{\rm alg}^{(N)}$ into $\mathcal{A}_{\rm alg}^{(N+M)}$ that satisfies
\begin{equation}
   \tilde\Phi^{(1)}\left(\Pi_{\rm alg}^{(N)}\left(\hat e_1\otimes A_{\bar 1}\right)\right)=\Pi_{\rm alg}^{(N+M)}\left(\hat e_1\otimes A_{\bar 1}\otimes \mathbf{1}^{(M)}\right)
   \label{eqEmbAltern}
\end{equation}
for all $A_{\bar 1}\in\mathcal{L}(\mathcal{H}^{\otimes (N-1)})$, where $\hat e_1:=|e\rangle\langle e|_1$ and $\bar 1:=\{2,3,\ldots,N\}$. It is given by the linear extension of Eqs.~(\ref{eqPhi2}) and~(\ref{eqPhi3}).
\end{lemma}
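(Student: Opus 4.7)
The plan is to prove uniqueness and existence separately, both via the block-diagonal structure of $\mathcal{A}_{\rm alg}^{(N)}$.

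For uniqueness, I first compute the left-hand side of~\eqref{eqEmbAltern} on the basis $A_{\bar 1}=|\mathbf{h}\rangle\langle\mathbf{j}|$ of $\mathcal{L}(\mathcal{H}^{\otimes(N-1)})$. Using the explicit formula for $\Pi_{\rm inv}(\hat e_1\otimes A)$ from the proof of Lemma~\ref{LemAAlg}, together with the fact that this image already lies in $\mathcal{A}_{\rm alg}^{(N)}$ (so $\Pi_{\rm alg}^{(N)}$ acts as the identity on it), one obtains $\Pi_{\rm alg}^{(N)}(\hat e_1\otimes|\mathbf{h}\rangle\langle\mathbf{j}|)=\tfrac{1}{|\mathcal{G}|}|\mathbf{h};\mathbf{1}\rangle\langle\mathbf{j};\mathbf{1}|$ for $\mathbf{h}\neq\mathbf{j}$, and $\tfrac{1}{|\mathcal{G}|}(|\mathbf{h};\mathbf{1}\rangle\langle\mathbf{h};\mathbf{1}|+\Pi_{\mathbf{h};\chi\neq\mathbf{1}})$ for $\mathbf{h}=\mathbf{j}$. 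The defining condition~\eqref{eqEmbAltern} thus pins down $\tilde\Phi^{(1)}$ linearly on all off-diagonal matrix units of $\mathcal{A}_{\rm phys}^{(N)}$ and on the diagonal sums $|\mathbf{h};\mathbf{1}\rangle\langle\mathbf{h};\mathbf{1}|+\Pi_{\mathbf{h};\chi\neq\mathbf{1}}$. Assuming $|\mathcal{G}|^{N-1}\geq 2$, picking any $\mathbf{j}\neq\mathbf{h}$, the $^*$-algebra identity $|\mathbf{h};\mathbf{1}\rangle\langle\mathbf{h};\mathbf{1}|=|\mathbf{h};\mathbf{1}\rangle\langle\mathbf{j};\mathbf{1}|\cdot|\mathbf{j};\mathbf{1}\rangle\langle\mathbf{h};\mathbf{1}|$ forces the value of $\tilde\Phi^{(1)}(|\mathbf{h};\mathbf{1}\rangle\langle\mathbf{h};\mathbf{1}|)$ via multiplicativity, and then $\tilde\Phi^{(1)}(\Pi_{\mathbf{h};\chi\neq\mathbf{1}})$ is recovered by subtraction. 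Since $\{|\mathbf{h};\mathbf{1}\rangle\langle\mathbf{j};\mathbf{1}|\}\cup\{\Pi_{\mathbf{h};\chi\neq\mathbf{1}}\}$ is a basis of $\mathcal{A}_{\rm alg}^{(N)}$, this proves uniqueness.

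For existence, I take~\eqref{eqPhi2} and~\eqref{eqPhi3} as the definition on this basis and extend linearly. The key structural observation is that $\mathcal{A}_{\rm alg}^{(N)}$ decomposes as a $^*$-algebra direct sum $\mathcal{A}_{\rm phys}^{(N)}\oplus\bigoplus_{\mathbf{h}}\mathbb{C}\,\Pi_{\mathbf{h};\chi\neq\mathbf{1}}$ (a full matrix algebra on $\mathcal{H}_{\rm phys}^{(N)}$ plus a commutative part), with vanishing cross-products. On the matrix block, $\tilde\Phi^{(1)}$ is the embedding induced by the isometry $|\mathbf{h};\mathbf{1}\rangle\otimes|\mathbf{g}\rangle\mapsto|\mathbf{h},\mathbf{g};\mathbf{1}\rangle$, i.e.\ $A\mapsto A\otimes\mathbf{1}^{(M)}$ after the natural identification $\mathcal{H}_{\rm phys}^{(N+M)}\cong\mathcal{H}_{\rm phys}^{(N)}\otimes\mathcal{H}^{\otimes M}$; on the commutative block, $\Pi_{\mathbf{h};\chi\neq\mathbf{1}}\mapsto\sum_{\mathbf{g}}\Pi_{\mathbf{h},\mathbf{g};\chi\neq\mathbf{1}}$ is manifestly a unital $^*$-homomorphism between the respective commutative algebras. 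Cross-products remain zero after the map since $|\mathbf{h},\mathbf{g};\mathbf{1}\rangle\langle\mathbf{j},\mathbf{g}';\mathbf{1}|\cdot\Pi_{\mathbf{k},\mathbf{g}'';\chi\neq\mathbf{1}}=0$, so the two block-homomorphisms glue to a unital $^*$-homomorphism $\mathcal{A}_{\rm alg}^{(N)}\to\mathcal{A}_{\rm alg}^{(N+M)}$. Finally, to verify~\eqref{eqEmbAltern} for this explicit map I expand $\mathbf{1}^{(M)}=\sum_\mathbf{g}|\mathbf{g}\rangle\langle\mathbf{g}|$: the right-hand side becomes $\sum_\mathbf{g}\Pi_{\rm alg}^{(N+M)}(\hat e_1\otimes|\mathbf{h},\mathbf{g}\rangle\langle\mathbf{j},\mathbf{g}|)$, which by the $(N+M)$-particle analogue of the formula derived in the first paragraph equals exactly the image of the left-hand side under~\eqref{eqPhi2}--\eqref{eqPhi3}.

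The main obstacle is notational rather than conceptual: keeping track of the composite relations $(\mathbf{h},\mathbf{g})$ and ensuring that the matrix block and the commutative block behave correctly under products. Once the direct-sum decomposition of $\mathcal{A}_{\rm alg}^{(N)}$ is recognized, all remaining verifications reduce to elementary checks on matrix units, and the proof is essentially bookkeeping.
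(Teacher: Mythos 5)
Your proof is correct and follows essentially the same route as the paper's: uniqueness by evaluating the defining relation on the matrix units $A_{\bar 1}=|\mathbf{h}\rangle\langle\mathbf{j}|$ and then using multiplicativity and adjoints to pin down the remaining generators, and existence by the linear extension of Eqs.~(\ref{eqPhi2})--(\ref{eqPhi3}). Your extra step of exhibiting $\mathcal{A}_{\rm alg}^{(N)}$ as the direct sum of a full matrix block and a commutative block, with the matrix-block map induced by the isometry $|\mathbf{h};\mathbf{1}\rangle\otimes|\mathbf{g}\rangle\mapsto|\mathbf{h},\mathbf{g};\mathbf{1}\rangle$, is a welcome explicit justification of the homomorphism property that the paper leaves as ``clear.''
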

\begin{proof}
Write $A_{\bar 1}=\sum_{\mathbf{h},\mathbf{j}\in\mathcal{G}^{N-1}} a_{\mathbf{h},\mathbf{j}}|\mathbf{h}\rangle\langle\mathbf{j}|$ and use Lemma~\ref{LemProjection} to obtain
\begin{eqnarray}
\Pi_{\rm alg}^{(N)}\left(|e\rangle\langle e|_1\otimes A_{\bar 1}\right)&=& \sum_{\mathbf{h},\mathbf{j}\in\mathcal{G}^{N-1}} \frac{a_{\mathbf{h},\mathbf{j}}}{|\mathcal{G}|} |\mathbf{h};\mathbf{1}\rangle\langle\mathbf{j};\mathbf{1}|\nonumber\\
&& + \sum_{\mathbf{h}\in\mathcal{G}^{N-1}} \frac{a_{\mathbf{h},\mathbf{h}}}{|\mathcal{G}|} \Pi_{\mathbf{h};\chi\neq\mathbf{1}}.
\label{eqProjProof}
\end{eqnarray}
A similar representation can be obtained for the right-hand side of Eq.~(\ref{eqEmbAltern}). Thus, it is clear that the linear extension of Eqs.~(\ref{eqPhi2}) and~(\ref{eqPhi3}) satisfies Eq.~(\ref{eqEmbAltern}). In particular, the unit is preserved.

Now let $\tilde\Phi$ be any embedding of $\mathcal{A}_{\rm alg}^{(N)}$ into $\mathcal{A}_{\rm alg}^{(N+M)}$ which satisfies Eq.~(\ref{eqEmbAltern}). Choose $\mathbf{h},\mathbf{j}\in\mathcal{G}^{N-1}$ arbitrary, and let $A_{\bar 1}$ be the operator with $a_{\mathbf{h},\mathbf{j}}=|\mathcal{G}|$ and all other coefficients zero. Then Eq.~(\ref{eqProjProof}) becomes $|\mathbf{h};\mathbf{1}\rangle\langle\mathbf{j};\mathbf{1}|$, and $\Pi_{\rm alg}^{(N+M)}\left(|e\rangle\langle e|_1\otimes A_{\bar 1}\otimes\mathbf{1}^{(M)}\right)$ becomes the right-hand side of Eq.~(\ref{eqPhi2}). Now, still for $\mathbf{h}\neq\mathbf{j}$,
\be
   \left(|\mathbf{h};\mathbf{1}\rangle\langle\mathbf{j};\mathbf{1}|\right)^\dagger \left(|\mathbf{h};\mathbf{1}\rangle\langle\mathbf{j};\mathbf{1}|\right)=|\mathbf{j};\mathbf{1}\rangle\langle\mathbf{j};\mathbf{1}|.
\ee
Since $\tilde\Phi$ preserves adjoints and products, this proves Eq.~(\ref{eqPhi2}) also in the case $\mathbf{h}=\mathbf{j}$. Finally, choosing $A_{\bar 1}$ such that $a_{\mathbf{h},\mathbf{h}}=|\mathcal{G}|$ and all other coefficients zero proves Eq.~(\ref{eqPhi3}), which implies that $\tilde\Phi$ is the linear extension of Eqs.~(\ref{eqPhi2}) and~(\ref{eqPhi3}).
\end{proof}
What this lemma demonstrates is that our construction of $\tilde\Phi^{(1)}$ can be interpreted in an alternative way --- at least on ``alignable'' observables. What our embedding map does to those observables is as follows: \emph{write them in the form $|e\rangle\langle e|_1\otimes A_{\bar 1}$, and embed them into the total Hilbert space according to its defining tensor product structure. Then demand that $\tilde\Phi^{(1)}$ maps the invariant part of $|e\rangle\langle e|_1\otimes A_{\bar 1}$ to the invariant part of its embedding $|e\rangle\langle e|_1\otimes A_{\bar 1}\otimes\mathbf{1}^{(M)}$}. This defines a particular, natural embedding of $\mathcal{A}_{\rm alg}^{(N)}$ into $\mathcal{A}_{\rm alg}^{(N+M)}$.

But this suggests directly that our candidate relational trace is defective: \emph{its definition is implicitly based on the choice of particle $1$ as our reference.} Indeed, the following lemma shows that there is a large \emph{class of invariant embeddings}. In particular, choosing another particle as the reference particle will in general lead to inequivalent embeddings. The proof is given by a straightforward calculation and thus omitted.
\begin{lemma}
\label{LemPhiU}
Let $U\in\mathcal{U}_{\rm sym}$ be any symmetry transformation (for example, a QRF transformation). Then there is a unique unital embedding $\tilde\Phi^U$ of $\mathcal{A}_{\rm alg}^{(N)}$ into $\mathcal{A}_{\rm alg}^{(N+M)}$ which satisfies
\be
   \small\tilde\Phi^U\left(\Pi_{\rm alg}^{(N)}(\hat e_1\otimes A_{\bar 1})\right)=\Pi_{\rm alg}^{(N+M)}\left(U(\hat e_1\otimes A_{\bar 1})U^\dagger\otimes\mathbf{1}^{(M)}\right).
\ee
Writing $U=\bigoplus_{\mathbf{h}\in\mathcal{G}^{N-1}} U_{g(\mathbf{h})}^{\otimes N}$, it acts as
\ba
&&\tilde\Phi^U\left(|\mathbf{h};\mathbf{1}\rangle\langle\mathbf{j};\mathbf{1}|\right)\nn\\
&&\q\q=\sum_{\mathbf{g}\in\mathcal{G}^M}\!\!|\mathbf{h},g(\mathbf{h})^{-1}\mathbf{g};\mathbf{1}\rangle\langle \mathbf{j},g(\mathbf{j})^{-1}\mathbf{g};\mathbf{1}|,\label{eqEmbedOther}\\
&&\tilde\Phi^U\left(\Pi_{\mathbf{h};\chi\neq\mathbf{1}}\right)=\sum_{\mathbf{g}\in\mathcal{G}^N} \Pi_{\mathbf{h},\mathbf{g};\chi\neq\mathbf{1}}.
\ea
Via $\Phi^U:=\tilde\Phi^U\circ\Pi_{\rm alg}^{(N)}$, this extends to a completely positive unital map $\Phi^U:\mathcal{L}(\mathcal{H}^{\otimes N})\to\mathcal{L}(\mathcal{H}^{\otimes(N+M)})$.
\end{lemma}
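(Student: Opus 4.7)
\emph{Proof plan.} The strategy is to (i) verify that the two displayed formulas, extended linearly, define a unital $^*$-homomorphism $\tilde\Phi^U:\mathcal{A}_{\rm alg}^{(N)}\to\mathcal{A}_{\rm alg}^{(N+M)}$; (ii) check that this $\tilde\Phi^U$ satisfies the defining identity; (iii) show that the defining identity determines $\tilde\Phi^U$ uniquely on all of $\mathcal{A}_{\rm alg}^{(N)}$; and (iv) conclude that $\Phi^U:=\tilde\Phi^U\circ\Pi_{\rm alg}^{(N)}$ is completely positive and unital. Throughout, I use the decomposition $U=\bigoplus_{\mathbf{h}} U_{g(\mathbf{h})}^{\otimes N}$ from Lemma~\ref{LemDecomposeU} and the fact that $\mathcal{A}_{\rm alg}^{(N)}$ splits as the direct sum of the full matrix algebra on $\mathcal{H}_{\rm phys}^{(N)}=\mathrm{span}\{|\mathbf{h};\mathbf{1}\rangle\}$ and the abelian algebra spanned by the mutually orthogonal projectors $\{\Pi_{\mathbf{h};\chi\neq\mathbf{1}}\}_{\mathbf{h}}$.

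For step (i), the images of these two summands land, respectively, in the matrix algebra on $\mathcal{H}_{\rm phys}^{(N+M)}$ and in $\mathrm{span}\{\Pi_{\mathbf{H};\chi\neq\mathbf{1}}\}_{\mathbf{H}}$, so cross-products between the summands vanish automatically. Multiplicativity within each summand follows from direct computation: the orthogonality $\langle(\mathbf{j},g(\mathbf{j})^{-1}\mathbf{g});\mathbf{1}|(\mathbf{k},g(\mathbf{k})^{-1}\mathbf{g}');\mathbf{1}\rangle=\delta_{\mathbf{j},\mathbf{k}}\delta_{\mathbf{g},\mathbf{g}'}$ yields $\tilde\Phi^U(|\mathbf{h};\mathbf{1}\rangle\langle\mathbf{j};\mathbf{1}|)\,\tilde\Phi^U(|\mathbf{k};\mathbf{1}\rangle\langle\mathbf{l};\mathbf{1}|)=\delta_{\mathbf{j},\mathbf{k}}\,\tilde\Phi^U(|\mathbf{h};\mathbf{1}\rangle\langle\mathbf{l};\mathbf{1}|)$, while mutual orthogonality of the $(N+M)$-particle projectors gives multiplicativity on the second summand. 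Adjoints are preserved by inspection. Unitality uses $\mathbf{1}^{(N)}=\Pi_{\rm phys}^{(N)}+\sum_{\mathbf{h}}\Pi_{\mathbf{h};\chi\neq\mathbf{1}}$: summing the diagonal images ($\mathbf{h}=\mathbf{j}$) over $\mathbf{h}\in\mathcal{G}^{N-1}$ and $\mathbf{g}\in\mathcal{G}^M$ exhausts every block labelled by $\mathbf{H}\in\mathcal{G}^{N+M-1}$, reproducing $\mathbf{1}^{(N+M)}$.

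The heart of the argument is step (ii). Expanding $A_{\bar 1}=\sum_{\mathbf{h},\mathbf{j}}a_{\mathbf{h},\mathbf{j}}|\mathbf{h}\rangle\langle\mathbf{j}|$ and invoking Eq.~(\ref{eqProjProof}) on the left-hand side, one applies the definition of $\tilde\Phi^U$ term by term. For the right-hand side one uses $U|e,\mathbf{h}\rangle=|g(\mathbf{h}),\mathbf{h}g(\mathbf{h})\rangle$, so that after tensoring with $\mathbf{1}^{(M)}=\sum_{\mathbf{g}}|\mathbf{g}\rangle\langle\mathbf{g}|$ the basis vector $|g(\mathbf{h}),\mathbf{h}g(\mathbf{h}),\mathbf{g}\rangle$ lies in $\mathcal{H}^{(N+M)}_{(\mathbf{h},\mathbf{g}g(\mathbf{h})^{-1})}$. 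The key overlap $\langle\mathbf{H};\mathbf{1}|g(\mathbf{h}),\mathbf{h}g(\mathbf{h}),\mathbf{g}\rangle=|\mathcal{G}|^{-1/2}\delta_{\mathbf{H},(\mathbf{h},\mathbf{g}g(\mathbf{h})^{-1})}$, together with the character identity $\sum_{\chi\neq\mathbf{1}}\chi(x)=|\mathcal{G}|\delta_{x,e}-1$ needed for the $\Pi_{\mathbf{H};\chi\neq\mathbf{1}}$ component, lets one read off the coefficients of $\Pi_{\rm alg}^{(N+M)}$ applied to the $(N+M)$-particle operator. The change of variable $\mathbf{r}:=g(\mathbf{h})^{-1}\mathbf{g}$ together with the abelianness of $\mathcal{G}$ then identifies the resulting expression with the one obtained from the left-hand side. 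The main obstacle here is purely notational: one must carefully track how a single summation index $\mathbf{g}$ produces \emph{different} last-$M$ components $g(\mathbf{h})^{-1}\mathbf{g}$ and $g(\mathbf{j})^{-1}\mathbf{g}$ of the ket and bra relation vectors, and reshuffle so that the same dummy variable indexes both.

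For step (iii), plugging $a_{\mathbf{h},\mathbf{j}}=|\mathcal{G}|$ (all other coefficients zero) into the defining identity forces the claimed value of $\tilde\Phi^U(|\mathbf{h};\mathbf{1}\rangle\langle\mathbf{j};\mathbf{1}|)$ whenever $\mathbf{h}\neq\mathbf{j}$; multiplicativity applied to $|\mathbf{h};\mathbf{1}\rangle\langle\mathbf{h};\mathbf{1}|=(|\mathbf{h};\mathbf{1}\rangle\langle\mathbf{j};\mathbf{1}|)(|\mathbf{j};\mathbf{1}\rangle\langle\mathbf{h};\mathbf{1}|)$ then pins down the diagonal images, and plugging $a_{\mathbf{h},\mathbf{h}}=|\mathcal{G}|$ and subtracting yields $\tilde\Phi^U(\Pi_{\mathbf{h};\chi\neq\mathbf{1}})$. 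For step (iv), $\Pi_{\rm alg}^{(N)}$ is completely positive as the sum of the compression $X\mapsto \Pi_{\rm phys}^{(N)}X\Pi_{\rm phys}^{(N)}$ and of $X\mapsto \sum_{\mathbf{h}}(|\mathcal{G}|-1)^{-1}\tr(\Pi_{\mathbf{h};\chi\neq\mathbf{1}}X)\,\Pi_{\mathbf{h};\chi\neq\mathbf{1}}$, both of which admit manifest Kraus decompositions; a $^*$-homomorphism between finite-dimensional matrix algebras is automatically completely positive; hence $\Phi^U$ is completely positive, and unitality follows from $\Pi_{\rm alg}^{(N)}(\mathbf{1})=\mathbf{1}$ together with the unitality already established for $\tilde\Phi^U$.
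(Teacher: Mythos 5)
Your proposal is correct. The paper omits the proof of Lemma~\ref{LemPhiU} as a ``straightforward calculation,'' and your argument is exactly the intended one: it generalizes the printed proof of Lemma~\ref{LemAlternative} (the case $U=\mathbf{1}$), with the only new ingredients being the identity $U|e,\mathbf{h}\rangle=|g(\mathbf{h}),\mathbf{h}g(\mathbf{h})\rangle$, the observation that $|g(\mathbf{h}),\mathbf{h}g(\mathbf{h}),\mathbf{g}\rangle$ lies in the sector labelled $(\mathbf{h},g(\mathbf{h})^{-1}\mathbf{g})$, and the change of variable $\mathbf{g}\mapsto g(\mathbf{h})\mathbf{g}$ in the $\chi\neq\mathbf{1}$ part — all of which you handle correctly, as are the verification of the $^*$-homomorphism properties, the uniqueness argument via matrix units, and the complete positivity of $\Pi_{\rm alg}^{(N)}$.
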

Since $\Pi_{\rm alg}^{(N)}=\Pi_{\rm alg}^{(N)}\circ \Pi_{\rm inv}^{(N)}$, we can write the left-hand side of the first equation above as
\[
   \tilde\Phi^U\left(\Pi_{\rm alg}^{(N)}(\hat e_1\otimes A_{\bar 1})\right)=\tilde\Phi^U\left(\Pi_{\rm alg}^{(N)}(U(\hat e_1\otimes A_{\bar 1})U^\dagger)\right).
\]
In particular, if we choose $U$ as the QRF transformation of Theorem~\ref{lem_align1} that ``changes from the perspective of particle $1$ to particle $i$'', we obtain a natural invariant embedding $\tilde\Phi^{(i)}$ ``relative to particle $i$''. It satisfies
\begin{equation}
   \tilde\Phi^{(i)}\left(\Pi_{\rm alg}^{(N)}(\hat e_i\otimes A_{\bar i})\right)=\Pi_{\rm alg}^{(N+M)}\left(\hat e_i\otimes A_{\bar i}\otimes\mathbf{1}^{(M)}\right),
   \label{eqNonInvariantEmbedding}
\end{equation}
and acts on the basis elements of $\mathcal{A}_{\rm phys}$ as
\[
\tilde\Phi^{(i)}\left(|\mathbf{h};\mathbf{1}\rangle\langle\mathbf{j};\mathbf{1}|\right)=\sum_{\mathbf{g}\in\mathcal{G}^M}|\mathbf{h},\mathbf{g};\mathbf{1}\rangle\langle \mathbf{j},h_{i-1}^{-1}j_{i-1}\mathbf{g};\mathbf{1}|.
\]
For a better understanding of the \emph{physical} reason of this non-uniqueness of embedding, let us reconsider our intuitive construction of $\tilde\Phi^{(1)}$ above. First, note that all the $\tilde\Phi^U$ satisfy Eq.~(\ref{eqPhi1}), which had a clear physical motivation. However, the $\Phi^{(i)}$ violate Eq.~(\ref{eqPhi2}), which we had motivated purely by formal analogy. To shed light on Eq.~\eqref{eqPhi2} and its generalization, Eq.~(\ref{eqEmbedOther}), suppose for concreteness that we are interested in embedding $N=2$ particles into $N+M=3$ particles, and our group $\mathcal{G}$ is the cyclic group $\mathbb{Z}_n$ with addition modulo $n$. Consider the orthogonal projector $|\psi\rangle\langle\psi|$, where
\begin{equation}
   |\psi\rangle=\frac 1 {\sqrt{2}}\left(|1;\mathbf{1}\rangle+|2;\mathbf{1}\rangle\right).
   \label{eqPsiFigure}
\end{equation}
This is an element of $\mathcal{H}^{(2)}_{\rm phys}$, the $2$-particle subspace of invariant states, and it describes a superposition of two particles either being one or two places apart, see Figure~\ref{fig_superposition}. Note that there is no origin that would locate the particles absolutely; all we have is their relations.

\begin{figure}[hbt]
\begin{center}
\includegraphics[width=.2\textwidth]{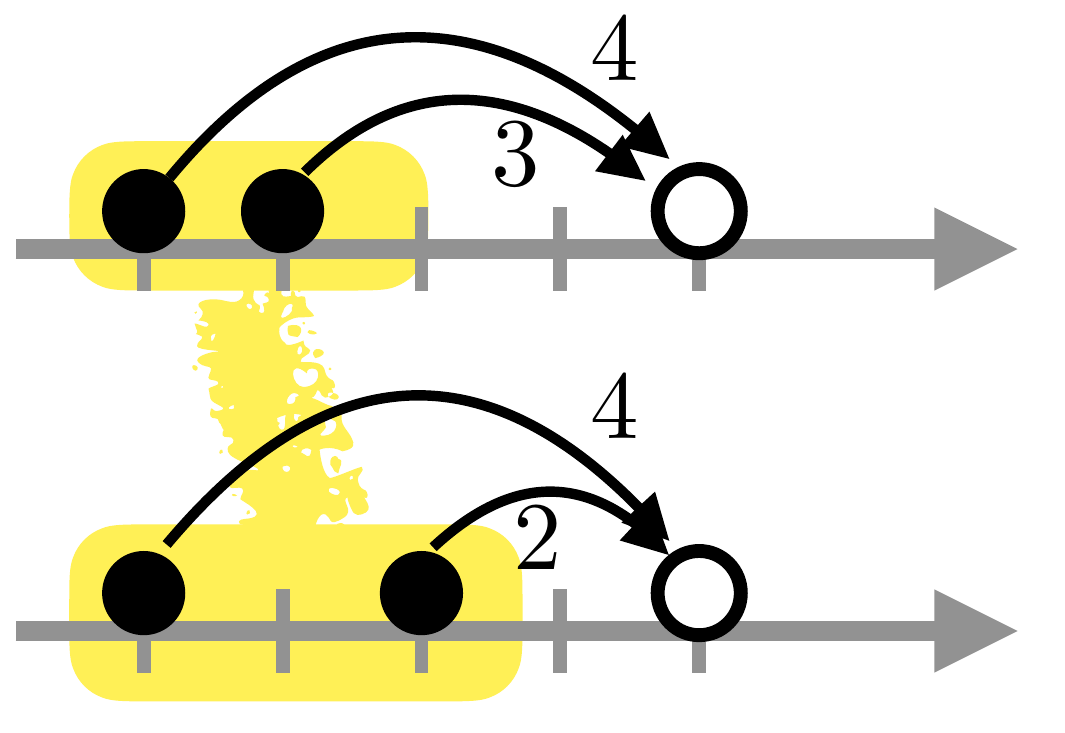}
\caption{The state $|\psi\rangle$ in~(\ref{eqPsiFigure}) describes a superposition (indicated in yellow) of two particles (black dots) being either one or two places apart. If there is a third particle (white dot) such that the resulting state is still identical to the pure state $|\psi\rangle$ on the first two particles, then that particle should carry no information as to ``which branch'' is actualized. That is, its relation to the first two particles should be the same in both branches. But whether this is the case depends on our convention of how we \emph{define} its relation to the first two particles. This results in different embedding maps $\Phi^U$.}
\label{fig_superposition}
\end{center}
\end{figure}

Now suppose we would like to embed the corresponding observable $|\psi\rangle\langle\psi|$ into the three-particle observables. The essence of the problem lies in embedding $|h;\mathbf{1}\rangle\langle j;\mathbf{1}|$, where $h=1$ and $j=2$. The local state of the two particles will remain coherent if the third particle carries no information on whether configuration $h$ or $j$ is actualized --- that is, if the properties of the third particle are the same in both branches of the superposition. But the \emph{only} properties of the third particle are its \emph{relations} to the other two particles. Thus, if we complete $|\psi\rangle$ to a state on three particles
\begin{equation}
   |\Psi\rangle=\frac 1 {\sqrt{2}}\left(|h,g;\mathbf{1}\rangle+|j,g';\mathbf{1}\rangle\right),
   \label{eqAssignOne}
\end{equation}
then the local state of the first two particles will remain coherent if and only if the third particle inside the configuration $(h,g)$ has \emph{the same relation to the first two particles} as the third particle in the configuration $(j,g')$. But the crucial insight is that this will \emph{depend on what we mean by ``relation to the first two particles''}.

For instance, suppose that $g=4$. If we choose the convention to say that the relation is identical if \emph{the relation to the first particle is the same}, then this will be the case if $g'=g=4$. But if we demand instead that the \emph{relation to the second particle is the same}, then we need $g'=5$. Our definition of $\Phi^{(1)}$ is implicitly relying on the former convention, while $\Phi^{(2)}$ would rely on the latter. The reason why Eq.~(\ref{eqPhi2}) (for $\Phi=\Phi^{(1)}$) looks so simple is that we have implicitly labelled the relations $\mathbf{h}\in\mathcal{G}^{N-1}$ as \emph{relative to the first particle} in all of this work, recall Eq.~(\ref{eqRelationFirstParticle}). This is no loss of generality, and it had no implications whatsoever for Section~\ref{SecTechnical}, but here it becomes relevant. The factor of $h_{i-1}^{-1} j_{i-1}$ in Eq.~(\ref{eqEmbedOther}) adapts the convention. Note that it does not alter the pairwise relations among the first $N$ particles, or the pairwise relations among the last $M$ particles, but only the relation between the two groups of particles.

Is there a way to escape the non-uniqueness of embeddings via some formal construction that is manifestly relational, but does not depend on a choice of reference within the $N$-particle subsystem relative to which the relation of the new $M$ particles is defined? The right-hand side of Eq.~(\ref{eqNonInvariantEmbedding}) shows that the maps $\tilde\Phi^{(i)}$ embed the \emph{invariant} operator $\Pi_{\rm alg}^{(N)}(\hat e_i\otimes A_{\bar i})$ by embedding the original, \emph{non-invariant} operator $\hat e_i\otimes A_{\bar i}$ into the total Hilbert space, followed by the projection into the global  subalgebra $\mathcal{A}_{\rm alg}$. While the resulting map $\tilde\Phi^{(i)}$ is invariant, its \emph{definition} is therefore not. Can we perhaps make the definition invariant by embedding not $\hat e_i\otimes A_{\bar i}$ directly, but its invariant part? The following lemma answers this question in the negative:
\begin{lemma}
\label{LemEmbeddingHomo}
Define the map $\tilde\Phi:\mathcal{A}_{\rm alg}^{(N)}\to\mathcal{A}_{\rm alg}^{(N+M)}$ as
\be
   \tilde\Phi(A^{(N)}):=\Pi_{\rm alg}^{(N+M)}\left(A^{(N)}\otimes\mathbf{1}^{(M)}\right).
\ee
Then this map is not a valid embedding. Namely, it is not in general multiplicative, i.e.\ there exist $A,B\in\mathcal{A}_{\rm alg}^{(N)}$ with $\tilde\Phi(AB)\neq\tilde\Phi(A)\tilde\Phi(B)$.
\end{lemma}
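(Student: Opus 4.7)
The plan is to exhibit a single idempotent $A\in\mathcal{A}_{\rm alg}^{(N)}$ for which $\tilde\Phi(A)$ fails to be idempotent; taking $B=A$ then gives $\tilde\Phi(A\cdot A)=\tilde\Phi(A)\neq\tilde\Phi(A)\tilde\Phi(A)$. A convenient choice is
\[
  A=|\mathbf{h};\mathbf{1}\rangle\langle\mathbf{h};\mathbf{1}|\;\in\;\mathcal{A}_{\rm phys}^{(N)}\subset\mathcal{A}_{\rm alg}^{(N)}
\]
for any fixed $\mathbf{h}\in\mathcal{G}^{N-1}$, under the (clearly necessary) hypotheses $M\geq 1$ and $n:=|\mathcal{G}|\geq 2$.

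The first step is to expand $A\otimes\mathbf{1}^{(M)}$ in the $(N+M)$-particle eigenbasis. Inverting Eq.~(\ref{eqChi}) by character orthogonality gives $|l,\mathbf{k}l\rangle=\tfrac{1}{\sqrt{n}}\sum_{\chi}\chi(l)|\mathbf{k};\chi\rangle$. Inserting this into $A\otimes\mathbf{1}^{(M)}=\tfrac{1}{n}\sum_{g_1,g_1',\mathbf{g}}|g_1,\mathbf{h}g_1,\mathbf{g}\rangle\langle g_1',\mathbf{h}g_1',\mathbf{g}|$, then substituting $\tilde{\mathbf{g}}:=g_1^{-1}\mathbf{g}$ and $g:=g_1'^{-1}g_1$, and finally evaluating $\sum_{g_1'}\chi(g_1')\overline{\chi'(g_1')}=n\,\delta_{\chi,\chi'}$, yields
\[
   A\otimes\mathbf{1}^{(M)}=\frac{1}{n}\sum_{g\in\mathcal{G},\,\tilde{\mathbf{g}}\in\mathcal{G}^M,\,\chi\in\hat{\mathcal{G}}}\chi(g)\,|(\mathbf{h},\tilde{\mathbf{g}});\chi\rangle\langle(\mathbf{h},g\tilde{\mathbf{g}});\chi|.
\]

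Next, I apply $\Pi_{\rm alg}^{(N+M)}$ using the block structure from Lemma~\ref{LemAAlg} and Theorem~\ref{TheAlgebraProjections}. The $\chi=\mathbf{1}$ summand already lies in $\mathcal{H}_{\rm phys}^{(N+M)}$ and survives intact, producing the phys-part $P:=\tfrac{1}{n}\sum_{g,\tilde{\mathbf{g}}}|(\mathbf{h},\tilde{\mathbf{g}});\mathbf{1}\rangle\langle(\mathbf{h},g\tilde{\mathbf{g}});\mathbf{1}|$. For $\chi\neq\mathbf{1}$ only diagonal entries (of the form $\Pi_{\mathbf{K};\chi\neq\mathbf{1}}$) survive the projection, forcing $g\tilde{\mathbf{g}}=\tilde{\mathbf{g}}$. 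In an Abelian group with $M\geq 1$ this forces $g=e$ and thus $\chi(g)=1$; averaging over the $n-1$ non-trivial characters yields $\tfrac{1}{n}Q$ with $Q:=\sum_{\mathbf{a}\in\mathcal{G}^M}\Pi_{(\mathbf{h},\mathbf{a});\chi\neq\mathbf{1}}$. Hence $\tilde\Phi(A)=P+\tfrac{1}{n}Q$.

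Finally, a substitution $g\mapsto g_2 g_1$ in the product collapses one index via the group law and gives $P^2=P$; $Q^2=Q$ holds because $Q$ is a sum of pairwise orthogonal subspace projectors; and $PQ=QP=0$ because $P$ is supported on $\mathcal{H}_{\rm phys}^{(N+M)}$ while $Q$ lives on its orthogonal complement inside $\mathcal{A}_{\rm alg}^{(N+M)}$. Therefore $\tilde\Phi(A)^2=P+\tfrac{1}{n^2}Q$, whereas $\tilde\Phi(A^2)=\tilde\Phi(A)=P+\tfrac{1}{n}Q$. Since $Q\neq 0$ for $M\geq 1$, $n\geq 2$, and $\tfrac{1}{n}\neq\tfrac{1}{n^2}$, these two operators differ, completing the counterexample. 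The only nontrivial step is the Fourier-type rewrite producing the coefficient $\tfrac{1}{n}$ in front of $Q$; this coefficient encodes precisely the ``leakage'' of $A\otimes\mathbf{1}^{(M)}$ out of $\mathcal{A}_{\rm alg}^{(N+M)}$ that $\Pi_{\rm alg}^{(N+M)}$ cannot repair, and it is the structural obstruction to multiplicativity.
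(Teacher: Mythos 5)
Your proof is correct, and the underlying computation is the same one the paper performs: expanding $A\otimes\mathbf{1}^{(M)}$ in the $|\mathbf{K};\chi\rangle$ eigenbasis and projecting with $\Pi_{\rm alg}^{(N+M)}$ reproduces exactly the paper's formula $\tilde\Phi(|\mathbf{h};\mathbf{1}\rangle\langle\mathbf{j};\mathbf{1}|)=\frac1{|\mathcal{G}|}\sum_{g,\mathbf{g}}|\mathbf{h},\mathbf{g};\mathbf{1}\rangle\langle\mathbf{j},g\mathbf{g};\mathbf{1}|+\frac1{|\mathcal{G}|}\delta_{\mathbf{h},\mathbf{j}}\sum_{\mathbf{g}}\Pi_{\mathbf{h},\mathbf{g};\chi\neq\mathbf{1}}$ (your $P+\tfrac1nQ$ is its $\mathbf{h}=\mathbf{j}$ case). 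The only difference is the choice of witness: you take $A=B=|\mathbf{h};\mathbf{1}\rangle\langle\mathbf{h};\mathbf{1}|$ and show $\tilde\Phi(A)^2=P+\tfrac1{n^2}Q\neq P+\tfrac1nQ=\tilde\Phi(A)$, whereas the paper multiplies the off-diagonal pair $|\mathbf{h};\mathbf{1}\rangle\langle\mathbf{j};\mathbf{1}|$ and $|\mathbf{j};\mathbf{1}\rangle\langle\mathbf{h};\mathbf{1}|$ with $\mathbf{h}\neq\mathbf{j}$; in both cases the obstruction is the same $\tfrac1nQ$ contribution in the $\chi\neq\mathbf{1}$ sectors. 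A marginal advantage of your diagonal counterexample is that it does not require two distinct relation sectors $\mathbf{h}\neq\mathbf{j}$, only $|\mathcal{G}|\geq 2$ and $M\geq 1$, which you correctly flag as the necessary hypotheses.
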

\begin{proof}
A tedious but straightforward calculation yields
\begin{eqnarray}
\tilde\Phi\left(|\mathbf{h};\mathbf{1}\rangle\langle\mathbf{j};\mathbf{1}|\right)&=&\frac 1 {|\mathcal{G}|} \sum_{g\in\mathcal{G}}\sum_{\mathbf{g}\in\mathcal{G}^M}|\mathbf{h},\mathbf{g};\mathbf{1}\rangle\langle\mathbf{j},g\mathbf{g};\mathbf{1}|\nn\\
&&+\frac 1 {|\mathcal{G}|} \delta_{\mathbf{h},\mathbf{j}}\sum_{\mathbf{g}\in\mathcal{G}^M}\Pi_{\mathbf{h},\mathbf{g};\chi\neq\mathbf{1}}.
\end{eqnarray}
But then, for $\mathbf{h}\neq\mathbf{j}$, we obtain
\[
   \tilde\Phi(|\mathbf{h};\mathbf{1}\rangle\langle\mathbf{j};\mathbf{1}|) \tilde\Phi(|\mathbf{j};\mathbf{1}\rangle\langle\mathbf{h};\mathbf{1}|)\neq \tilde\Phi(|\mathbf{h};\mathbf{1}\rangle\langle\mathbf{h};\mathbf{1}|).
\]
\vskip -2em
\end{proof}
A similar argument applies if we try to embed $\mathcal{A}_{\rm inv}^{(N)}$ into $\mathcal{A}_{\rm inv}^{(N+M)}$: the analog of the above construction, with $\Pi_{\rm alg}^{(N+M)}$ replaced by $\Pi_{\rm inv}^{(N+M)}$, does not yield a valid embedding.

\subsection{A class of invariant traces}
\label{SubsecClassTraces}
For every embedding $\Phi^U$ of Lemma~\ref{LemPhiU}, we obtain a corresponding ``invariant trace'':
\begin{lemma}
\label{LemTrinvFinal}
For every $U=\bigoplus_{\mathbf{h}\in\mathcal{G}^{N-1}} U_{g(\mathbf{h})}^{\otimes N}\in\mathcal{U}_{\rm sym}$, define a corresponding ``invariant trace'' ${\rm Trinv}^U:=\left(\Phi^U\right)^\dagger$. It is trace-preserving, and it maps invariant operators to invariant operators. In particular, ${\rm Trinv}^U_{(M)}\left(\mathcal{A}_{\rm alg}^{(N+M)}\right)=\mathcal{A}_{\rm alg}^{(N)}$, and can be explicitly written in the following form:
\begin{widetext}
\[
{\rm Trinv}^U_{(M)}\,\rho= \sum_{\mathbf{h},\mathbf{j}\in\mathcal{G}^{N-1}}|\mathbf{h};\mathbf{1}\rangle\langle\mathbf{j};\mathbf{1}| \sum_{\mathbf{g}\in\mathcal{G}^M} \langle\mathbf{h},g(\mathbf{h})^{-1}\mathbf{g};\mathbf{1}|\rho|\mathbf{j},g(\mathbf{j})^{-1}\mathbf{g};\mathbf{1}\rangle + \sum_{\mathbf{h}\in\mathcal{G}^{N-1}} \Pi_{\mathbf{h};\chi\neq\mathbf{1}}\sum_{\mathbf{g}\in\mathcal{G}^M}\frac{{\rm tr}(\Pi_{\mathbf{h},\mathbf{g};\chi\neq\mathbf{1}}\rho)}{|\mathcal{G}|-1}.
\]
\end{widetext}
\end{lemma}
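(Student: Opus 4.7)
The plan is to deduce every claim of the lemma from the concrete formulas for $\tilde\Phi^U$ and $\Phi^U=\tilde\Phi^U\circ\Pi_{\rm alg}^{(N)}$ given in Lemma~\ref{LemPhiU}, together with basic properties of the Hilbert--Schmidt adjoint. First I would observe that, since $\Pi_{\rm alg}^{(N)}$ is an orthogonal projection and hence self-adjoint under $\langle\cdot,\cdot\rangle_{\rm HS}$, one has
\[
   {\rm Trinv}^U=(\Phi^U)^\dagger=\Pi_{\rm alg}^{(N)}\circ(\tilde\Phi^U)^\dagger,
\]
so that by construction ${\rm Trinv}^U(\rho)\in\mathcal{A}_{\rm alg}^{(N)}$ for every $\rho$, giving $\mathrm{Trinv}^U_{(M)}(\mathcal{A}_{\rm alg}^{(N+M)})\subseteq\mathcal{A}_{\rm alg}^{(N)}$. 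The reverse inclusion follows from the fact that $\tilde\Phi^U$ is a unital $^*$-embedding of $\mathcal{A}_{\rm alg}^{(N)}$ into $\mathcal{A}_{\rm alg}^{(N+M)}$ and thus injective; its HS-adjoint, restricted to the range $\mathcal{A}_{\rm alg}^{(N+M)}$, is therefore surjective onto $\mathcal{A}_{\rm alg}^{(N)}$.

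Trace preservation is immediate from unitality of $\Phi^U$: for any $\rho$,
\[
  \tr({\rm Trinv}^U(\rho))=\langle \mathbf{1},{\rm Trinv}^U(\rho)\rangle_{\rm HS}=\langle \Phi^U(\mathbf{1}),\rho\rangle_{\rm HS}=\tr(\rho).
\]
To see that invariant inputs yield invariant outputs beyond the $\mathcal{A}_{\rm alg}$-statement, I would note that any $U'\in\mathcal{U}_{\rm sym}^{(N)}$ leaves both of the subspaces spanned by $\{|\mathbf{h};\mathbf{1}\rangle\}$ and $\{|\mathbf{h};\chi\neq\mathbf{1}\rangle\}$ stable up to characters, and combine this with the block-diagonal form in Lemma~\ref{lem_algdecomp} to see that $\mathcal{A}_{\rm alg}^{(N)}\subset\mathcal{A}_{\rm inv}^{(N)}$.

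For the explicit formula, I would expand any $A\in\mathcal{A}_{\rm alg}^{(N)}$ in the Hilbert--Schmidt-orthogonal basis $\{|\mathbf{h};\mathbf{1}\rangle\langle\mathbf{j};\mathbf{1}|\}_{\mathbf{h},\mathbf{j}}\cup\{\Pi_{\mathbf{h};\chi\neq\mathbf{1}}\}_{\mathbf{h}}$ of $\mathcal{A}_{\rm alg}^{(N)}$, noting that $\||\mathbf{h};\mathbf{1}\rangle\langle\mathbf{j};\mathbf{1}|\|_{\rm HS}^2=1$ and $\|\Pi_{\mathbf{h};\chi\neq\mathbf{1}}\|_{\rm HS}^2=|\mathcal{G}|-1$. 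Writing ${\rm Trinv}^U(\rho)=\sum_{\mathbf{h},\mathbf{j}}c_{\mathbf{h},\mathbf{j}}|\mathbf{h};\mathbf{1}\rangle\langle\mathbf{j};\mathbf{1}|+\sum_{\mathbf{h}}d_{\mathbf{h}}\Pi_{\mathbf{h};\chi\neq\mathbf{1}}$, each coefficient is obtained by pairing with the corresponding basis element:
\[
  c_{\mathbf{h},\mathbf{j}}=\langle \tilde\Phi^U(|\mathbf{h};\mathbf{1}\rangle\langle\mathbf{j};\mathbf{1}|),\rho\rangle_{\rm HS},\quad d_{\mathbf{h}}=\frac{\langle \tilde\Phi^U(\Pi_{\mathbf{h};\chi\neq\mathbf{1}}),\rho\rangle_{\rm HS}}{|\mathcal{G}|-1}.
\]
Substituting the explicit action of $\tilde\Phi^U$ from Lemma~\ref{LemPhiU} and using $\langle |\alpha\rangle\langle\beta|,\rho\rangle_{\rm HS}=\langle\alpha|\rho|\beta\rangle$ reproduces the two sums in the claimed formula.

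The only step with a mild conceptual subtlety is verifying the surjectivity of ${\rm Trinv}^U_{(M)}|_{\mathcal{A}_{\rm alg}^{(N+M)}}$, which requires the injectivity of $\tilde\Phi^U$ on $\mathcal{A}_{\rm alg}^{(N)}$; this can be read off directly from the Lemma~\ref{LemPhiU} formulas, since distinct basis elements in $\mathcal{A}_{\rm alg}^{(N)}$ are sent to operators supported on disjoint subsets of the basis of $\mathcal{A}_{\rm alg}^{(N+M)}$. Otherwise the proof is a straightforward bookkeeping exercise in Hilbert--Schmidt duality.
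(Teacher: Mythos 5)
Your proof is correct; the paper explicitly omits its own proof as ``straightforward,'' and your Hilbert--Schmidt duality argument (adjoint of the composition, unitality for trace preservation, injectivity of $\tilde\Phi^U$ for surjectivity, and coefficient extraction in the HS-orthogonal basis with $\|\Pi_{\mathbf{h};\chi\neq\mathbf{1}}\|_{\rm HS}^2=|\mathcal{G}|-1$) is exactly the routine verification the authors have in mind. The only cosmetic point is that $(\tilde\Phi^U)^\dagger$ is strictly a map on $\mathcal{A}_{\rm alg}^{(N+M)}$, so your identity ${\rm Trinv}^U=\Pi_{\rm alg}^{(N)}\circ(\tilde\Phi^U)^\dagger$ implicitly includes a precomposition with $\Pi_{\rm alg}^{(N+M)}$, which changes nothing in the argument.
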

We omit the straightforward proof. 

Since the different invariant traces formalize different ways to embed the two-particle observables into the three-particle observables, it is clear that the answer to the question raised at the beginning of this section will depend on the embedding. In other words, the question of whether the phase $\theta$ is accessible on the first two particles depends on the operational definition of how to access the first two particles within the \emph{total} three-particle Hilbert space.

To illustrate this fact, let us apply two different invariant traces to the paradox. In contrast to the usual partial trace, every invariant trace yields identical results when applied to the equivalent states $\Psi$ and $\Psi'$ in Eqs.~(\ref{eqThreePInnocuous}) and~(\ref{eqThreeParticleState}) (hence the name ``invariant''). That is,
\be
   {\rm Trinv}_3^U |\Psi\rangle\langle\Psi|={\rm Trinv}_3^U |\Psi'\rangle\langle\Psi'|
\ee
for all $U\in\mathcal{U}_{\rm sym}$. However, different $U$ yield different results. For example, consider the invariant trace ${\rm Trinv}^{\mathbf{1}}$ which is the adjoint of the embedding $\Phi^{(1)}$ that we have constructed in Subsection~\ref{SubsecNonUniqueEmbd}. A straightforward calculation gives
\begin{eqnarray*}
	   {\rm Trinv}^{\mathbf{1}}_3|\Psi\rangle\langle\Psi|&=&\frac 1 {2n}|h;\mathbf{1}\rangle\langle h;\mathbf{1}|+\frac 1 {2n}|j;\mathbf{1}\rangle\langle j;\mathbf{1}|\\
	   &&+ \frac 1 {2n}\Pi_{h;\chi\neq\mathbf{1}}+\frac 1 {2n} \Pi_{j;\chi\neq\mathbf{1}}\\
	   &=&\Pi_{\rm inv}^{(2)}\left[ |e\rangle\langle e|_1\otimes\left(\frac 1 2 |h\rangle\langle h|+\frac 1 2|j\rangle\langle j|\right)_2\right],
\end{eqnarray*}
Up to observational equivalence, we hence get a mixed state
\[
   {\rm Trinv}^{\mathbf{1}}_3|\Psi\rangle\langle\Psi|\sim |e\rangle\langle e|_1\otimes \left(\frac 1 2 |h\rangle\langle h|+\frac 1 2|j\rangle\langle j|\right)_2.
\]
In particular, the phase $\theta$ has disappeared. This is not surprising: ultimately, ${\rm Trinv^{\mathbf{1}}}$ amounts to taking the usual partial trace in the representation of the state relative to particle $1$, i.e.\ of the state $|\Psi'\rangle$ of Eq.~(\ref{eqThreeParticleState}).

On the other hand, consider the symmetry transformation $U$ from Example~\ref{ExCenterMass} which transforms to the center of mass. Like in Ref.~\cite{Angelo}, let us choose $m_1$ and $m_2$ such that $m_1a = m_2 b$. Using Lemma~\ref{LemTrinvFinal}, It is clear that
\be
   {\rm Trinv}_3^{U}\left(\Pi_{\mathbf{h};\chi\neq\mathbf{1}}\right)=\Pi_{h_1;\chi\neq\mathbf{1}}
\ee
and similarly for $\mathbf{h}$ replaced by $\mathbf{j}$. Furthermore,
\begin{widetext}
\begin{eqnarray}
   {\rm Trinv}_3^{U}(|\mathbf{l};\mathbf{1}\rangle\langle\mathbf{p};\mathbf{1}|)&=&\sum_{h,j\in\mathbb{Z}_n}|h;\mathbf{1}\rangle\langle j;\mathbf{1}|\sum_{g\in\mathbb{Z}_n} \langle h,g+\left\lfloor\frac{m_2}m h\right\rfloor;\mathbf{1}|\mathbf{l};\mathbf{1}\rangle\langle\mathbf{p};\mathbf{1}|j,g+\left\lfloor\frac{m_2}m j\right\rfloor;\mathbf{1}\rangle\nn\\
   &=&|l_1;\mathbf{1}\rangle\langle p_1;\mathbf{1}|\delta_{l_2-\lfloor\frac{m_2}m l_1\rfloor,p_2-\lfloor \frac{m_2}m p_1\rfloor}
\end{eqnarray}
\end{widetext}
for all $\mathbf{l},\mathbf{p}\in\mathbb{Z}_n^2$. Since ${\rm Trinv}_3^U|\Psi\rangle\langle\Psi|={\rm Trinv}_3^U\left(\Pi_{\rm alg}(|\Psi\rangle\langle\Psi|)\right)$, we can expand $\Pi_{\rm alg}(|\Psi\rangle\langle\Psi|)$ into basis elements and apply the above equations. As a result, this yields exactly Eq.~(\ref{eqProjection1}) (since $a$ is an integer, it turns out that we can ignore all $\lfloor\cdot\rfloor$). That is,
\be
   {\rm Trinv}_3^U\left(\strut |\Psi\rangle\langle\Psi|\right)=\Pi_{\rm inv}\left(\strut |\psi\rangle\langle\psi|\right)\sim |\psi\rangle\langle\psi|.
\ee
That is, up to observational equivalence, we obtain the original pure alignable state $|\psi\rangle$ of Eq.~(\ref{eqOriginal}). Moreover, in this observational equivalence class, $|\psi\rangle$ is \emph{unique up to symmetry equivalence}: namely, if there is another pure alignable state $|\psi'\rangle$ with ${\rm Trinv}_3^U(|\Psi\rangle\langle\Psi|)\sim|\psi'\rangle\langle\psi'|$, then $\psi\sim\psi'$. But due to Lemma~\ref{LemProjection}, this implies that $\psi\simeq\psi'$. Thus, in particular, there is (up to a global phase) a unique representation of this state relative to the $i$-th particle, $|\psi\rangle\simeq |e\rangle_i\otimes |\varphi\rangle_{\bar i}$. According to Eq.~(\ref{eqOriginal2}), it is
\be
   |\varphi\rangle_{\bar 1}=\frac 1 {\sqrt{2}}\left(\strut |a+b\rangle_2+e^{i\theta}|-a-b\rangle_2\right)
\ee
relative to the first particle. Thus, under the ``center of mass'' relational trace ${\rm Trinv}^U$, the phase $\theta$ survives, in contrast to the result for ${\rm Trinv}^{\mathbf{1}}$.

\subsection{Definition of the relational trace}

Recalling the invariant algebra inclusions $\ca_{\rm phys}\subset\ca_{\rm alg}\subset\ca_{\rm inv}$, we have thus far focused on constructing an invariant trace for $\ca_{\rm alg}$, since by Lemma~\ref{LemAAlg}, it is the smallest algebra containing the invariant part of all alignable states and observables. However, in the previous subsection, we have seen that there does not exist a unique invariant trace on it, and the same conclusion applies to $\mathcal{A}_{\rm inv}$. We will now show that, by contrast, there \emph{does} exists a natural embedding of the algebra $\ca_{\rm phys}^{(N)}$ of relational $N$-particle operators into the  algebra $\ca_{\rm phys}^{(N+M)}$ of relational $(N+M)$-particle operators. This will also lead to a natural definition of an invariant trace in terms of relational states.

This trace, which we hence call the \emph{relational trace}, has a natural and simple definition that is manifestly invariant under relative translations between the two particle groups. It is therefore independent of the various physically distinct conventions discussed in the previous subsections. On the relational subspace $\ch_{\rm phys}$, the paradox is therefore unambiguously resolved. It is sufficient to focus on this subspace because Lemma~\ref{LemProjection} tells us that the relational observables are tomographically complete for the invariant information in all alignable states. Thus, $\ch_{\rm phys}$ contains all the information we are interested in.
\begin{lemma}\label{lem_nathomo}
The map $\tilde\Phi_{\rm phys}:\ca_{\rm phys}^{(N)}\rightarrow\ca_{\rm phys}^{(N+M)}$, defined by
\ba
\tilde\Phi_{\rm phys}(A_{\rm phys}^{(N)}):=\hat\Pi_{\rm phys}^{(N+M)}\left(A_{\rm phys}^{(N)}\otimes \mathbf{1}^{(M)}\right)
\ea
is an embedding. It simplifies to
\ba
\tilde\Phi_{\rm phys}(A_{\rm phys}^{(N)})=A_{\rm phys}^{(N)}\otimes \Pi_{\rm phys}^{(M)},
\ea
but it is \emph{not} unital.
\end{lemma}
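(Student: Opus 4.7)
The plan is to first prove the simplification $\tilde\Phi_{\rm phys}(A_{\rm phys}^{(N)})=A_{\rm phys}^{(N)}\otimes\Pi_{\rm phys}^{(M)}$, as all remaining claims (embedding property and failure of unitality) will follow almost immediately from it. By Lemma~\ref{lem_proj}, the coherent group averaging projector factors nicely across the tensor product, namely
\begin{equation}
  \Pi_{\rm phys}^{(N+M)} = \frac{1}{|\cg|}\sum_{g\in\cg} U_g^{\otimes N}\otimes U_g^{\otimes M}.
\end{equation}
Since $A_{\rm phys}^{(N)}\in\ca_{\rm phys}^{(N)}$ is supported on $\ch_{\rm phys}^{(N)}$ (equivalently, $A_{\rm phys}^{(N)}=\Pi_{\rm phys}^{(N)}A_{\rm phys}^{(N)}\Pi_{\rm phys}^{(N)}$), and every vector in $\ch_{\rm phys}^{(N)}$ is fixed by each $U_g^{\otimes N}$, I get $U_g^{\otimes N}A_{\rm phys}^{(N)}=A_{\rm phys}^{(N)}U_g^{\otimes N}=A_{\rm phys}^{(N)}$ for all $g$. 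Conjugating $A_{\rm phys}^{(N)}\otimes\mathbf 1^{(M)}$ on both sides by the above double sum therefore simplifies to
\begin{equation}
  \frac{1}{|\cg|^2}\sum_{g,g'\in\cg} A_{\rm phys}^{(N)}\otimes U_g^{\otimes M}U_{g'}^{\otimes M\dagger} = A_{\rm phys}^{(N)}\otimes \Pi_{\rm phys}^{(M)},
\end{equation}
where the reduction of the double sum to a single sum uses a change of variables and the group averaging identity of Lemma~\ref{lem_proj} again. This step is the only real calculation; everything else is immediate.

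Next, I verify that the image really lies in $\ca_{\rm phys}^{(N+M)}$, i.e.\ is supported on $\ch_{\rm phys}^{(N+M)}$. The product state space $\ch_{\rm phys}^{(N)}\otimes\ch_{\rm phys}^{(M)}$ is a subspace of $\ch_{\rm phys}^{(N+M)}$ since each factor is pointwise invariant under its own $U_g^{\otimes N}$ and $U_g^{\otimes M}$, hence under the joint $U_g^{\otimes(N+M)}$. Thus $A_{\rm phys}^{(N)}\otimes\Pi_{\rm phys}^{(M)}$ is supported on $\ch_{\rm phys}^{(N+M)}$, as required. Linearity of $\tilde\Phi_{\rm phys}$ is clear from the simplified form. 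For multiplicativity, given $A,B\in\ca_{\rm phys}^{(N)}$,
\begin{equation}
  \tilde\Phi_{\rm phys}(A)\tilde\Phi_{\rm phys}(B) = (A\otimes\Pi_{\rm phys}^{(M)})(B\otimes\Pi_{\rm phys}^{(M)}) = AB\otimes\Pi_{\rm phys}^{(M)} = \tilde\Phi_{\rm phys}(AB),
\end{equation}
using $(\Pi_{\rm phys}^{(M)})^2=\Pi_{\rm phys}^{(M)}$ from Lemma~\ref{lem_proj}. Preservation of adjoints follows from self-adjointness of $\Pi_{\rm phys}^{(M)}$. Hence $\tilde\Phi_{\rm phys}$ is a $^*$-homomorphism, i.e.\ an embedding in the sense of Section~\ref{Section:Paradox}.

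Finally, I check that $\tilde\Phi_{\rm phys}$ is not unital. The unit of $\ca_{\rm phys}^{(N)}$ is $\Pi_{\rm phys}^{(N)}$, which maps to $\Pi_{\rm phys}^{(N)}\otimes\Pi_{\rm phys}^{(M)}$, whereas the unit of $\ca_{\rm phys}^{(N+M)}$ is $\Pi_{\rm phys}^{(N+M)}$. By Lemma~\ref{lem_hphys}, the ranks are $|\cg|^{N-1}\cdot|\cg|^{M-1}=|\cg|^{N+M-2}$ and $|\cg|^{N+M-1}$, respectively, which differ whenever $|\cg|>1$. So the two projectors are distinct and $\tilde\Phi_{\rm phys}(\mathbf 1_{\ca_{\rm phys}^{(N)}})\neq\mathbf 1_{\ca_{\rm phys}^{(N+M)}}$, as claimed. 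The only delicate step throughout is the initial simplification, and the hardest conceptual point there is noticing that support of $A_{\rm phys}^{(N)}$ on $\ch_{\rm phys}^{(N)}$ lets one strip the $N$-particle translations off both sides before performing the average over the $M$-particle part.
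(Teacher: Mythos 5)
Your proof is correct and follows essentially the same route as the paper's: establish the simplification $\tilde\Phi_{\rm phys}(A_{\rm phys}^{(N)})=A_{\rm phys}^{(N)}\otimes\Pi_{\rm phys}^{(M)}$, after which multiplicativity, adjoint-preservation, and the rank/trace count $|\mathcal{G}|^{N+M-2}\neq|\mathcal{G}|^{N+M-1}$ for non-unitality are immediate. The only cosmetic difference is that the paper obtains the key factorization via the identity $(\Pi_{\rm phys}^{(N)}\otimes\mathbf{1}^{(M)})\Pi_{\rm phys}^{(N+M)}=\Pi_{\rm phys}^{(N)}\otimes\Pi_{\rm phys}^{(M)}$ checked on basis states, whereas you derive it by expanding $\Pi_{\rm phys}^{(N+M)}$ as the $\mathcal{G}$-average of Lemma~\ref{lem_proj} and absorbing the $N$-particle translations into $A_{\rm phys}^{(N)}$ — both rest on the same structural fact.
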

Recall that this construction does not yield an embedding of all invariant observables $\mathcal{A}^{(N)}_{\rm inv}$, or even of $\mathcal{A}_{\rm alg}^{(N)}$, as we have seen in Lemma~\ref{LemEmbeddingHomo}. Thus, it is remarkable that it works for the subalgebra of relational observables.

\begin{proof}
Noting that every $A_{\rm phys}^{(N)}\in\ca_{\rm phys}^{(N)}$ satisfies $A_{\rm phys}^{(N)}=\hat\Pi_{\rm phys}^{(N)}(A_{\rm phys}^{(N)})$, we can recast $\tilde\Phi_{\rm phys}$ in the form
\ba
\tilde\Phi_{\rm phys}(A_{\rm phys}^{(N)})&=&\Pi_{\rm phys}^{(N+M)}\left(\Pi_{\rm phys}^{(N)}\otimes\mathbf{1}^{(M)}\right)\left(A_{\rm phys}^{(N)}\otimes\mathbf{1}^{(M)}\right)\nn\\
&&\q\q\q\q\q\times\left(\Pi_{\rm phys}^{(N)}\otimes\mathbf{1}^{(M)}\right)\Pi_{\rm phys}^{(N+M)}.\nn
\ea
By considering the action on an arbitrary basis state $\ket{g_1,\ldots,g_{N+M}}$, it is easy to verify that
\ba
\left(\Pi_{\rm phys}^{(N)}\otimes\mathbf{1}^{(M)}\right)\Pi_{\rm phys}^{(N+M)}=\Pi_{\rm phys}^{(N)}\otimes\Pi_{\rm phys}^{(M)}.\label{projid}
\ea
Hence, using idempotence of the projector,
\ba
\tilde\Phi_{\rm phys}(A_{\rm phys}^{(N)})=A_{\rm phys}^{(N)}\otimes \Pi_{\rm phys}^{(M)},
\ea
and the image commutes with $\Pi_{\rm phys}^{(N+M)}$.
Checking the embedding properties is now trivial. Furthermore, note that the unit element of $\ca_{\rm phys}^{(N)}$ is $\Pi_{\rm phys}^{(N)}$, while the unit element of $\ca_{\rm phys}^{(N+M)}$ is $\Pi_{\rm phys}^{(N+M)}$. However, $\tilde\Phi_{\rm phys}(\Pi_{\rm phys}^{(N)})=\Pi_{\rm phys}^{(N)}\otimes\Pi_{\rm phys}^{(M)}$. To see that $\tilde\Phi_{\rm phys}(\Pi_{\rm phys}^{(N)})$ does not act as the identity on $\ch_{\rm phys}^{(N+M)}$, observe that ${\rm tr}\left(\Pi_{\rm phys}^{(N)}\right)=\dim\mathcal{H}_{\rm phys}^{(N)}=|\mathcal{G}|^{N-1}$, hence ${\rm tr}\left(\Pi_{\rm phys}^{(N)}\otimes\Pi_{\rm phys}^{(M)}\right)=|\mathcal{G}|^{N+M-2}$, but ${\rm tr}\left(\Pi_{\rm phys}^{(N+M)}\right)=|\mathcal{G}|^{N+M-1}$.
\end{proof}
Returning to the discussion of how to define the relations of the additional $M$ particles to the first group of $N$ particles (cf.\ Fig.~\ref{fig_superposition}), we now have a unique answer: the projector $\Pi_{\rm phys}^{(M)}$ in $\tilde\Phi_{\rm phys}$ takes the \emph{coherent average} over all possible such relations. This can also be seen by inspecting the action of $\tilde\Phi_{\rm phys}$ on the basis elements of $\mathcal{A}_{\rm phys}$:
\[
   \tilde\Phi_{\rm phys}\left(|\mathbf{h};\mathbf{1}\rangle\langle\mathbf{j};\mathbf{1}|\right)=\frac 1 {|\mathcal{G}|}\sum_{g\in\mathcal{G},\mathbf{g}\in\mathcal{G}^M} |\mathbf{h},\mathbf{g};\mathbf{1}\rangle\langle\mathbf{j},g\mathbf{g};\mathbf{1}|.
\]
In contrast to Eq.~(\ref{eqAssignOne}) and the embeddings of Subsection~\ref{SubsecNonUniqueEmbd}, this embedding does \emph{not} assign to every $M$-particle configuration $\mathbf{g}$ another one, $\mathbf{g}'$, which has ``the same relation'' to the first $N$ particles in branches $\mathbf{h}$ and $\mathbf{j}$. Instead, it generates the \emph{uniform superposition} of all the possibilities.

This averaging is also the reason for the failure of the unitality property. However, as we will see shortly, the absence of unitality is precisely the reason why the relational trace defined below maps relational $(N+M)$-particle states into relational $N$-particle states. It will thus be rather a feature than a failure.

While $\tilde\Phi_{\rm phys}$ is not unital, note that the embedding of the relational $N$-particle unit $\tilde\Phi_{\rm phys}(\Pi_{\rm phys}^{(N)})=\Pi_{\rm phys}^{(N)}\otimes\Pi_{\rm phys}^{(M)}$ certainly \emph{does} act as the identity on its image,
\ba
\ch_{\rm phys}^{(N\otimes M)}:=\Big\{\Pi_{\rm phys}^{(N)}\otimes\Pi_{\rm phys}^{(M)}\ket{\psi}\,\Big|\,\ket{\psi}\in\ch^{\otimes(N+M)}\Big\}\nn.
\ea
This subspace of the space of relational $(N+M)$-particle states $\ch_{\rm phys}^{(N+M)}$ will be essential below when resolving the paradox of the third particle. 

The natural embedding induces a natural trace.
\begin{definition}[Relational trace]\label{def_trel}
The relational trace is defined to be the Hilbert-Schmidt adjoint ${\rm Trel}:=\Phi_{\rm phys}^\dag$ of the extended embedding map $\Phi_{\rm phys}:\mathcal{L}(\ch^{\otimes N})\rightarrow \mathcal{L}(\mathcal{H}^{\otimes(N+M)})$ defined by $\Phi_{\rm phys}:=\tilde\Phi_{\rm phys}\circ\hat\Pi_{\rm phys}^{(N)}$. That is, the relational trace is the unique map with the property
\ba
{\rm tr}\left(\Phi_{\rm phys}(A^{(N)})\,\rho\right) = {\rm tr}\left(A^{(N)}\,{\rm Trel}_{(M)}\rho\right)
\ea
for all $A^{(N)}\in\mathcal{L}(\ch^{\otimes N})$ and all $\rho\in\mathcal{L}(\ch^{\otimes(N+M)})$ (in particular for all states).
\end{definition}

Specifically, note that 
\ba
{\rm tr}\left(\Phi_{\rm phys}(A^{(N)})\,\rho\right) &=& {\rm tr}\left(\Phi_{\rm phys}(A^{(N)})\,\hat\Pi_{\rm phys}^{(N+M)}(\rho)\right)\nn\\
&=&{\rm tr}\left(\Phi_{\rm phys}(A^{(N)})\,\rho_{\rm phys}\right)\label{trel1}.
\ea
This is precisely the expectation value of the relational observable $\Phi_{\rm phys}(A^{(N)})\in\ca_{\rm phys}^{(N+M)}$ in the relational state $\rho_{\rm phys}\in\mathcal{S}(\ch_{\rm phys}^{(N+M)})$, evaluated in the manifestly invariant inner product on $\mathcal{S}(\ch_{\rm phys}^{(N+M)})$. The relational trace is thus unambiguously defined in terms of the so-called physical inner product of constraint quantization~\cite{Giulini:1998kf,Marolf:2000iq,Thiemann}, i.e.\ the inner product on $\ch_{\rm phys}$. We will analyze this in more detail in our upcoming work~\cite{MMP}.

\begin{theorem}\label{thm_trel}
The relational trace takes the explicit form
\ba
{\rm Trel}_{(M)}\rho = {\rm Tr}_{(M)}\left[\Pi_{\rm phys}^{(N)}\otimes\Pi_{\rm phys}^{(M)}\,\rho\,\Pi_{\rm phys}^{(N)}\otimes\Pi_{\rm phys}^{(M)}\right],\nn
\ea
where ${\rm Tr}_{(M)}$ is the standard partial trace over particles $N+1,\ldots,N+M$.
It maps relational operators onto relational operators, i.e.\ ${\rm Trel}_{(M)}\left(\ca_{\rm phys}^{(N+M)}\right)=\ca_{\rm phys}^{(N)}$, and is trace-preserving for states in $\mathcal{S}(\ch_{\rm phys}^{(N\otimes M)})$, but trace-decreasing outside of it. Furthermore, it preserves observational equivalence, i.e.\ $\rho\sim\sigma$ implies ${\rm Trel}_{(M)}\rho = {\rm Trel}_{(M)}\sigma$.
\end{theorem}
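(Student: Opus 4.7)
The plan is to establish the four claims in sequence, starting from the definition ${\rm Trel}:=\Phi_{\rm phys}^\dag$ with $\Phi_{\rm phys}=\tilde\Phi_{\rm phys}\circ\hat\Pi_{\rm phys}^{(N)}$. First I would derive the explicit form by direct adjoint computation: using Lemma~\ref{lem_nathomo}, $\Phi_{\rm phys}(A^{(N)})=\Pi_{\rm phys}^{(N)}A^{(N)}\Pi_{\rm phys}^{(N)}\otimes\Pi_{\rm phys}^{(M)}$, and since $\Pi_{\rm phys}^{(M)}$ is an idempotent self-adjoint projector, this factor can be rewritten as $(\Pi_{\rm phys}^{(N)}\otimes\Pi_{\rm phys}^{(M)})(A^{(N)}\otimes\mathbf{1}^{(M)})(\Pi_{\rm phys}^{(N)}\otimes\Pi_{\rm phys}^{(M)})$. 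Plugging this into ${\rm tr}(\Phi_{\rm phys}(A^{(N)})^\dag\rho)$, applying cyclicity of the trace, and then performing the partial trace over the last $M$ factors yields the claimed formula. This part is essentially a bookkeeping calculation.

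Next I would verify that ${\rm Trel}_{(M)}\left(\ca_{\rm phys}^{(N+M)}\right)=\ca_{\rm phys}^{(N)}$. For the inclusion ``$\subseteq$'', the explicit formula shows that the output is sandwiched between two copies of $\Pi_{\rm phys}^{(N)}$ (the partial trace commutes with left/right multiplication on the surviving $N$-factor), hence the image lies in $\ca_{\rm phys}^{(N)}$. For ``$\supseteq$'', given any $A\in\ca_{\rm phys}^{(N)}$, consider $\rho_A:=A\otimes\Pi_{\rm phys}^{(M)}/|\mathcal{G}|^{M-1}$, which is supported on $\ch_{\rm phys}^{(N)}\otimes\ch_{\rm phys}^{(M)}\subseteq\ch_{\rm phys}^{(N+M)}$ (using Eq.~(\ref{projid})) and hence lies in $\ca_{\rm phys}^{(N+M)}$. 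A short computation using $\Pi_{\rm phys}^{(N)}A\Pi_{\rm phys}^{(N)}=A$ and $\mathrm{tr}(\Pi_{\rm phys}^{(M)})=|\mathcal{G}|^{M-1}$ gives ${\rm Trel}_{(M)}\rho_A=A$.

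For the trace property, I would simply take the trace of the explicit formula and use cyclicity to obtain $\mathrm{tr}({\rm Trel}_{(M)}\rho)=\mathrm{tr}(\rho\,\Pi_{\rm phys}^{(N)}\otimes\Pi_{\rm phys}^{(M)})$. If $\rho\in\mathcal{S}(\ch_{\rm phys}^{(N\otimes M)})$, then $(\Pi_{\rm phys}^{(N)}\otimes\Pi_{\rm phys}^{(M)})\rho=\rho$ and we recover $\mathrm{tr}(\rho)=1$; otherwise the projector strictly decreases the trace because $\Pi_{\rm phys}^{(N)}\otimes\Pi_{\rm phys}^{(M)}\leq\mathbf{1}^{(N+M)}$ with strict inequality on the orthogonal complement of $\ch_{\rm phys}^{(N\otimes M)}$.

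Finally, preservation of observational equivalence follows from the fact that $\Phi_{\rm phys}(A^{(N)})\in\ca_{\rm phys}^{(N+M)}\subseteq\ca_{\rm inv}^{(N+M)}$ for every $A^{(N)}\in\mathcal{L}(\ch^{\otimes N})$. Indeed, if $\rho\sim\sigma$ then by Lemma~\ref{LemEquivProj} we have $\mathrm{tr}(B\rho)=\mathrm{tr}(B\sigma)$ for all $B\in\ca_{\rm inv}^{(N+M)}$; specialising to $B=\Phi_{\rm phys}(A^{(N)})$ and using the defining adjoint relation yields $\mathrm{tr}(A^{(N)}\,{\rm Trel}_{(M)}\rho)=\mathrm{tr}(A^{(N)}\,{\rm Trel}_{(M)}\sigma)$ for every $A^{(N)}$, whence ${\rm Trel}_{(M)}\rho={\rm Trel}_{(M)}\sigma$ by non-degeneracy of the Hilbert--Schmidt inner product. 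The main subtle point in the whole argument is the use of identity~(\ref{projid}) to absorb the global projector $\Pi_{\rm phys}^{(N+M)}$ into the smaller projector $\Pi_{\rm phys}^{(N)}\otimes\Pi_{\rm phys}^{(M)}$; everything else is essentially trace algebra.
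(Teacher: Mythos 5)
Your proposal is correct and follows essentially the same route as the paper's proof: the explicit form via the Hilbert--Schmidt adjoint computation using $\Phi_{\rm phys}(A^{(N)})=\Pi_{\rm phys}^{(N)}A^{(N)}\Pi_{\rm phys}^{(N)}\otimes\Pi_{\rm phys}^{(M)}$ and cyclicity, surjectivity onto $\ca_{\rm phys}^{(N)}$ via an explicit preimage (your $A\otimes\Pi_{\rm phys}^{(M)}/|\mathcal{G}|^{M-1}$ is a slightly cleaner variant of the paper's $|\mathcal{G}|^{1-M}\,{\rm Trel}_{(M)}(A_{\rm phys}^{(N)}\otimes\mathbf{1}^{(M)})$, as it manifestly lies in $\ca_{\rm phys}^{(N+M)}$), the trace property from ${\rm tr}({\rm Trel}_{(M)}\rho)={\rm tr}(\rho\,\Pi_{\rm phys}^{(N)}\otimes\Pi_{\rm phys}^{(M)})$, and observational equivalence from the fact that ${\rm im}(\Phi_{\rm phys})\subseteq\ca_{\rm phys}^{(N+M)}\subseteq\ca_{\rm inv}^{(N+M)}$. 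Your duality phrasing of the last step is equivalent to the paper's argument via ${\rm Trel}_{(M)}\circ\hat\Pi_{\rm phys}^{(N+M)}={\rm Trel}_{(M)}$, and no gaps remain.
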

\begin{proof}
The first statement follows from
\ba
&&{\rm tr}\left(\Phi_{\rm phys}(A^{(N)})\,\rho\right) = {\rm tr}\left[\left(\Pi_{\rm phys}^{(N)} A^{(N)}\Pi_{\rm phys}^{(N)}\otimes\Pi_{\rm phys}^{(M)}\right)
\rho\right]\nn\\
&&={\rm tr}\left[\left(A^{(N)}\otimes\mathbf{1}^{(M)}\right)\left(\Pi_{\rm phys}^{(N)}\otimes\Pi_{\rm phys}^{(M)}\,\rho\Pi_{\rm phys}^{(N)}\otimes\Pi_{\rm phys}^{(M)}\right)\right],\nn
\ea
which holds for any $A^{(N)}\in\mathcal{L}(\ch^{\otimes N})$ and any $\rho\in\mathcal{L}(\ch^{\otimes (N+M)})$.

Given the conjugation of its input with the projector $\Pi_{\rm phys}^{(N)}\otimes\Pi_{\rm phys}^{(M)}$, it is clear that ${\rm Trel}_{(M)}$ is  trace-preserving for states in $\mathcal{S}(\ch_{\rm phys}^{(N\otimes M)})$, but not for states outside of it. It is also clear that ${\rm Trel}_{(M)}$ maps operators from $\ca_{\rm phys}^{(N+M)}$ into operators in $\ca_{\rm phys}^{(N)}$, since the projectors $\Pi_{\rm phys}^{(N)}$ can be taken outside of the trace over particles $N+1,\ldots,N+M$. To see that it is surjective, it is straightforward to check that $|\cg|^{1-M}{\rm Trel}_{(M)}\left(A_{\rm phys}^{(N)}\otimes\mathbf{1}^{(M)}\right)=A_{\rm phys}^{(N)}$ for all $A_{\rm phys}^{(N)}\in\ca_{\rm phys}^{(N)}$.

Finally, note that the image of $\Phi_{\rm phys}$ is contained in $\mathcal{A}_{\rm phys}^{(N+M)}$, thus $\Phi_{\rm phys}=\hat\Pi_{\rm phys}^{(N+M)}\circ\Phi_{\rm phys}$. Taking the Hilbert-Schmidt adjoint of this equation yields ${\rm Trel}_{(M)}\circ\hat\Pi_{\rm phys}^{(N+M)}={\rm Trel}_{(M)}$. Now suppose we have $\rho\sim\sigma$, then Lemma~\ref{LemEquivProj} and $\hat\Pi_{\rm phys}^{(N+M)}\circ \Pi_{\rm inv}^{(N+M)}=\hat\Pi_{\rm phys}^{(N+M)}$ imply $\hat\Pi_{\rm phys}^{(N+M)}(\rho)=\hat\Pi_{\rm phys}^{(N+M)}(\sigma)$. Altogether this implies that ${\rm Trel}_{(M)}$ preserves observational equivalence.
\end{proof}
We can write
\begin{equation}
   {\rm Trel}_{(M)}=\hat\Pi_{\rm phys}^{(N)}\circ {\rm Tr}_{(M)}\circ \hat\Pi_{\rm phys}^{(N+M)}.
   \label{eqTrelVersusTr}
\end{equation}
We can thus view the relational partial trace ${\rm Trel}_{(M)}$ as an invariant extension of the standard partial trace ${\rm Tr}_{(M)}$.
The non-unitality of $\tilde\Phi_{\rm phys}$ is reflected in the final application of $\hat\Pi_{\rm phys}^{(N)}$. Without this projection, the image of ${\rm Trel}_{(M)}$ would not in general be contained in $\mathcal{A}_{\rm phys}^{(N)}$. For example, the uniform mixture $\rho^{(N+M)}:=\Pi_{\rm phys}^{(N+M)}/|\mathcal{G}|^{N+M-1}$ on the physical subspace of $N+M$ particles yields ${\rm Tr}_{(M)}\rho^{(N+M)}=\mathbf{1}^{(N)}/|\mathcal{G}|^N$, whose decomposition according to Theorem~\ref{TheAlgebraProjections} contains operators outside of $\mathcal{A}_{\rm phys}^{(N)}$.\footnote{More generally, the image of $\ch_{\rm phys}^{(N+M)}\setminus\ch_{\rm phys}^{(N\otimes M)}$ under $\rm{Tr}_{(M)}$ does not lie in the $N$-particle relational subspace $\ch_{\rm phys}^{(N)}$.} \emph{Thus, non-unitality is the price to pay for remaining relational.}

This leads to ${\rm Trel}_{(M)}$ being trace-\emph{decreasing}, unless the initial state is fully supported on the subspace $\mathcal{H}_{\rm phys}^{(N\otimes M)}$. Should we be worried about this fact --- shouldn't marginals of normalized quantum states be normalized? Not in this case. In contrast to the standard partial trace, the relational trace is \emph{not} supposed to tell us what the reduced quantum state on a subsystem is. Instead, it is constructed to tell us precisely the following:

\begin{shaded}
\begin{theorem}
\label{TheConditionalState}
Given some $(N+M)$-particle state $\rho^{(N+M)}\in\mathcal{S}(\mathcal{H}^{\otimes(N+M)})$, the following \emph{conditional state of $N$ particles} is normalized or subnormalized:
\be
   \rho^{(N)}:=\frac{{\rm Trel}_{(M)} \rho^{(N+M)}}{{\rm tr}\left(\rho^{(N+M)}\Pi_{\rm phys}^{(N+M)}\right)}.
\ee
Consider any relational projector $0\leq E^{(N)}_{\rm phys}\leq \Pi^{(N)}_{\rm phys}$ which we interpret as a ``relational event''. Then the state $\rho^{(N)}$ tells us the probabilities of this $N$-particle event, \emph{conditioned on the $(N+M)$-particle system being relational}:
\be
   {\rm tr}\left(E^{(N)}_{\rm phys}\rho^{(N)}\right)={\rm Prob}\left(E_{\rm phys}^{(N)}|\,\, \Pi_{\rm phys}^{(N+M)}\right).
\ee
\end{theorem}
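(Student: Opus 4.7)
The plan is to reduce both sides of the claimed identity to expressions built from the embedding $\Phi_{\rm phys}$ of Lemma~\ref{lem_nathomo}. On the left-hand side, since $E^{(N)}_{\rm phys}\in\ca_{\rm phys}^{(N)}$ is already relational (so $\hat\Pi_{\rm phys}^{(N)}$ acts trivially on it), the defining adjointness of ${\rm Trel}$ in Definition~\ref{def_trel} combined with the explicit form $\tilde\Phi_{\rm phys}(A)=A\otimes\Pi_{\rm phys}^{(M)}$ from Lemma~\ref{lem_nathomo} yields at once
\ba
\tr\bigl(E^{(N)}_{\rm phys}\,{\rm Trel}_{(M)}\rho^{(N+M)}\bigr)=\tr\!\left(\bigl(E^{(N)}_{\rm phys}\otimes\Pi_{\rm phys}^{(M)}\bigr)\rho^{(N+M)}\right).\nn
\ea

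For the right-hand side I would interpret ${\rm Prob}(E^{(N)}_{\rm phys}\,|\,\Pi_{\rm phys}^{(N+M)})$ as a standard conditional probability for two commuting positive operators on the joint $(N+M)$-particle system: the natural relational extension of the $N$-particle event, namely $\Phi_{\rm phys}(E^{(N)}_{\rm phys})=E^{(N)}_{\rm phys}\otimes\Pi_{\rm phys}^{(M)}$, and the ``relational system'' projector $\Pi_{\rm phys}^{(N+M)}$. The key ingredient is the projector ordering $\Pi_{\rm phys}^{(N)}\otimes\Pi_{\rm phys}^{(M)}\leq\Pi_{\rm phys}^{(N+M)}$, which follows immediately from identity~(\ref{projid}) in the proof of Lemma~\ref{lem_nathomo} (together with self-adjointness of the projectors involved). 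Combined with $E^{(N)}_{\rm phys}\leq\Pi_{\rm phys}^{(N)}$ and positivity of the tensor product, this gives the absorption relation $\bigl(E^{(N)}_{\rm phys}\otimes\Pi_{\rm phys}^{(M)}\bigr)\Pi_{\rm phys}^{(N+M)}=E^{(N)}_{\rm phys}\otimes\Pi_{\rm phys}^{(M)}=\Pi_{\rm phys}^{(N+M)}\bigl(E^{(N)}_{\rm phys}\otimes\Pi_{\rm phys}^{(M)}\bigr)$, so the two operators commute and the joint probability equals $\tr\bigl((E^{(N)}_{\rm phys}\otimes\Pi_{\rm phys}^{(M)})\rho^{(N+M)}\bigr)$. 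Dividing by $\tr(\Pi_{\rm phys}^{(N+M)}\rho^{(N+M)})$ reproduces exactly $\tr(E^{(N)}_{\rm phys}\rho^{(N)})$ from the previous step.

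Normalization follows as a corollary by specialising to $E^{(N)}_{\rm phys}=\Pi_{\rm phys}^{(N)}$ (equivalently, from the explicit form of ${\rm Trel}$ in Theorem~\ref{thm_trel} combined with trace cyclicity): the same ordering gives $\tr\bigl((\Pi_{\rm phys}^{(N)}\otimes\Pi_{\rm phys}^{(M)})\rho^{(N+M)}\bigr)\leq\tr(\Pi_{\rm phys}^{(N+M)}\rho^{(N+M)})$, whence $\tr(\rho^{(N)})\leq 1$, with equality exactly when $\rho^{(N+M)}$ is supported on $\ch_{\rm phys}^{(N\otimes M)}$, consistent with the trace-preservation statement in Theorem~\ref{thm_trel}. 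There is no real obstacle beyond cleanly extracting the projector identity~(\ref{projid}) and its absorption consequence; once that single inequality is in place, the theorem collapses to straightforward trace manipulations.
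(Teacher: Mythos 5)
Your proof is correct and follows essentially the same route as the paper's: both reduce the joint probability ${\rm Prob}(E^{(N)}_{\rm phys},\Pi^{(N+M)}_{\rm phys})$ to ${\rm tr}\bigl(\Phi_{\rm phys}(E^{(N)}_{\rm phys})\,\rho^{(N+M)}\bigr)$ --- the paper via the identity ${\rm Trel}_{(M)}\circ\hat\Pi^{(N+M)}_{\rm phys}={\rm Trel}_{(M)}$, you via the equivalent absorption relation derived from Eq.~(\ref{projid}) --- and then invoke the adjointness in Definition~\ref{def_trel} together with $\Pi^{(N)}_{\rm phys}\otimes\Pi^{(M)}_{\rm phys}\leq\Pi^{(N+M)}_{\rm phys}$ for subnormalization. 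The only quibble is your parenthetical ``equality exactly when'' for the normalization, which is slightly too strong (states supported outside $\ch^{(N+M)}_{\rm phys}$ can also give degenerate equality), but this aside is not needed for the theorem.
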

That is, the renormalized result $\rho^{(N)}$ of the relational trace gives us \textbf{the expectation values of all $N$-particle relational observables, conditioned on the global $(N+M)$-particle state being fully relational.}
\end{shaded}
\begin{proof}
It follows from Theorem~\ref{thm_trel} and Eq.~(\ref{projid}) that
\begin{eqnarray}
   {\rm tr}\left({\rm Trel}_{(M)}\rho^{(N+M)}\right)&=&{\rm tr}\left(\Pi_{\rm phys}^{(N)}\otimes\Pi_{\rm phys}^{(M)}\rho^{(N+M)}\right)\nn\\
   &\leq& {\rm tr}\left(\Pi_{\rm phys}^{(N+M)}\rho^{(N+M)}\right),
\end{eqnarray}
hence $\rho^{(N)}$ is not supernormalized. Now, the probability that an initial global measurement of the projector $\Pi_{\rm phys}^{(N+M)}$ yields outcome ``yes'', and then a subsequent local measurement of $E_{\rm phys}^{(N)}$ yields ``yes'' too, is
\ba
&&{\rm Prob}\left(E_{\rm phys}^{(N)},\Pi_{\rm phys}^{(N+M)}\right)\nn\\
&&\q\q\q={\rm tr}\left(\Phi_{\rm phys}(E_{\rm phys}^{(N)})\hat\Pi_{\rm phys}^{(N+M)}(\rho^{(N+M)})\right)\nn\\
&&\q\q\q= {\rm tr}\left(E_{\rm phys}^{(N)}{\rm Trel}_{(M)}\circ \hat\Pi_{\rm phys}^{(N+M)}(\rho^{(N+M)})\right)\nn\\
&&\q\q\q={\rm tr}\left(E_{\rm phys}^{(N)}{\rm Trel}_{(M)}\rho^{(N+M)}\right),
\ea
where we have used Definition~\ref{def_trel} and Eq.~(\ref{eqTrelVersusTr}). The rest of the claim follows from the definition of conditional probability.
\end{proof}

Recall that in Lemma~\ref{LemProjection}, we have seen that the projection $\rho_{\rm phys}:=\hat\Pi_{\rm phys}^{(N+M)}(\rho)$ for alignable states $\rho$ is subnormalized, but is sufficient to determine the expectation values of all invariant observables. Thus, $\rho_{\rm phys}$ should not be seen as the marginal of $\rho$ on some subsystem, but as the ``relational part'' of $\rho$. The ``relational weight'' ${\rm tr}\,\rho_{\rm phys}$ is in general less than one,\footnote{This assumes that $\rho\in\mathcal{L}(\ch^{\otimes(N+M)})$ in $\rho_{\rm phys} = \hat\Pi^{(N+M)}_{\rm phys}(\rho)$ is normalized, as appropriate in the context of our manuscript where there is an external observer who could measure also non-invariant observables in the presence of an external frame. By contrast, in the perspective-neutral approach~\cite{Vanrietvelde,Vanrietvelde2,Hoehn:2018aqt,Hoehn:2018whn,Hoehn:2019owq,Hoehn:2020epv} (and more generally in constraint quantization), one disregards any external structure and works directly with normalized relational states. That is, one would \emph{define} the normalization of the $(N+M)$-particle relational state as $\rm{tr}\rho_{\rm phys}=1$. Indeed, note that the \emph{full} standard trace $\rm{tr}\rho_{\rm phys} = \rm{tr}\left(\Pi_{\rm phys}^{(N+M)}\rho\right)$ is precisely the extension of the so-called physical inner product \cite{Giulini:1998kf,Marolf:2000iq,Thiemann} to density matrices. However, note also from the previous footnote that the standard \emph{partial} trace $\rm{Tr}_{(M)}$ is not in general appropriate for relational states $\rho_{\rm phys}$. The additional projection with $\hat\Pi_{\rm phys}^{(N)}$ in the relational trace fixes this issue, but reduces the norm of states with support outside of $\ch_{\rm phys}^{(N\otimes M)}$. This has a transparent physical interpretation which is best seen through the norm reduction of invariant basis states:
$\rm{Trel}_{(M)}(\ket{\mathbf{h}_N,\mathbf{h}_M;\mathbf{1}}\!\bra{\mathbf{j}_N,\mathbf{j}_M;\mathbf{1}})$, where $\mathbf{h}_{N},\mathbf{j}_N\in\cg^{N-1}$ and $\mathbf{h}_M,\mathbf{j}_M\in\cg^M$, is equal to $\frac{1}{|\cg|}\,\ket{\mathbf{h}_N;\mathbf{1}}\!\bra{\mathbf{j}_N;\mathbf{1}}$ if $\mathbf{h}_M=g\mathbf{j}_M$ for some $g\in\cg$, and is zero otherwise. The variables $\mathbf{h}_M,\mathbf{j}_M$ lying in $\cg^M$ rather than $\cg^{M-1}$ reflects the fact that they encode not only the $M-1$ internal relations of the  $M$ particles, but also a definite relation between the two particle groups, which is a property of \emph{both} groups together. The normalization reduction factor $1/|\cg|$ comes from the coherent averaging over the relations between the two particle groups (which is partially a property of the $N$ particles), and quantifies the corresponding ignorance of the relational $N$-particle state obtained via $\rm{Trel}_{(M)}$. By construction, the latter contains all information about the relational $N$-particle observables which are independent of the relation between the particle groups. However, the `ignorance factor' $1/|\cg|$ has to be taken into account.} and it can decrease when disregarding some of the particles.

\subsection{Relational resolution of the paradox}

In order to apply these insights to the paradox of the third particle, let us first discuss some further aspects of the interplay of composition and relational trace, for the generic case of $M+N$ particles. Suppose $\rho^{(N)}$ is  a state of the $N$ particles before taking the additional $M$ particles into account,  prepared by an observer with access to the external reference frame. The corresponding relational state is $\rho_{\rm phys}^{(N)}=\hat\Pi_{\rm phys}^{(N)}(\rho^{(N)})$. Next, suppose the $M$ additional particles are prepared in a normalized state $\rho^{(M)}$, and the composite state of all particles is of the product form $\rho^{(N+M)}=\rho^{(N)}\otimes\rho^{(M)}$.  We can then construct the relational state corresponding to this composition. Here it is important to note that $\rho_{\rm phys}^{(N+M)}=\hat\Pi_{\rm phys}^{(N+M)}\left(\rho^{(N+M)}\right)=\hat\Pi_{\rm phys}^{(N+M)}\left(\tilde\rho^{(N+M)}\right)$, for any $\tilde\rho^{(N+M)}\simeq\rho^{(N+M)}$. That is, all members of the symmetry equivalence class of $\rho^{(N+M)}$ (incl.\ states featuring entanglement between the two particle groups) yield the same relational $(N+M)$-particle state. However, due to Theorem~\ref{thm_trel}, it is only the projection into $\ch_{\rm phys}^{(N\otimes M)}$ that matters for the relational trace. But Eq.~\eqref{projid} implies that
\ba
&&\Pi_{\rm phys}^{(N)}\otimes\Pi_{\rm phys}^{(M)}\,\rho_{\rm phys}^{(N+M)}\,\Pi_{\rm phys}^{(N)}\otimes\Pi_{\rm phys}^{(M)}\nn\\
&&\q\q\q=\Pi_{\rm phys}^{(N)}\otimes\Pi_{\rm phys}^{(M)}\left(\rho^{(N)}\otimes\rho^{(M)}\right)\Pi_{\rm phys}^{(N)}\otimes\Pi_{\rm phys}^{(M)}\nn\\
&&\q\q\q=\rho_{\rm phys}^{(N)}\otimes\rho_{\rm phys}^{(M)},
\ea
and so we have ${\rm Trel}_{(M)} \rho^{(N+M)}_{\rm phys} = \rho_{\rm phys}^{(N)} \cdot {\rm tr}_{(M)}\rho_{\rm phys}^{(M)}$. Up to a constant factor, this is precisely the initial relational $N$-particle state that we had before taking the additional $M$ particles into account.

Let us see what this implies for the paradox of the third particle. Thus, let us concretely compute the relational trace ${\rm Trel}_3$ of the state $|\Psi\rangle$ of Eq.~(\ref{eqThreePInnocuous}). Due to invariance, the result will be identical if we apply it to the state $|\Psi'\rangle$ of Eq.~(\ref{eqThreeParticleState}). As a first step, we find
\begin{eqnarray}
\Pi_{\rm phys}^{(3)}(|\Psi\rangle\langle\Psi|)\Pi_{\rm phys}^{(3)}&=& \frac 1 {2 n} |\mathbf{h};\mathbf{1}\rangle\langle\mathbf{h};\mathbf{1}|+\frac {e^{-i\theta}} {2 n} |\mathbf{h};\mathbf{1}\rangle\langle\mathbf{j};\mathbf{1}| \nonumber\\
&&+ \frac {e^{i\theta}}{2n}  |\mathbf{j};\mathbf{1}\rangle\langle\mathbf{h};\mathbf{1}| + \frac 1 {2n} |\mathbf{j};\mathbf{1}\rangle\langle\mathbf{j};\mathbf{1}|.\nonumber
   \end{eqnarray}
Using ${}_3\langle g_3|\mathbf{h};\mathbf{1}\rangle=\frac{1}{\sqrt{|\cg|}}\,|g_3 h_2^{-1},g_3 h_2^{-1} h_1\rangle$ yields
\ba
{\rm Trel}_3 (|\Psi\rangle\langle\Psi|)&=&\Pi_{\rm phys}^{(2)}{\rm Tr}_3 \left[\Pi_{\rm phys}^{(3)}(|\Psi\rangle\langle\Psi|)\Pi_{\rm phys}^{(3)}\right]\Pi_{\rm phys}^{(2)}\nn\\
&=&\Pi_{\rm phys}^{(2)}|\psi\rangle\langle\psi|\,\Pi_{\rm phys}^{(2)}\nn\\
&=& \frac 1 {2 n^2} |h;\mathbf{1}\rangle\langle h;\mathbf{1}|+\frac{e^{-i\theta}}{2 n^2}|h;\mathbf{1}\rangle\langle\ j;\mathbf{1}|\nonumber\\
   &&+\frac{e^{i\theta}}{2 n^2}|j;\mathbf{1}\rangle\langle h;\mathbf{1}|+\frac 1 {2 n^2} |j;\mathbf{1}\rangle\langle j;\mathbf{1}|\nonumber.
\ea
Since $|\Psi\rangle$ is alignable, Lemma~\ref{LemProjection} tells us that $\langle\Psi|\Pi_{\rm phys}^{(N+M)}|\Psi\rangle=1/n$. Thus, computing the conditional state of Theorem~\ref{TheConditionalState}, we obtain the projection of the state in Eq.~(\ref{eqProjection1}) into the relational subalgebra. Hence, we recover exactly the relational state of the first two particles which we had before adding the third. In particular, the phase $\theta$ is preserved: as expected, it remains accessible on the first two particles. 

The algebra $\ca_{\rm phys}$ generated by relational observables and the subspace $\ch_{\rm phys}$ of relational states is the arena of the perspective-neutral approach to QRFs \cite{Vanrietvelde,Vanrietvelde2,Hoehn:2018aqt,Hoehn:2018whn,Hoehn:2019owq,Hoehn:2020epv}. As such, there is no paradox of the third particle in this approach. Furthermore, it  provides a compelling conceptual interpretation of this resolution: the relational states are the perspective-neutral states, i.e.\ they correspond to a description of the composite particle system prior to choosing an internal reference relative to which the state is described. The perspective-neutral states contain the entire information about all internal  QRF perspectives at once. The relational trace is performed at the perspective-neutral level, and consistency at that level implies consistency in all internal perspectives. We will further elaborate on this in Ref.~\cite{MMP}.

\subsection{Comparison to the resolution by Angelo et al.}
Angelo et al.~\cite{Angelo} also propose a resolution to the apparent paradox that they have raised in their paper. Let us recapitulate their resolution in our terminology and compare the two approaches. First, they introduce an operator $T:=e^{-2i(a+b)\hat p_{r_2}}$ which in our notation simply implements a translation of the two particles, given by $T|g_1,g_2\rangle=|g_1-2a,g_2+2b\rangle$. Computing the expectation value in the two-particle state of Eq.~(\ref{eqOriginal}) yields $\langle\psi|T|\psi\rangle=\frac 1 2 e^{i\theta}$. Thus, this operator (or rather its real and imaginary parts) admit the measurement of the phase $\theta$.

Since we are in the framework of Example~\ref{ExCyclic}, we can use the explicit form of the characters to see that
\be
   T|h;\chi_k\rangle=\chi_k(-2a)|h+2a+2b;\chi_k\rangle.
\ee
In particular, the invariant part of this translation can be expressed in the form
\begin{equation}
   T|h;\mathbf{1}\rangle=|h+2a+2b;\mathbf{1}\rangle.
   \label{eqTInv}
\end{equation}
That is, $T$ increases the relative distance of the particles by $2a+2b$. Indeed, if we define
\be
   T_{\rm inv}:=\Pi_{\rm inv}(T)=\Pi_{\rm alg}(T)=\hat\Pi_{\rm phys}(T),
\ee
then $T_{\rm inv}\in\mathcal{A}_{\rm phys}$ satisfies Eq.~(\ref{eqTInv}). Note, however, that $T$ contains strictly more information than $T_{\rm inv}$: not only does it tell us that the relative distance of the particles increases by $2a+2b$, but it also tells us what happens to their \emph{absolute} positions.

Angelo et al.\ write: \emph{``The crucial (and surprising) observation is that $T$ actually shifts the relative coordinate of particle $3$ as well as that of particle $2$.''} Strictly speaking, this is not a claim about $T$, but about the embedding $T^{(3)}$ of $T$ into the three-particle observables. Using the obvious embedding $T^{(3)}=T\otimes\mathbf{1}$, we obtain $T^{(3)}|g_1,g_2,g_3\rangle=|g_1-2a,g_2+2b,g_3\rangle$, and thus
\be
   T^{(3)}|h_1,h_2;\mathbf{1}\rangle=|h_1+2a+2b,h_2+2a;\mathbf{1}\rangle.
\ee
That is, the relative coordinate of particle $3$ is also shifted by $2a$. This reproduces Angelo et al.'s claim, but it is important to understand where the shift of $h_2\mapsto h_2+2a$ comes from. It is certainly not possible to deduce this shift from $T_{\rm inv}$ alone. Instead, it comes form \emph{the specific choice} of implementing $T_{\rm inv}$ (a relative shift of $2a+2b$) via $T$ (absolutely shifting particle $1$ by $-2a$ and particle $2$ by $2b$). And the latter choice comes from Angelo et al.'s decision of \emph{preserving the center of mass}, which fixes the non-invariant action of $T$.

In summary: Angelo et al.'s proposed resolution of the paradox comes from \emph{deciding to embed the two-particle observables via the center-of-mass embedding} that we have described in Subsection~\ref{SubsecClassTraces}. As shown there, this leads to a preservation of the phase $\theta$. While the center-of-mass may be the most democratic choice among the particles, as also discussed in Subsection~\ref{SubsecClassTraces}, there exist other well-motivated, but physically inequivalent choices of embedding, e.g.\ relative to one of the particles, which may or may not preserve $\theta$. Basing a resolution of the paradox in the context of frame covariance on a specific choice of frame seems unsatisfactory. Furthermore, even if one wanted to argue that the center-of-mass is a privileged choice in mechanics, this would have to rely on its \emph{dynamical} properties --- in fact, postulating certain particle masses in the first place is nothing but a claim about how the particles move. However, the paradox of the third particle is a completely \emph{kinematical} thought experiment: it can be formulated without any reference to the particle masses or time evolution whatsoever, as explained at the beginning of Section~\ref{Section:Paradox}. Thus, for the paradox, there is in any case no reason to privilege the center of mass embedding over any other choice of invariant embedding, and hence this type of argumentation is insufficient to conclude that the phase $\theta$ is preserved.

In contrast, our resolution amounts to the construction of a \emph{relational embedding} for which no such choice has to be made in the first place. Nonetheless, Angelo et al's insight is still important: embedding fewer into more particles will in general ``do something'' to the additional particles, and  care has to be taken of how the embedding is accomplished.

\section{Conclusions}
\label{SecConclusions}
The aim of this article is to elucidate the operational essence and interpretation of the recent structural approach to QRFs \cite{Giacomini,Vanrietvelde,Hamette}, and to  also clarify the meaning of its QRF transformations as symmetry transformations. These insights have then been exploited to illuminate the physics behind the apparent 'paradox of the third particle' of Ref.~\cite{Angelo} and to resolve it at a formal level through relational observables.

We began by providing a careful conceptual comparison of the quantum information~\cite{Bartlett,Marvian,Frameness,Modes,Smith2019,ResourceTheoryQRF,Palmer,Smith2016} and structural approaches to QRFs, illustrating the difference in their operational essence in terms of two communication scenarios. While both approaches focus on an external relatum independent description of physical observables and quantum states, they do so in different manners and with different goals. As we have seen, technically a distinction can be drawn between the two  in terms of how they describe external relatum independent states fundamentally: they are the \emph{incoherently} and \emph{coherently} group-averaged states in the QI and structural approach, respectively.

A key ambition of the QI approach is to elucidate how to perform communication protocols between different parties in the absence of a shared external laboratory frame. To this end, it suffices to focus on physical properties of the communicated quantum system that are meaningful relative to an arbitrary choice of external frame. For example, this can be achieved by restricting to speakable information that is encoded in decoherence-free subspaces or by communicating an \emph{additional} reference quantum system that serves as a token for the sender's reference frame. Either way,  the QI approach maintains the reference frame external to the system of interest. For successfully carrying out such operational protocols it is also not necessary to take an extra step and choose an internal reference frame within the system of interest, and to ask how the quantum system is described relative to one of its subsystems.

However, this additional step is precisely what the structural approach aims for. Its primary goal is not the implementation of protocols for communicating physical information. 
It has rather a more fundamental ambition: to dissolve the distinction between quantum systems and reference frames and thereby to extend the set of available reference frame choices to include subsystems of the physical system of interest. Its focal point are thus not only external relatum independent state descriptions, but \emph{internal} state descriptions. 
The operational essence of the structural approach can be illustrated in a scenario in which different agents agree on a (redundancy-free)  \emph{description} of physical quantum states \emph{without} adhering to an external relatum. They can always achieve this task by invoking certain ``canonical choices'' in the representation of quantum states that exploit the \emph{internal} structure of the quantum system to be described. In particular, these canonical choices of representation are related by transformations that coincide with the QRF transformations in the structural approach.

To show this explicitly, we have then formalized these conceptual observations in the context of an $N$-particle quantum system (``$\mathcal{G}$-system'') where the configuration space of each particle is a finite Abelian group $\cg$.  We chose this simple setting in order to avoid technicalities and to render all appearing structures completely transparent.  But we emphasize that our observations are of more general validity. They apply directly to laboratory situations in which agents simply disregard, or do not have access to, a relatum external to the system of interest, but in principle also to the case that no external physical relatum exists in the first place as, e.g., in quantum cosmology (see Refs.~\cite{Hoehn:2017gst,Hoehn:2018whn} for a related discussion).

We determined the symmetry group $\cg_{\rm sym}$ associated with a $\cg$-system, which preserves all its external-relatum-independent structure. We showed that states from the corresponding symmetry equivalence class are alignable to a choice of reference system through a $\cg_{\rm sym}$ transformation: each equivalence class contains  ``canonical choices'' of state representations, and these correspond to selecting any one of the $N$ particles as a reference system to define the origin and to describing the remaining $N-1$ particles relative to it. These canonical choices are the 'internal QRF perspectives' on the $N$-particle system of Refs.~\cite{Giacomini,Vanrietvelde,Hamette}.

The symmetry group $\cg_{\rm sym}$ contains the 'classical translation group' $\cg$ as a strict \emph{subgroup}, but it also contains `relation-conditional translations' which turn out to include the QRF transformations of Ref.~\cite{Hamette}, which are equivalent to those of Refs.~\cite{Giacomini,Vanrietvelde}. While it is evident from these works that the QRF transformations are conditional translations, the present article clarifies that they are \emph{symmetry} transformations with a precise and transparent physical interpretation.

Being  translations conditional on the particle relations, the QRF transformations make sense in a classical context when dealing with, for example, statistical mixtures of particle positions rather than superpositions, and indeed have classical analogs that have been exhibited in Refs.~\cite{Vanrietvelde,Vanrietvelde2,Hoehn:2018aqt,Hoehn:2018whn,Hamette}. Nevertheless, just like the CNOT gate has a classical meaning, but can generate entanglement, the QRF transformations similarly lead to interesting quantum effects such as a QRF dependence of, e.g.\ entanglement and superpositions \cite{Giacomini,Vanrietvelde,Hamette}, classicality \cite{Vanrietvelde,Darwinism1,Darwinism2}, spin \cite{Giacomini-spin1,Giacomini-spin2}, certain quantum resources \cite{Savi}, temporal locality \cite{Castro,Hoehn:2019owq}, and of comparing quantum clock readings \cite{Hoehn:2018aqt,Hoehn:2020epv}.\footnote{It would be interesting to study the recent proposals \cite{MischaMax,Alex1,Alex2} for quantum time dilation effects in terms of the temporal QRF transformations as in Refs.~\cite{Hoehn:2018aqt,Hoehn:2018whn,Castro,Hoehn:2019owq,Hoehn:2020epv}.}

Given the two groups $\cg_{\rm sym}$ and  $\cg$ in the setup, one has \emph{a priori} two distinct ways to construct invariant states and observables. Interestingly, as we have shown, the invariant (pure) states and thus the subspace $\ch_{\rm phys}$ of relational states do \emph{not} in fact depend on whether one requires invariance under the action of $\cg_{\rm sym}$ or its subgroup $\cg$. By contrast, the set of invariant observables \emph{does} depend on which group one works with: the operator algebra invariant under $\cg_{\rm sym}$ is a strict subset of the operator algebra invariant under its subgroup $\cg$. However, the two invariant operator algebras coincide again in their restriction  to the space of relational states $\ch_{\rm phys}$, which is the algebra $\ca_{\rm phys}$ generated by so-called relational observables \cite{Rovellibook,Thiemann,Tambornino,Rovelli1,Rovelli2,Rovelli3,Dittrich1,Dittrich2,Chataignier,Hoehn:2018aqt,Hoehn:2018whn,Hoehn:2019owq,Hoehn:2020epv}. The space of relational states $\ch_{\rm phys}$ and the relational operator algebra $\ca_{\rm phys}$ are key structures in constraint quantization \cite{Giulini:1998kf,Marolf:2000iq,Thiemann} and the platform of the perspective-neutral approach to QRFs \cite{Vanrietvelde,Vanrietvelde2,Hoehn:2018aqt,Hoehn:2018whn,Hoehn:2019owq,Hoehn:2020epv} (part of the structural approach), which are thus independent of the distinction between $\cg_{\rm sym}$ and $\cg$. The difference between $\cg_{\rm sym}$ and its subgroup $\cg$ is, however, crucial when aligning the \emph{non-invariant} description of quantum states to a particle at the level of the full $N$-particle Hilbert space.

These observations also permitted us to first clarify the physics behind the 'paradox of the third particle' discussed in Ref.~\cite{Angelo}, and subsequently to resolve it at a formal level. First, we have illuminated why the usual partial trace is \emph{not} suitable in the context of QRFs, because it ignores the equivalence classes of states that are operationally indistinguishable in the absence of an external relatum. Next, we have explained that, in order to take the observational equivalence classes into account, one has to construct a partial trace in terms of the invariant observables. However, even when attempting to do so, we have seen that there does not exist a physically distinguished choice for such an invariant partial trace \emph{outside the space of relational states}. The reason is that an invariant partial trace demands a suitable embedding of the two-particle invariant observables into the three-particle invariant observables. Yet such an embedding (while invariant under symmetry transformations) depends on how one defines the relation of the third to the first two particles, and there are multiple physically inequivalent ways (e.g., distance to the first particle, center of mass, etc.). The two-particle reduced state then depends on one's convention of how to define the relation between the third and the first two particles, despite restricting attention to invariant observables.

However, when restricting attention further to the algebra $\ca_{\rm phys}$ generated by relational observables and the space of relational states $\ch_{\rm phys}$, we showed that there \emph{does} exist a physically distinguished embedding of the relational two-particle observables and states into the relational three-particle observables and states. Physically, this embedding corresponds to \emph{coherently averaging} over all possible relations between the third and the first two particles, and thereby defines an entirely invariant embedding.
This permitted us to define an unambiguous  relational partial trace that determines the expectation values of relational observables on subsets of particles. In particular, this trace achieves for relational states what a consistent partial trace should do: if  a third particle is independently prepared, then the two-particle reduced state, obtained from the relational three-particle state, coincides with the relational two-particle state prior to taking the third particle into account. At the level of relational observables and relational states, the paradox of the third particle of Ref.~\cite{Angelo} is thus resolved; in this sense, the perspective-neutral approach does not feature any paradox of additional particles. However, we have not discussed what it would mean for an agent to operationally implement this resolution in the lab and, specifically, how they may operationally restrict to relational states and observables (although we believe this to be possible). In this light, our resolution of the paradox is formal.

The paradox of the third particle and our resolution can be viewed as a finite-dimensional analog of the problem of boundaries and edge modes in gauge theory and gravity~\cite{Donnelly:2014fua,Donnelly:2016auv,Geiller:2019bti,Freidel:2020xyx,Gomes:2018shn,Riello:2020zbk,Wieland:2017zkf,Wieland:2017cmf}. Boundaries in space or spacetime usually break gauge-invariance and constitute challenges for gauge-invariant observables. The latter are typically non-local (such as Wilson loops) and can thus have support in two neighbouring regions separated by a boundary. Those gauge-invariant observables with support in both regions determine the physical relation between the two and are accounted for in terms of so-called edge modes when one of the regions is ignored. This is analogous to the joining of two groups of $N$ and $M$ particles and asking for the invariant relations between the two groups. The relative distances between the two groups of particles are the finite-dimensional analog of the gauge-invariant observables in gauge theories and gravity that have support in two neighbouring regions. As we have seen, ignoring one group of particles by simply taking the standard partial trace may indeed lead to an invariance breaking in analogy to the field theory case, i.e.\ $N$-particle states that are \emph{not} relational. This is because the set of relational observables for the joint $(N+M)$-particle system is \emph{not} only the union of the sets of relational $N$- and $M$-particle observables, again in analogy to two neighbouring subregions in spacetime. Our relational trace defines a purely relational, i.e.\ invariant way of `ignoring' a group of particles, and it would be interesting to extend this tool to the study of edge modes in gauge theories and gravity.

Lastly, we emphasize that our novel interpretation of the structural approach applies in particular also to temporal quantum reference frames, i.e.\ quantum clocks. For instance, the example of the cyclic group could model a set of quantum clocks each with a finite set of readings. The external frame would then be some laboratory clock that one external observer has access to, but another  may not. Nevertheless, the two observers can agree on the description of the flow of time by focusing on a purely \emph{internal} choice of clock that leads to a \emph{relational notion of time} entirely independent of any external  clock. It is in this sense that one can interpret the relational quantum dynamics defined by temporal relational observables \cite{Rovellibook,Thiemann,Tambornino,Rovelli1,Rovelli2,Rovelli3,Dittrich1,Dittrich2,Hoehn:2018aqt,Hoehn:2018whn,Hoehn:2019owq,Hoehn:2020epv,Gambini} or the Page-Wootters formalism \cite{Hoehn:2019owq,Hoehn:2020epv,Alex1,Castro,Page,Wootters,Giovanetti,Alex3} (which recently have been shown to be equivalent \cite{Hoehn:2019owq,Hoehn:2020epv}) in the context of laboratory situations.\footnote{These frameworks, however, also apply in the absence of external agents such as in quantum cosmology or gravity.} Indeed, this is precisely how the experimental illustration of the Page-Wootters dynamics reported in \cite{Moreva} is to be understood.

In this manuscript, we focused purely on kinematical aspects of quantum reference frame physics. In our companion article~\cite{MMP}, we study in detail how the insights gained here are affected when we take the dynamics of the $N$-particle system into account. This question will link also with the perspective-neutral approach to QRFs \cite{Vanrietvelde,Vanrietvelde2,Hoehn:2018aqt,Hoehn:2018whn,Hoehn:2019owq,Hoehn:2020epv}, and we will establish in detail the equivalence of its ``quantum coordinate changes''  with the QRF transformations exhibited here.

\section*{Acknowledgments}
We thank Thomas Galley, Leon Loveridge and Alexander R.\ H.\ Smith for many useful comments on an earlier draft version. MK is grateful to \v{C}aslav Brukner, Esteban Castro-Ruiz, and David Trillo Fernandez for helpful discussions. MPM would like to thank Anne-Catherine de la Hamette for inspiring discussions in the initial phase of this project.  MK acknowledges the support of the Vienna Doctoral School in Physics (VDSP) and the support of the Austrian Science Fund (FWF) through the Doctoral Programme CoQuS. PAH is grateful for support from the Foundational Questions Institute under grant number FQXi-RFP-1801A. This work was supported in part by funding from Okinawa Institute of Science and Technology Graduate University. MK and MPM thank the Foundational Questions Institute and Fetzer Franklin Fund, a donor advised fund of Silicon Valley Community Foundation, for support via grant number FQXi-RFP-1815. This research was supported in part by Perimeter Institute for Theoretical Physics. Research at Perimeter Institute is supported by the Government of Canada through the Department of Innovation, Science and Economic Development Canada and by the Province of Ontario through the Ministry of Research, Innovation and Science.

{\small

}


\begin{thebibliography}{99}

\bibitem{Aharonov1}
Y.\ Aharonov and L.\ Susskind, \href{https://doi.org/10.1103/PhysRev.155.1428}{\emph{Charge Superselection Rule}, Phys.\ Rev.\ \textbf{155}, 1428 (1967).}

\bibitem{Aharonov2}
Y.\ Aharonov and L.\ Susskind, \href{https://doi.org/10.1103/PhysRev.158.1237}{\emph{Observability of the Sign Change of Spinors under $2\pi$ Rotations}, Phys.\ Rev.\ \textbf{158}, 1237 (1967).}

\bibitem{Aharonov3}
Y.\ Aharonov and T.\ Kaufherr, \href{https://doi.org/10.1103/PhysRevD.30.368}{\emph{Quantum frames of reference}, Phys.\ Rev.\ D \textbf{30}, 368 (1984).}

\bibitem{Wigner}
E.\ Wigner, \href{https://doi.org/10.1007/BF01948686}{\emph{Die Messung quantenmechanischer Operatoren}, Z.\ Physik \textbf{133}, 101--108 (1952).}

\bibitem{Araki}
H.\ Araki and M.\ M.\ Yanase, \href{https://doi.org/10.1103/PhysRev.120.622}{\emph{Measurement of Quantum Mechanical Operators}, Phys.\ Rev.\ \textbf{120}, 622 (1960).}

\bibitem{Yanase}
M.\ M.\ Yanase, \href{https://doi.org/10.1103/PhysRev.123.666}{\emph{Optimal Measuring Apparatus}, Phys.\ Rev.\ \textbf{123}, 666 (1961).}

\bibitem{Loveridge2017}
L.\ Loveridge, B.\ Busch, and T.\ Miyadera, \href{https://doi.org/10.1209/0295-5075/117/40004}{\emph{Relativity of quantum states and observables}, EPL \textbf{117}, 40004 (2017).}

\bibitem{Loveridge2018}
L.\ Loveridge, T.\ Miyadera, and P.\ Busch, \href{https://doi.org/10.1007/s10701-018-0138-3}{\emph{Symmetry, Reference Frames, and Relational Quantities in Quantum Mechanics}, Found.\ Phys.\ \textbf{48}, 135--198 (2018).}

\bibitem{Miyadera}
T.\ Miyadera, L.\ Loveridge, and P.\ Busch,  \href{https://doi.org/10.1088/1751-8113/49/18/185301}{\emph{Approximating relational observables by absolute quantities: a quantum accuracy-size trade-off}, J.\ Phys. A: Mathematical and Theoretical, \textbf{49}(18), 185301 (2016).}

\bibitem{Loveridge2020}
L.\ Loveridge, \href{https://doi.org/10.1088/1742-6596/1638/1/012009}{\emph{A relational perspective on the Wigner-Araki-Yanase theorem}, J.\ Phys.: Conf.\ Ser.\ \textbf{1638}, 012009 (2020).}

\bibitem{HoehnMueller}
P.\ A.\ H\"ohn and M.\ P.\ M\"uller, \href{https://doi.org/10.1088/1367-2630/18/6/063026}{\emph{An operational approach to spacetime symmetries: Lorentz transformations from quantum communication}, New J.\ Phys.\ \textbf{18}, 063026 (2016).}

\bibitem{Bartlett}
S.\ D.\ Bartlett, T.\ Rudolph, and R.\ W.\ Spekkens, \href{https://doi.org/10.1103/RevModPhys.79.555}{\emph{Reference frames, superselection rules, and quantum information}, Rev.\ Mod.\ Phys.\ \textbf{79}, 555 (2007).}

\bibitem{Smith2019}
A.\ R.\ H.\ Smith, \href{https://doi.org/10.1103/PhysRevA.99.052315}{\emph{Communicating without shared reference frames}, Phys.\ Rev.\ \textbf{A 99}, 052315 (2019).}

\bibitem{Marvian}
I.\ Marvian, \href{https://uwspace.uwaterloo.ca/handle/10012/7088}{\emph{Symmetry, Asymmetry and Quantum Information}, PhD thesis, University of Waterloo, 2012.}

\bibitem{ResourceTheoryQRF}
G.\ Gour and R.\ W.\ Spekkens, \href{https://doi.org/10.1088/1367-2630/10/3/033023}{\emph{The resource theory of quantum reference frames: manipulations and monotones}, New J.\ Phys.\ \textbf{10}, 033023 (2008).}

\bibitem{Frameness}
G.\ Gour, I.\ Marvian, and R.\ W.\ Spekkens, \href{https://doi.org/10.1103/PhysRevA.80.012307}{\emph{Measuring the quality of a quantum reference frame: The relative entropy of frameness}, Phys.\ Rev.\ A \textbf{80}, 012307 (2009).}

\bibitem{Modes}
I.\ Marvian and R.\ W.\ Spekkens, \href{https://doi.org/10.1103/PhysRevA.90.062110}{\emph{Modes of asymmetry: The application of harmonic analysis to symmetric quantum dynamics and quantum reference frames}, Phys.\ Rev.\ A \textbf{90}, 062110 (2014).}

\bibitem{Palmer}
M.\ C.\ Palmer, F.\ Girelli, and S.\ D.\ Bartlett, \href{https://doi.org/10.1103/PhysRevA.89.052121}{\emph{Changing quantum reference frames}, Phys.\ Rev.\ A \textbf{89}, 052121 (2014).}

\bibitem{Smith2016}
A.\ R.\ H.\ Smith, M.\ Piani, and R.\ B.\ Mann, \href{https://doi.org/10.1103/PhysRevA.94.012333}{\emph{Quantum reference frames associated with noncompact groups: the case of translations and boosts, and the role of mass}, Phys.\ Rev.\ \textbf{A 94}, 012333 (2016).}

\bibitem{Aberg}
J.\ \r{A}berg, \href{https://doi.org/10.1103/PhysRevLett.113.150402}{\emph{Catalytic Coherence}, Phys.\ Rev.\ Lett.\ \textbf{113}, 150402 (2014).}

\bibitem{LJR}
M.\ Lostaglio, D.\ Jennings, and T.\ Rudolph, \href{https://doi.org/https://doi.org/10.1038/ncomms7383}{\emph{Description of quantum coherence in thermodynamic processes requires constraints beyond free energy}, Nat.\ Commun.\ \textbf{6}, 6383 (2015).}

\bibitem{LKJR}
M.\ Lostaglio, K.\ Korzekwa, D.\ Jennings, and T.\ Rudolph, \href{https://doi.org/10.1103/PhysRevX.5.021001}{\emph{Quantum Coherence, Time-Translation Symmetry, and Thermodynamics}, Phys.\ Rev.\ X \textbf{5}, 021001 (2015).}

\bibitem{LostaglioMueller}
M.\ Lostaglio and M.\ P.\ M\"uller, \href{https://doi.org/10.1103/PhysRevLett.123.020403}{\emph{Coherence and Asymmetry Cannot be Broadcast}, Phys.\ Rev.\ Lett.\ \textbf{123}, 020403 (2019).}

\bibitem{MarvianSpekkens}
I.\ Marvian and R.\ W.\ Spekkens, \href{https://doi.org/10.1103/PhysRevLett.123.020404}{\emph{No-Broadcasting Theorem for Quantum Asymmetry and Coherence and a Trade-off Relation for Approximate Broadcasting}, Phys.\ Rev.\ Lett.\ \textbf{123}, 020404 (2019).}

\bibitem{Erker}
P.\ Erker, M. T.\ Mitchison, R.\ Silva, M.\ P.\ Woods, N.\ Brunner, and M.\ Huber, \href{https://doi.org/10.1103/PhysRevX.7.031022}{\emph{Autonomous Quantum Clocks: Does Thermodynamics Limit Our Ability to Measure Time?}, Phys.\ Rev.\ X \textbf{7}, 031022 (2017).}

\bibitem{Cwiklinski}
P.\ \'Cwikli\'nski, M.\ Studzi\'nski, M.\ Horodecki, and J.\ Oppenheim, \href{https://doi.org/10.1103/PhysRevLett.115.210403}{\emph{Limitations on the Evolution of Quantum Coherences: Towards Fully Quantum Second Laws of Thermodynamics}, Phys.\ Rev.\ Lett.\ \textbf{115}, 210403 (2015).}

\bibitem{Woods1}
M.\ P.\ Woods, R.\ Silva, and J.\ Oppenheim, \href{https://doi.org/10.1007/s00023-018-0736-9}{\emph{Autonomous Quantum Machines and Finite-Sized Clocks}, Ann.\ Henri Poincar\'e \textbf{20}, 125 (2019).}

\bibitem{Woods2}
M.\ P.\ Woods and M.\ Horodecki, \href{https://arxiv.org/abs/1912.05562}{\emph{The resource theoretic paradigm of quantum thermodynamics with control}, arXiv:1912.05562 [quant-ph].}

\bibitem{Rovellibook}
C.\ Rovelli, \href{https://doi.org/10.1017/CBO9780511755804}{\emph{Quantum gravity}, Cambridge University Press, 2004.}

\bibitem{Thiemann}
T.\ Thiemann, \href{https://doi.org/10.1017/CBO9780511755682}{\emph{Modern Canonical Quantum General Relativity}, Cambridge University Press, 2007.}

\bibitem{Tambornino}
J.\ Tambornino, \href{https://doi.org/10.3842/SIGMA.2012.017}{\emph{Relational Observables in Gravity: a Review}, SIGMA \textbf{8}, 017 (2012).}

\bibitem{Rovelli1}
C.\ Rovelli, \href{https://doi.org/10.1088/0264-9381/8/2/011}{\emph{What is observable in classical and quantum gravity?}, Class.\ Quant.\ Grav.\ \textbf{8}, 297 (1991).}

\bibitem{Rovelli2}
C.\ Rovelli, \href{https://doi.org/10.1088/0264-9381/8/2/012}{\emph{Quantum reference systems}, Class.\ Quant.\ Grav.\ \textbf{8}, 317 (1991).}

\bibitem{Rovelli3}
C.\ Rovelli, \href{https://doi.org/10.1103/PhysRevD.43.442}{\emph{Time in quantum gravity: An hypothesis}, Phys.\ Rev.\ D \textbf{43}, 442-456 (1991).}

\bibitem{Dittrich1}
B.\ Dittrich, \href{https://doi.org/10.1007/s10714-007-0495-2}{\emph{Partial and complete observables for Hamiltonian constrained systems}, Gen.\ Rel.\ Grav.\ \textbf{39}, 1891 (2007).}

\bibitem{Dittrich2}
B.\ Dittrich, \href{https://doi.org/10.1088/0264-9381/23/22/006}{\emph{Partial and complete observables for canonical general relativity}, Class.\ Quant.\ Grav.\ \textbf{23}, 6155 (2006).}

\bibitem{Chataignier}
L.\ Chataignier, \href{https://doi.org/10.1103/PhysRevD.101.086001}{\emph{Construction of quantum Dirac observables and the emergence of WKB time}, Phys.\ Rev.\ D \textbf{101}, 086001 (2020).}

\bibitem{Giacomini}
F.\ Giacomini, E.\ Castro-Ruiz, and \v{C}.\ Brukner, \href{https://doi.org/10.1038/s41467-018-08155-0}{\emph{Quantum mechanics and the covariance of physical laws in quantum reference frames}, Nat.\ Commun.\ \textbf{10}, 494 (2019).}

\bibitem{Vanrietvelde}
A.\ Vanrietvelde, P.\ A.\ H\"ohn, F.\ Giacomini, and E.\ Castro-Ruiz, \href{https://doi.org/10.22331/q-2020-01-27-225}{\emph{A change of perspective: switching quantum reference frames via a perspective-neutral framework}, Quantum \textbf{4}, 225 (2020).}

\bibitem{Hamette}
A.\ de la Hamette and T.\ Galley, \href{https://doi.org/10.22331/q-2020-11-30-367}{\emph{Quantum reference frames for general symmetry groups}, Quantum \textbf{4}, 367 (2020).}

\bibitem{Vanrietvelde2}
A.\ Vanrietvelde, P.\ A.\ H\"ohn, and F.\ Giacomini, \href{https://arxiv.org/abs/1809.05093}{\emph{Switching quantum reference frames in the N-body problem and the absence of global relational perspectives}, arXiv:1809.05093 [quant-ph].}

\bibitem{Hoehn:2018aqt}
P.\ A.\ H\"ohn and A.\ Vanrietvelde, \href{https://doi.org/10.1088/1367-2630/abd1ac}{\emph{How to switch between relational quantum clocks}, New J.\ Phys. \textbf{22}, 123048 (2020).}

\bibitem{Hoehn:2018whn}
P.\ A.\ H\"ohn, \href{https://doi.org/10.3390/universe5050116}{\emph{Switching Internal Times and a New Perspective on the `Wave Function of the Universe'}, Universe \textbf{5}, 116 (2019).}

\bibitem{Hoehn:2019owq}
P.\ A.\ H\"ohn, A.\ R.\ H.\ Smith, and M.\ P.\ E.\ Lock, \href{https://arxiv.org/abs/1912.00033}{\emph{The Trinity of Relational Quantum Dynamics}, Phys.\ Rev. D (in press), arXiv:1912.00033 [quant-ph].}

\bibitem{Hoehn:2020epv}
P.\ A.\ H\"ohn, A.\ R.\ H.\ Smith, and M.\ P.\ E.\ Lock, \href{https://doi.org/10.3389/fphy.2021.587083}{\emph{Equivalence of approaches to relational quantum dynamics in relativistic settings}, Front.\ Phys.\ \textbf{9}, 587083 (2021).}

\bibitem{Giacomini-spin1}
F.\ Giacomini, E.\ Castro-Ruiz, and \v{C}.\ Brukner, \href{https://doi.org/10.1103/PhysRevLett.123.090404}{\emph{Relativistic Quantum Reference Frames: The Operational Meaning of Spin}, Phys.\ Rev.\ Lett.\ \textbf{123}, 090404 (2019).}

\bibitem{Giacomini-spin2}
L.\ F.\ Streiter, F.\ Giacomini, and \v{C}.\ Brukner, \href{https://doi.org/10.1103/PhysRevLett.126.230403}{\emph{A Relativistic Bell Test within Quantum Reference Frames}, Phys.\ Rev.\ Lett.\ \textbf{126}, 230403 (2021).}

\bibitem{Castro}
E.\ Castro-Ruiz, F.\ Giacomini, A.\ Belenchia, and \v{C}.\ Brukner, \href{https://doi.org/10.1038/s41467-020-16013-1}{\emph{Quantum clocks and the temporal localisability of events in the presence of gravitating quantum systems}, Nat.\ Commun.\ \textbf{11}, 2672 (2020).}

\bibitem{Yang}
J.\ M.\ Yang, \href{https://doi.org/10.22331/q-2020-06-18-283}{\emph{Switching Quantum Reference Frames for Quantum Measurement}, Quantum \textbf{4}, 283 (2020).}

\bibitem{Angelo}
R.\ M.\ Angelo, N.\ Brunner, S.\ Popescu, A.\ J.\ Short, and P.\ Skrzypczyk, \href{https://doi.org/10.1088/1751-8113/44/14/145304}{\emph{Physics within a quantum reference frame}, J.\ Phys.\ A: Math.\ Theor.\ \textbf{44}, 145304 (2011).}

\bibitem{Darwinism1}
T.\ P.\ Le, P.\ Mironowicz, and P.\ Horodecki, \href{https://doi.org/10.1103/PhysRevA.102.062420}{\emph{Blurred quantum Darwinism across quantum reference frames}, Phys.\ Rev.\ A \textbf{102}, 062420 (2020).}

\bibitem{Darwinism2}
J.\ Tuziemski, \href{https://arxiv.org/abs/2006.07298}{\emph{Decoherence and information encoding in quantum reference frames}, arXiv:2006.07298 [quant-ph].}

\bibitem{Savi}
M.\ F.\ Savi, and R.\ M.\ Angelo, \href{https://doi.org/10.1103/PhysRevA.103.022220}{\emph{Quantum Resources Covariance}, Phys.\ Rev.\ A \textbf{103}, 022220 (2021).}

\bibitem{Guerin}
P.\ A.\ Gu\'erin and \v{C}.\ Brukner, \href{https://doi.org/10.1088/1367-2630/aae742}{\emph{Observer-dependent locality of quantum events}, New J.\ Phys.\ \textbf{20}, 103031 (2018).}

\bibitem{Hardy}
L.\ Hardy, \href{https://arxiv.org/abs/1903.01289}{\emph{Implementation of the Quantum Equivalence Principle}, arXiv:1903.01289 [quant-ph].}

\bibitem{Gielen:2020abd}
S.\ Gielen and L.\ Men\'endez-Pidal, \href{https://doi.org/10.1088/1361-6382/abb14f}{\emph{Singularity resolution depends on the clock}, Class.\ Quant.\ Grav.\ \textbf{37}, 205018 (2020).}

\bibitem{Giesel1}
K.\ Giesel, L.\ Herold, B.\ F.\ Li, and P.\ Singh, \href{https://doi.org/10.1103/PhysRevD.102.023524}{\emph{Mukhanov-Sasaki equation in a manifestly gauge-invariant linearized cosmological perturbation theory with dust reference fields}, Phys.\ Rev.\ D \textbf{102}, 023524 (2020).}

\bibitem{Giesel2}
K.\ Giesel, B.\ F.\ Li, and P.\ Singh, \href{https://doi.org/10.1103/PhysRevD.102.126024}{\emph{Towards a reduced phase space quantization in loop quantum cosmology with an inflationary potential}, Phys.\ Rev.\ D \textbf{102}, 126024 (2020).}

\bibitem{Dirac}
P.\ A.\ Dirac, \emph{Lectures on Quantum Mechanics}, Yeshiva University Press, 1964.

\bibitem{HenneauxTeitelboim}
M.\ Henneaux and C.\ Teitelboim, \href{https://doi.org/10.1515/9780691213866}{\emph{Quantization of Gauge Systems}, Princeton University Press, 1992.}

\bibitem{Giulini:1998kf}
D.\ Giulini and D.\ Marolf, \href{https://doi.org/10.1088/0264-9381/16/7/322}{\emph{A Uniqueness theorem for constraint quantization}, Class.\ Quant.\ Grav.\ \textbf{16}, 2489 (1999).}

\bibitem{Marolf:2000iq}
D.\ Marolf, \href{https://arxiv.org/abs/gr-qc/0011112}{\emph{Group averaging and refined algebraic quantization: Where are we now?}, arXiv:gr-qc/0011112 [gr-qc].}

\bibitem{Rovelli}
C.\ Rovelli, \href{https://doi.org/10.1007/s10701-013-9768-7}{\emph{Why Gauge?}, Found.\ Phys.\ \textbf{44}, 91--104 (2014).}

\bibitem{Donnelly:2014fua}
W.\ Donnelly and A.\ C.\ Wall, \href{https://doi.org/10.1103/PhysRevLett.114.111603}{\emph{Entanglement Entropy of Electromagnetic Edge Modes}, Phys.\ Rev.\ Lett.\ \textbf{114}, 111603 (2015).}

\bibitem{Donnelly:2016auv}
W.\ Donnelly and L.\ Freidel, \href{https://doi.org/10.1007/JHEP09(2016)102}{\emph{Local subsystems in gauge theory and gravity}, JHEP \textbf{09}, 102 (2016).}

\bibitem{Geiller:2019bti}
M.\ Geiller and P.\ Jai-akson, \href{https://doi.org/10.1007/JHEP09(2020)134}{\emph{Extended actions, dynamics of edge modes, and entanglement entropy}, JHEP \textbf{20}, 134 (2020).}

\bibitem{Freidel:2020xyx}
L.\ Freidel, M.\ Geiller and D.\ Pranzetti, \href{https://doi.org/10.1007/JHEP11(2020)026}{\emph{Edge modes of gravity. Part I. Corner potentials and charges}, JHEP \textbf{2020}, 26 (2020).}

\bibitem{Gomes:2018shn}
H.\ Gomes and A.\ Riello, \href{https://doi.org/10.1103/PhysRevD.98.025013}{\emph{Unified geometric framework for boundary charges and particle dressings}, Phys.\ Rev.\ D \textbf{98}, 025013 (2018).}

\bibitem{Riello:2020zbk}
A.\ Riello, \href{https://doi.org/10.21468/SciPostPhys.10.6.125}{\emph{Symplectic reduction of Yang-Mills theory with boundaries: from superselection sectors to edge modes, and back}, SciPost Phys.\ \textbf{10}, 125 (2021).}

\bibitem{Wieland:2017zkf}
W.\ Wieland, \href{https://doi.org/10.1088/1361-6382/aa8d06}{\emph{New boundary variables for classical and quantum gravity on a null surface}, Class.\ Quant.\ Grav.\ \textbf{34}, no.21, 215008 (2017).}

\bibitem{Wieland:2017cmf}
W.\ Wieland, \href{https://doi.org/10.1007/s00023-017-0598-6}{\emph{Fock representation of gravitational boundary modes and the discreteness of the area spectrum}, Annales Henri Poincare \textbf{18}, no.11, 3695 (2017).}

\bibitem{MMP}
P.\ A.\ H\"ohn, M.\ Krumm, and M.\ P.\ M\"uller, \href{https://arxiv.org/abs/2107.07545}{\emph{Internal quantum reference frames for finite Abelian groups}, arXiv:2107.07545 [quant-ph].}

\bibitem{Simon}
B.\ Simon, \emph{Representations of Finite and Compact Groups}, American Mathematical Society, 1996.

\bibitem{Davidson}
K.\ R.\ Davidson, \emph{C$^*$-Algebras by Example}, American Mathematical Society, 1996.

\bibitem{Savage}
A.\ Savage, \emph{Modern Group Theory}, lecture notes, University of Ottawa, 2017. Available at \url{https://alistairsavage.ca/mat5145/notes/MAT5145-Modern_group_theory.pdf}

\bibitem{Hoehn:2017gst}
P.\ A.\ H\"ohn, \href{https://doi.org/10.1088/1742-6596/880/1/012014}{\emph{Reflections on the information paradigm in quantum and gravitational physics}, J.\ Phys.\ Conf.\ Ser.\ \textbf{880}, 012014 (2017).}

\bibitem{NielsenChuang}
M.\ A.\ Nielsen and I.\ L.\ Chuang, \href{https://doi.org/10.1017/CBO9780511976667}{\emph{Quantum Computation and Quantum Information}, Cambridge University Press, New York, 2010.}

\bibitem{Tinkham}
M.\ Tinkham, \emph{Group Theory and Quantum Mechanics}, Dover Publications, 1992.

\bibitem{MischaMax}
S.\ Khandelwal, M.\ P.\ E.\ Lock, and M.\ P.\ Woods, \href{https://doi.org/10.22331/q-2020-08-14-309}{\emph{Universal quantum modifications to general relativistic time dilation in delocalised clocks}, Quantum \textbf{4}, 309 (2020).}

\bibitem{Alex1}
A.\ R.\ H.\ Smith and M.\ Ahmadi, \href{https://doi.org/10.1038/s41467-020-18264-4}{\emph{Quantum clocks observe classical and quantum time dilation}, Nat.\ Commun.\ \textbf{11}, 5360 (2020).}

\bibitem{Alex2}
P.\ T.\ Grochowski, A.\ R.\ H.\ Smith, A.\ Dragan, and K.\ Debski, \href{https://doi.org/10.1103/PhysRevResearch.3.023053}{\emph{Quantum time dilation in atomic spectra}, Phys.\ Rev.\ Research \textbf{3}, 023053 (2021).}

\bibitem{Gambini}
R.\ Gambini and J.\ Pullin, \href{https://doi.org/10.3390/universe6120236}{\emph{The Montevideo Interpretation: How the inclusion of a Quantum Gravitational Notion of Time Solves the Measurement Problem}, Universe \textbf{6}, 236 (2020).}

\bibitem{Page}
D.\ N.\ Page, and W.\ K.\ Wootters, \href{https://doi.org/10.1103/PhysRevD.27.2885}{\emph{Evolution without evolution: Dynamics described by stationary observables}, Phys.\ Rev.\ \textbf{D 27}, 2885 (1983).}

\bibitem{Wootters}
W.\ K.\ Wootters, \href{https://doi.org/10.1007/BF02214098}{\emph{``Time'' replaced by quantum correlations}, Int.\ J.\ Theor.\ Phys.\ \textbf{23}, 701 (1984).}

\bibitem{Giovanetti}
V.\ Giovannetti, S.\ Lloyd, and L.\ Maccone, \href{https://doi.org/10.1103/PhysRevD.92.045033}{\emph{Quantum time}, Phys.\ Rev.\ D \textbf{92}, 045033 (2015).}

\bibitem{Alex3}
A.\ R.\ H.\ Smith and M.\ Ahmadi, \href{https://doi.org/10.22331/q-2019-07-08-160}{\emph{Quantizing time: interacting clocks and systems}, Quantum \textbf{3}, 160 (2019).}

\bibitem{Moreva}
E.\ Moreva, G.\ Brida, M.\ Gramegna, V.\ Giovannetti, L.\ Maccone, and M.\ Genovese, \href{https://doi.org/10.1103/PhysRevA.89.052122}{\emph{Time from quantum entanglement: An experimental illustration}, Phys.\ Rev.\ A \textbf{89}, 052122 (2014).}

\end{thebibliography}
\end{document}